\numberwithin{equation}{section}
\begin{document}

\newtheorem{theorem}{Theorem}[section]
\newtheorem{corollary}[theorem]{Corollary}
\newtheorem{lemma}[theorem]{Lemma}
\newtheorem{proposition}[theorem]{Proposition}

\newcommand{\adiffop}{A$\Delta$O}
\newcommand{\adiffops}{A$\Delta$Os}

\newcommand{\be}{\begin{equation}}
\newcommand{\ee}{\end{equation}}
\newcommand{\bea}{\begin{eqnarray}}
\newcommand{\eea}{\end{eqnarray}}
\newcommand{\sh}{{\rm sh}}
\newcommand{\ch}{{\rm ch}}
\newcommand{\einde}{$\ \ \ \Box$ \vspace{5mm}}
\newcommand{\De}{\Delta}
\newcommand{\de}{\delta}
\newcommand{\te}{\tilde{e}}
\newcommand{\ty}{\tilde{y}}
\newcommand{\Z}{{\mathbb Z}}
\newcommand{\N}{{\mathbb N}}
\newcommand{\C}{{\mathbb C}}
\newcommand{\Cs}{{\mathbb C}^{*}}
\newcommand{\R}{{\mathbb R}}
\newcommand{\Q}{{\mathbb Q}}
\newcommand{\T}{{\mathbb T}}
\newcommand{\re}{{\rm Re}\, }
\newcommand{\im}{{\rm Im}\, }
\newcommand{\cW}{{\cal W}}
\newcommand{\cJ}{{\cal J}}
\newcommand{\cE}{{\cal E}}
\newcommand{\cA}{{\cal A}}
\newcommand{\cR}{{\cal R}}
\newcommand{\cP}{{\cal P}}
\newcommand{\cM}{{\cal M}}
\newcommand{\cN}{{\cal N}}
\newcommand{\cI}{{\cal I}}
\newcommand{\cMs}{{\cal M}^{*}}
\newcommand{\cB}{{\cal B}}
\newcommand{\cD}{{\cal D}}
\newcommand{\cC}{{\cal C}}
\newcommand{\cL}{{\cal L}}
\newcommand{\cF}{{\cal F}}
\newcommand{\cG}{{\cal G}}
\newcommand{\cH}{{\cal H}}
\newcommand{\cO}{{\cal O}}
\newcommand{\cS}{{\cal S}}
\newcommand{\cT}{{\cal T}}
\newcommand{\cU}{{\cal U}}
\newcommand{\cQ}{{\cal Q}}
\newcommand{\cV}{{\cal V}}
\newcommand{\cK}{{\cal K}}
\newcommand{\cZ}{{\cal Z}}

\newcommand{\rE}{{\rm E}}
\newcommand{\rF}{{\rm F}}
\newcommand{\rI}{{\rm I}}
\newcommand{\rR}{{\rm R}}

\newcommand{\fm}{\mathfrak{m}}
\newcommand{\intR}{\int_{-\infty}^{\infty}}
\newcommand{\intI}{\int_{0}^{\pi/2r}}
\newcommand{\limp}{\lim_{\re x \to \infty}}
\newcommand{\limn}{\lim_{\re x \to -\infty}}
\newcommand{\limpn}{\lim_{|\re x| \to \infty}}
\newcommand{\diag}{{\rm diag}}
\newcommand{\Ln}{{\rm Ln}}
\newcommand{\Arg}{{\rm Arg}}
\newcommand{\LHP}{{\rm LHP}}
\newcommand{\RHP}{{\rm RHP}}
\newcommand{\UHP}{{\rm UHP}}
\newcommand{\Res}{{\rm Res}}
\newcommand{\ep}{\epsilon}
\newcommand{\ga}{\gamma}
\newcommand{\sing}{{\rm sing}}

\title{Joint eigenfunctions for the relativistic Calogero-Moser Hamiltonians of hyperbolic type. II. The two- and three-variable cases}
\author{Martin Halln\"as\footnote{E-mail: M.A.Hallnas@lboro.ac.uk} \\Department of Mathematical Sciences, \\ Loughborough University, Leicestershire LE11 3TU, UK \\ and \\Simon Ruijsenaars \\ School of Mathematics, \\ University of Leeds, Leeds LS2 9JT, UK}

\date{\today}

\maketitle

\begin{abstract}
In a previous paper we introduced and developed a recursive construction of joint eigenfunctions $J_N(a_+,a_-,b;x,y)$ for the Hamiltonians of the hyperbolic relativistic Calogero-Moser system with arbitrary particle number $N$.
In this paper we focus on the cases $N=2$ and $N=3$, and establish a number of conjectured features of the corresponding joint eigenfunctions. More specifically, choosing $a_+,a_-$ positive, we prove that $J_2(b;x,y)$ and $J_3(b;x,y)$ extend to globally meromorphic functions that satisfy various invariance properties as well as a duality relation. We also obtain detailed information on the asymptotic behavior of similarity transformed functions $\rE_2(b;x,y)$ and $\rE_3(b;x,y)$. In particular, we determine the dominant asymptotics for  $y_1-y_2\to\infty$ and $y_1-y_2,y_2-y_3\to\infty$, resp., from which the conjectured factorized scattering can be read off.
\end{abstract}

\tableofcontents

\section{Introduction}
In a previous paper \cite{HR14}, we initiated a recursive scheme for constructing joint eigenfunctions $J_N(a_+,a_-,b;x,y)$ of the commuting analytic difference operators (henceforth A$\De$Os) associated with the integrable $N$-particle quantum systems of hyperbolic relativistic Calogero-Moser type. As mentioned in the introduction of that paper, the possible existence of such a recursive scheme was suggested by earlier work on related integrable quantum systems, including the non-relativistic Calogero-Moser systems and the Toda systems of non-relativistic and relativistic type. (In~\cite{HR12} we detailed the connections between these systems and their associated kernel functions.) Accordingly, our starting point owes much to this pioneering work. It includes various papers by Gerasimov, Kharchev, Lebedev, Oblezin and Semenov-Tian-Shansky; the work of this group of authors can be traced back from what appears to be the most recent paper~\cite{GLO14}. (The first recursive construction for the Jack polynomials seems to occur in Section~5 of~\cite{S97}; the author informed us that it dates back to his 1989 PhD thesis. Recently, we also learned about a recursive construction of eigenfunctions for the rational Calogero-Moser system due to Guhr and Kohler \cite{GK02}.)

In our previous paper we established holomorphy domains and uniform decay bounds that were sufficient for proving that the scheme provides well-defined functions $J_N$ that satisfy the expected joint eigenvalue equations. We also presented an introduction to the general setting at issue, and information on the hyperbolic gamma function and related functions that enter in the recursive scheme. We shall make use of this information without further ado, referring back to sections and equations in \cite{HR14} by using a prefix~I.

As outlined in I Section 7, numerous aspects of the recursive scheme, associated with conjectured features of the joint eigenfunctions $J_N$, remain to be investigated. In the present paper, we deduce a rather comprehensive picture of the joint eigenfunctions in the $N=2$ and $N=3$ cases. Indeed, we establish global meromorphy, a number of invariance properties and a duality relation, and undertake a detailed study of their asymptotic behavior. For the $N=2$ case, nearly all of the results were already obtained in~\cite{R11}. The point of rederiving them here is not only to render them more accessible in the present context, but  also to switch from the flow chart of~\cite{R11} to methods and arguments that allow a generalization to $N>2$.

We proceed to sketch the main results and organization of this paper in more detail. With a view towards making it more self-contained, we briefly recall some key constructions and results from I as we go along.
Throughout the paper we take $a_+,a_-\in (0,\infty)$, use further parameters
\be\label{aconv}
\alpha\equiv 2\pi/a_+a_-,\ \ \ \ a\equiv (a_++a_-)/2,
\ee 
\be\label{asl}
a_s\equiv\min (a_+,a_-),\ \ \ a_l\equiv\max (a_+,a_-),
\ee
 and work with $b$-values in the strip
\be
S_a\equiv \{b\in\C \mid \re b\in (0,2a)\}.
\ee

Section \ref{Sec2} is devoted to the step from $N=1$ to $N=2$. From I Section 4, we recall that the first step $J_1\to J_2$ of the recursive scheme yields the representation
\be\label{J2}
J_2(b;x,y) = \exp(i\alpha y_2(x_1+x_2))\int_\R dz I_2(b;x,y,z),\ \ \ b\in S_a,\ x,y\in\R^2,
\ee
with integrand
\be\label{I2}
I_2(b;x,y,z)\equiv J_1(z,y_1-y_2)\cS^\sharp_2(b;x,z)=\exp(i\alpha z(y_1-y_2))\prod_{j=1}^2\frac{G(x_j-z-ib/2)}{G(x_j-z+ib/2)},
\ee
where $G(z)\equiv G(a_+,a_-;z)$ denotes the hyperbolic gamma function, reviewed in I Appendix A. (Here and below, we suppress the dependence on the parameters $a_+$, $a_-$, whenever this is not likely to cause ambiguities; the dependence on $b$ is often omitted as well.)

Taking $z\to z+(x_1+x_2)/2$ in the integral on the right-hand side of~\eqref{J2} and using the reflection equation I~(A.6) (viz., $G(-z)=1/G(z)$), we obtain another revealing representation, namely,
\be\label{J2cm}
\begin{split}
J_2(b;x,y)=&\exp(i\alpha (x_1+x_2)(y_1+y_2)/2)\\
&\times \int_\R dz \exp(i\alpha z(y_1-y_2))\prod_{\de_1,\de_2=+,-}G(\de_1z+\de_2(x_1-x_2)/2-ib/2).
\end{split}
\ee

Next, we note that the integrand $I_2$~\eqref{I2} can be written as a product of two factors, each of which involves only two hyperbolic gamma functions. Using the Plancherel relation and an explicit Fourier transform formula for factors of this type from \cite{R11}, we deduce a further representation for $J_2$ in Subsection \ref{Sec21}, which is related to the defining representation \eqref{J2} by the involution $(b,x,y)\mapsto (2a-b,y,x)$. As a consequence, we obtain a corresponding duality relation for $J_2(b;x,y)$, namely,
\be\label{J2drel}
J_2(b;x,y)=G(ia-ib)^2 J_2(2a-b;y,x).
\ee
Since $J_2$ has $S_2$-invariance in the variable~$x$ (as is clear from~\eqref{J2} and \eqref{I2}), this duality relation entails that $J_2$ is also $S_2$-symmetric in the variable~$y$ (which is at face value not clear from~\eqref{J2} and\eqref{I2}): 
\be\label{J2sym}
J_2(b;x,y)=J_2(b;\sigma x,\tau y),\ \ \ (\sigma,\tau)\in S_2\times S_2.
\ee
(Alternatively, the $y$-symmetry can be seen from~\eqref{J2cm}.)

Subsection~2.2 is devoted to global holomorphy and meromorphy features. We recall that a simple contour shift procedure reveals   that for $y\in\R^2$ the function $J_2(b;x,y)$ is holomorphic in $(b,x)$ on the domain
\be\label{D2}
D_2\equiv \{(b,x)\in S_a\times\C^2\mid |\im(x_1-x_2)|<2a-\re b\},
\ee
cf.~I~Proposition~4.1. Moreover, starting from the representation~\eqref{J2cm}, we concluded that $J_2$ has an analytic continuation to all $y\in\C^2$ satisfying $|\im(y_1-y_2)|<\re b$, thus arriving at the holomorphy domain
\be\label{cD2}
\cD_2\equiv \{(b,x,y)\in S_a\times\C^2\times\C^2\mid (b,x)\in D_2, |\im(y_1-y_2)|<\re b\}.
\ee

In Subsection \ref{Sec22} we improve these results by showing that the function $J_2(b;x,y)$ has a meromorphic extension to $S_a\times\C^2\times\C^2$, and we also determine the locations of its poles and bounds on their orders. 
To this end, we make use of the entire function $E(z)\equiv E(a_+,a_-;z)$, reviewed in Appendix~A. Specifically, introducing
\be\label{cP2}
\cP_2(b;x,y)\equiv J_2(b;x,y)\prod_{\de=+,-}E(\de(x_1-x_2)+ib-ia)E(\de(y_1-y_2)+ia-ib),
\ee
we show that the product function $\cP_2(b;x,y)$ has a holomorphic continuation to all $(b,x,y)\in S_a\times\C^2\times\C^2$. Since the zero locations and orders of~$E(z)$ are explicitly known, this yields the information on the polar divisor of $J_2$ just mentioned. 

Now in Appendix~B of~\cite{R99} a quite general result was obtained, from which these holomorphy results can also be derived. In fact, it has the stronger consequence that $\cP_2(b;x,y)$ is entire in~$b$ as well, and holomorphic for $a_+$ and~$a_-$ varying over the (open) right half plane. (The link to \cite{R99} can be gleaned from Section~3 in~\cite{R11}.) However, the methods used in~\cite{R99} give rise to insurmountable difficulties for the multi-variable case. 

By contrast, our present method of proof does extend to $N>2$. It involves some simple key ideas that are at risk of getting obscured by the inevitable technicalities required for their implementation. At this point it is therefore expedient to digress and isolate these ideas. (The reader may wish to skip to \eqref{c} at first reading and refer back to the following when the need arises.)

A key ingredient is Bochner's theorem on analytic completion of tube domains. (See Chapter~5 of the monograph~\cite{BM48} for a detailed account of Bochner's original proof in~\cite{B38}.)
For convenience we use the definition that a tube $\cT\subset\C^M$, $M\geq 1$, is any set of points $z=(z_1,\ldots,z_M)$, that can be represented in the form
\be\label{trep}
(\im z_1,\ldots,\im z_M)\in \cB,\ \ \ \re z_j\in\R,\ \ j=1,\ldots,M,
\ee
for some subset $\cB\subset\R^M$, called the base of $\cT$. In the mathematical literature it is customary to have the imaginary rather than the real parts of the complex variables vary over all of~$\R$, but this is clearly just a matter of convention; we actually need the latter convention for the dependence on the coupling parameter~$b$. We shall make use of Bochner's theorem in the following form.

\begin{theorem}[Bochner \cite{B38}]\label{BThm}
Every function that is holomorphic in a tube $\cT$ with an open, connected base~$\cB$ has a holomorphic continuation to the tube $\cT_c$  whose base~$\cB_c$ is the convex hull of~$\cB$. 
\end{theorem}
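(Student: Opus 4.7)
The plan is to reduce the theorem to a \emph{two-point extension lemma}: for any $y^{(0)},y^{(1)}\in\cB$, every function holomorphic on $\cT$ admits a holomorphic extension to an open tube whose base contains the imaginary segment $[y^{(0)},y^{(1)}]$. Given this lemma, iterating segment joins together with the openness and connectedness of $\cB$ should yield a well-defined holomorphic extension to $\cT_c$, since the convex hull is built up from such iterated segments and the extensions obtained along different chains match on overlaps by analytic continuation.

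To prove the lemma, I would first reduce the dimension. After an affine linear change of coordinates (which preserves the tube structure), we can take $y^{(0)}=0$ and $y^{(1)}=e_1$. The remaining coordinates $z_3,\ldots,z_M$ can be carried along as holomorphic parameters throughout, so the essential case is two complex variables $(z_1,z_2)$ with base a connected open subset of $\R^2$ containing neighborhoods of $0$ and of $e_1$.

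For the 2D core, the natural heuristic is Fourier-analytic: the Cauchy--Riemann equations force the partial Fourier transform of $f(\cdot+iy)$ in $x\in\R^2$ to take the form $e^{-\xi\cdot y}g(\xi)$ for a single $g$ independent of $y$, and the condition that the inverse transform makes sense for $y\in\cB$ is log-convex in $y$ by a H\"older-type estimate, hence persists on the convex hull of $\cB$. Inverting the Fourier transform then produces the desired extension to $\cT_c$. Making this precise requires interpreting the Fourier transform in a distribution (or hyperfunction) framework, since $f$ carries no a priori growth control in the real directions.

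The main obstacle will be precisely this rigor, together with verifying that the reconstructed extension is jointly holomorphic in all $M$ variables once the dimension reduction is undone, and globally single-valued. A more elementary fallback, which I would rely on if the distributional manipulations prove unwieldy, is to chain local applications of an \emph{edge-of-the-wedge}-type lemma along a polygonal path in $\cB$ connecting $y^{(0)}$ and $y^{(1)}$; openness of $\cB$ at each step supplies the overlaps needed for both the local extension and the single-valuedness of the glued result.
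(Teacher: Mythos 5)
The paper does not actually prove this statement: it is Bochner's classical tube theorem, quoted with attribution to \cite{B38} and a pointer to Chapter~5 of \cite{BM48} for the proof, so there is no in-paper argument to compare against. Your global skeleton --- a two-point (segment) extension lemma, iterated joins to exhaust the convex hull, and single-valuedness via the identity theorem on connected overlaps --- is the standard architecture of the classical proof and is fine as a reduction. The genuine gap is that the core lemma is never actually established. The Fourier-analytic route you lead with is only valid under growth hypotheses in the real directions: for $f$ in a Hardy class on the tube one does get $\widehat{f(\cdot+iy)}(\xi)=e^{-\xi\cdot y}g(\xi)$ and convexity of the admissible set of $y$ by a H\"older/log-convexity argument, but the theorem as stated imposes no growth condition at all, and a general holomorphic function on a tube (think of $e^{z_1^2}$-type growth) has no partial Fourier transform even as a tempered distribution. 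Hyperfunctions do not rescue this in any routine way --- Fourier transforms of hyperfunctions require infra-exponential growth, and the pivotal claim that ``the inverse transform makes sense for $y$'' is a log-convex condition then has no clean meaning. So the primary argument fails precisely where the theorem's strength lies, namely the absence of any a priori bounds.

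The fallback is closer to the truth but invokes the wrong local tool. Edge-of-the-wedge concerns two wedges meeting along a totally real edge with matching boundary values; what is needed here is a triangle-filling lemma in the spirit of the Kontinuit\"atssatz: if $f$ is holomorphic on the tube over a neighborhood of two sides of a triangle in the base, it extends to the tube over a neighborhood of the whole triangle. This is proved by sliding a one-parameter family of analytic discs (equivalently, an explicit Cauchy integral over a deforming contour), and it is the actual content of Bochner's argument in \cite{BM48}. Without stating and proving that lemma --- and then checking that iterated joins of an open connected set do reach the convex hull (Carath\'eodory) --- the proposal does not close. As written it is a plausible outline with the decisive analytic step missing.
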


We proceed to sketch how we use this theorem to deduce holomorphy of~$\cP_2(b;x,y)$ in $S_a\times\C^2\times\C^2$, restricting attention to those steps in the reasoning that have   counterparts for $N>2$. This will enable us to shorten our account for the case $N=3$ in Subsection~3.2, and show what needs to be supplied for $N>3$.

First, we point out that the domain~$\cD_2$~\eqref{cD2} is a tube with respect to the variables $(ib,x,y)$, with an open, connected base
\be\label{cB2}
\cB_2\equiv \{(\re b, \im x,\im y)\in (0,2a)\times\R^2\times \R^2\mid |\im (x_1-x_2)|< 2a-\re b,|\im (y_1-y_2)|< \re b\}. 
\ee
Let us now assume that $\cP_2(b;x,y)$ has a holomorphic continuation to the tube with base
\be\label{cBep2}
\cB_2(\epsilon_2)\equiv \{(\re b, \im x,\im y)\in (0,\epsilon_2)\times\R^2\times \R^2\mid |\im (y_1-y_2)|< \re b\},\ \ \ \epsilon_2\in(0,a). 
\ee
Then it follows from the definition \eqref{cP2} of $\cP_2(b;x,y)$ and the duality relation~\eqref{J2drel}
that $\cP_2(b;x,y)$ also has a holomorphic continuation to the tube with base
\be\label{hcBep2}
\hat{\cB}_2(\epsilon_2)\equiv \{(\re b, \im x,\im y)\in (2a-\epsilon_2,2a)\times\R^2\times \R^2\mid |\im (x_1-x_2)|<2a- \re b\}. 
\ee
Indeed, the map $(b,x,y)\mapsto (2a-b,y,x)$ yields a bijection between $\cB_2(\epsilon_2)$  and~$\hat{\cB}_2(\epsilon_2)$, and both sets have a non-empty intersection with $\cB_2$.

We can now invoke Bochner's theorem applied to the tube with open, connected base
\be\label{cBunion}
\cB_2^u\equiv \cB_2\cup \cB_2(\epsilon_2)\cup \hat{\cB}_2(\epsilon_2).
\ee
This yields holomorphy of $\cP_2(b;x,y)$ in the tube whose base is the convex hull of the union~$\cB_2^u$.
It is not hard to see that the latter base is given by
\be\label{cB2h}
\cB_2^h
\equiv \{(\re b, \im x,\im y)\in (0,2a)\times\R^2\times \R^2\},
\ee
so that this  tube is the holomorphy domain~$S_a\times\C^2\times\C^2$ announced above.
 Specifically, for each $b\in S_a$, there clearly exist $\lambda\in(0,1)$, $b_-$ with $\re b_-\in(0,\epsilon_2)$, and~$b_+$ with $\re b_+\in(2a-\epsilon_2,2a)$ 
 such that
\be
b=\lambda b_-+(1-\lambda)b_+ .
\ee
As required, we can therefore write any $(b,x,y)\in S_a\times\C^2\times\C^2$ as a convex combination
\be\label{bcon}
(b,x,y)=\lambda\big(b_-,\lambda^{-1}x,0\big)+(1-\lambda)\big(b_+,0,(1-\lambda)^{-1}y\big).
\ee

It remains to prove our assumption (above~\eqref{cBep2}) that $\cP_2(b;x,y)$ is entire in~$x$ for $\re b$ sufficiently small and $|\im (y_1-y_2)|< \re b$. We do so by exploiting one of the A$\De$Es satisfied by $J_2(b;x,y)$, cf.~I Proposition 4.2. This involves a similarity transformation to the corresponding A$\De$E for $\cP_2(b;x,y)$, which leads to coefficients involving the (rational) gamma function, cf.~Lemma~\ref{Lemma:cP2eigeeq}. 

In Subsection~2.3 we collect results concerning the asymptotic behavior of a function $\rE_2(b;x,y)$ that is another similarity transform of~$J_2(b;x,y)$. To sketch these results, we first recall the generalized Harish-Chandra $c$-function
\be\label{c}
c(b;z)\equiv \frac{G(z+ia-ib)}{G(z+ia)},
\ee
and its multivariate version
\be\label{CN}
C_N(b;x)\equiv \prod_{1\leq j<k\leq N}c(b;x_j-x_k),\ \ \ N\geq 2.
\ee
Introducing the phase function
\be\label{phi}
\phi(b)\equiv \exp(i\alpha b(b-2a)/4),
\ee
  the pertinent $J_2$-cousin is given by
\be\label{rE2}
\rE_2(b;x,y)\equiv \frac{\phi(b)G(ib-ia)}{\sqrt{a_+a_-}}\frac{J_2(b;x,y)}{C_2(b;x)C_2(2a-b;y)}.
\ee
This function is particularly suitable for Hilbert space purposes.  We deduce  its dominant asymptotics for $y_1-y_2\to\infty$, namely,
\be\label{rE2as}
\rE_2(b;x,y)\sim \rE_2^{{\rm as}}(b;x,y)\equiv \exp(i\alpha(x_1y_1+x_2y_2))-u(b;x_2-x_1)\exp(i\alpha(x_2y_1+x_1y_2)),
\ee
where $u$ is the scattering function,
\be\label{u}
u(b;z)\equiv -\frac{c(b;z)}{c(b;-z)}=-\prod_{\de=+,-}\frac{G(z+\de i(a-b))}{G(z+\de ia)},
\ee
and we obtain a bound on the remainder, cf.~Proposition~\ref{Prop:rE2as}. In Proposition~\ref{Prop:rE2b} we also establish a uniform bound on $\rE_2(b;x,y)$ for $(x,y)\in\C^2\times\R^2$ satisfying $\im(x_1-x_2)\in(-a_s,0]$ and $y_1-y_2\geq 0$, which is needed to handle the $N=3$ case.

Section \ref{Sec3} is concerned with the step from $N = 2$ to $N = 3$. It is structured in parallel with Section~2, but several new ingredients and technical difficulties arise.
To begin with, we recall that to construct $J_3$ from~$J_2$ in I Section 5, we started from the integrand
\be\label{I3}
I_3(b;x,y,z)\equiv  \cS^\sharp_3(b;x,z)W_2(b;z)J_2(b;z,(y_1-y_3,y_2-y_3)),
\ee
with weight function
\be\label{W2}
W_2(b;z)\equiv 1/C_2(b;z)C_2(b;-z),
\ee
and kernel function
\be\label{cS3}
\cS^\sharp_3(b;x,z)\equiv\prod_{j=1}^3\prod_{k=1}^2\frac{G(x_j-z_k-ib/2)}{G(x_j-z_k+ib/2)}.
\ee
More precisely, from I (5.6) we have the representation
\be\label{J3}
J_3(b;x,y) =  \exp(i\alpha y_3(x_1+x_2+x_3))\int_{G_2} dz\, I_3(b;x,y,z),\ \ \ b\in S_a,\ x,y\in\R^3,
\ee
where we have introduced the `Weyl chamber',
\be
G_2\equiv \{z\in\R^2\mid z_2<z_1\}.
\ee

To derive the counterpart of~\eqref{J2cm} (and for later purposes), we define
\be\label{XY}
X_3\equiv \frac13\sum_{j=1}^3x_j,\ \ Y_3\equiv \frac13\sum_{j=1}^3y_j,\ \  \tilde{x}_j\equiv  x_j-X_3,\ \ \ \tilde{y}_j\equiv  y_j-Y_3,\ \ \ j=1,2,3.
\ee
Taking $z\to z+X_3$ in the integral on the right-hand side of~\eqref{J3}, and then using~\eqref{J2cm}, we obtain
\be\label{J3cm}
\begin{split}
J_3(b;x,y)=&\exp(3i\alpha X_3Y_3)\\
& \times \int_{G_2} dz\, \cS^\sharp_3(b;\tilde{x},z)W_2(b;z)J_2(b;z,(y_1-y_3,y_2-y_3)).
\end{split}
\ee
Note that the integral yields a function that depends only on the differences $x_j-x_k$ and $y_j-y_k$, $j,k=1,2,3$.

As a principal result of Subsection \ref{Sec31}, we deduce a novel representation for $J_3$, related to \eqref{J3} by taking $(b,x,y)\mapsto (2a-b,y,x)$. To generalize our approach in the $N=2$  case, we rely on results from our recent joint paper \cite{HR15} on product formulas for conical functions. Specifically, starting from the Plancherel relation for a generalized Fourier transform, we make use of the remarkable fact that $J_2(b;z,y)$ is an eigenfunction of the integral operator whose kernel is the product of the function
\be\label{cS2}
\cS_2(b;x,z)\equiv \prod_{j,k=1}^2\frac{G(x_j-z_k-ib/2)}{G(x_j-z_k+ib/2)},
\ee
and the weight function $W_2(b;z)$, with the eigenvalue given explicitly by a product of $y$-dependent $G$-functions. (This can be viewed as the $N=2$ counterpart of the Fourier transform formula used for $N=1$.) We also need to invoke the closely related explicit generalized eigenfunction expansion for the integral operator on $L^2(G_2,dx)$ with kernel $W_2(b;x)^{1/2}\cS_2(b;x,y)W_2(b;y)^{1/2}$ from~\cite{HR15}. 

Once the new representation for $J_3$ has been established, the $N=3$ counterparts of~\eqref{J2drel} and~\eqref{J2sym} readily follow. Specifically, they read
\be\label{J3drel}
J_3(b;x,y)=G(ia-ib)^6 J_3(2a-b;y,x),
\ee
 and
\be\label{J3sym}
J_3(b;x,y)=J_3(b;\sigma x,\tau y),\ \ \ (\sigma,\tau)\in S_3\times S_3.
\ee
(Note that in this case the $y$-symmetry is not at all clear from the `center-of-mass' representation~\eqref{J3cm}.)

Turning to Subsection~3.2, we recall that in I Proposition 5.1 we proved, by shifting the two contours in \eqref{J3} simultaneously, that $J_3(b;x,y)$ (with $y\in\R^3$ fixed) is holomorphic in
\be\label{D3}
D_3\equiv\Big\{(b,x)\in S_a\times\C^3\mid \max_{1\leq j<k\leq 3}|\im(x_j-x_k)|<2a-\re b\Big\}.
\ee
To conclude analytic continuation to $y\in\C^3$ such that $|\im(y_j-y_k)|<\re b$, $1\leq j<k\leq 3$, we arrived at a subdomain of $D_3$ for the dependence on $(b,x)$. Specifically, using   the notation~\eqref{XY}, 
we needed the restricted domain
\be\label{D3r}
D_3^r\equiv \{ (b,x)\in S_a\times \C^3 \mid |\im \tilde{x}_j|<a-\re b/2,\ \ j=1,2,3 \}\subset D_3.
\ee
In I Proposition 5.4 we then showed that $J_3(b;x,y)$ is holomorphic in the domain 
\be\label{cD3}
\cD_3\equiv \Big\{ (b,x,y)\in D_3^r\times \C^3 \mid  \max_{1\leq j<k\leq 3}|\im (y_j-y_k)|<\re b \Big\}.
\ee

With these preliminaries in place, we can follow the $N=2$ flow chart. Defining the counterpart
\be\label{cP3}
\cP_3(b;x,y)\equiv J_3(b;x,y)\prod_{1\leq j<k\leq 3}\, \prod_{\de=+,-}E(\de(x_j-x_k)+ib-ia)E(\de(y_j-y_k)+ia-ib),
\ee
of~\eqref{cP2}, this leads to the conclusion that the functions $\cP_3(b;x,y)$/$J_3(b;x,y)$ extend from~$\cD_3$ to holomorphic/meromorphic functions on all of $S_a\times\C^3\times\C^3$, yielding as a corollary the locations of the $J_3$-poles and bounds on their orders. More specifically, there are natural $N=3$ analogs of the bases~\eqref{cB2}--\eqref{cB2h}, and the role of the $J_2$-duality relation~\eqref{J2drel} in the $N=2$ reasoning is played by~\eqref{J3drel}. 

In order to prove the critical assumption that $\cP_3(b;x,y)$ is entire in~$x$ for $\re b$ sufficiently small, however, it is necessary to supplement the consideration of the pertinent $\cP_3$-A$\De$E by a further inductive reasoning, exploiting once more Bochner's Theorem~1.1. (We intend to generalize this part of the argument to arbitrary~$N$ in the next paper of this series.)

In Subsection \ref{Sec33} we consider the asymptotic behavior of the function
\be\label{rE3}
\rE_3(b;x,y)\equiv \left(\frac{\phi(b)G(ib-ia)}{\sqrt{a_+a_-}}\right)^3\frac{J_3(b;x,y)}{C_3(b;x)C_3(2a-b;y)}.
\ee
This involves considerable technicalities, with an important auxiliary result relegated to Lemma~\ref{Lem:aux3}. A highlight is that Theorem~\ref{Thm:rE3as} implies an explicit formula for the dominant asymptotics as $y_1-y_2,y_2-y_3\to\infty$, viz.,
\be\label{E3sc}
\rE_3(b;x,y)\sim \rE_3^{{\rm as}}(b;x,y)\equiv \sum_{\sigma\in S_3}\prod_{\substack{j<k\\\sigma^{-1}(j)>\sigma^{-1}(k)}}(-u(b;x_k-x_j))\cdot\exp\Big(i\alpha \sum_{j=1}^3 x_{\sigma(j)}y_j\Big).
\ee
Indeed, this formula amounts to the factorized scattering conjectured in~I~(7.6).
 With a view towards generalizing our results concerning asymptotics to $N>3$, we also derive a uniform bound on $\rE_3(b;x,y)$ for suitably restricted $(x,y)\in\C^3\times\R^3$, cf.~Theorem~\ref{Thm:ubound}.

\section{The step from $N=1$ to $N=2$}\label{Sec2}

\subsection{Invariance properties and a duality relation}\label{Sec21}
We begin this subsection by collecting some invariance properties for $J_2$, which we have occasion to invoke below.

\begin{proposition}\label{Prop:J2sym}
For all $(b,x,y)\in\cD_2$~\eqref{cD2} and $\eta\in\C$, we have
\be\label{J2ri}
J_2(b;x,y)=J_2(b;-x,-y),
\ee 
\be\label{J2hom}
\begin{split}
J_2(b;x,y)=& \exp(-i\alpha\eta(y_1+y_2))J_2(b;(x_1+\eta,x_2+\eta),y)\\
& =\exp(-i\alpha\eta(x_1+x_2))J_2(b;x,(y_1+\eta,y_2+\eta)) .
\end{split}
\ee
\end{proposition}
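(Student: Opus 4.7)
The plan is to extract both identities from the center-of-mass representation \eqref{J2cm}, in which the $z$-integrand depends on $x,y$ only through the differences $x_1-x_2$ and $y_1-y_2$, whereas all dependence on the center-of-mass coordinates $x_1+x_2$ and $y_1+y_2$ is confined to the prefactor $\exp(i\alpha(x_1+x_2)(y_1+y_2)/2)$. The plan is to read off the translation covariances by inspection of this prefactor, and the reflection by a simple change of variable $z\mapsto -z$ in the integration.

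For the translation covariances \eqref{J2hom}, I would first observe that under the shift $x_j\mapsto x_j+\eta$ the difference $x_1-x_2$ and hence the entire $z$-integrand in \eqref{J2cm} is unaffected, while the prefactor picks up the factor $\exp(i\alpha\eta(y_1+y_2))$. This yields
\be
J_2(b;(x_1+\eta,x_2+\eta),y) = \exp(i\alpha\eta(y_1+y_2))J_2(b;x,y),
\ee
which is the first identity of \eqref{J2hom}. The same argument with the roles of $x$ and $y$ swapped gives the second identity. No contour deformation is needed, since the integration variable is not shifted; the domain $\cD_2$ is invariant under both translations (they leave the constraining differences $x_1-x_2$ and $y_1-y_2$ untouched), so both sides are holomorphic functions of $\eta\in\C$ and the identities hold for arbitrary complex $\eta$.

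For the reflection identity \eqref{J2ri}, I would substitute $(x,y)\mapsto(-x,-y)$ in \eqref{J2cm}: the prefactor is unchanged (the two sign flips cancel), the exponential $\exp(i\alpha z(y_1-y_2))$ acquires an overall sign in its exponent, and the differences $x_1-x_2$ inside the fourfold $G$-product change sign. After the real change of variable $z\mapsto -z$ the exponential is restored, and the resulting product $\prod_{\de_1,\de_2=\pm}G(-\de_1 z-\de_2(x_1-x_2)/2-ib/2)$ is brought back to the original one by the summation relabeling $(\de_1,\de_2)\mapsto(-\de_1,-\de_2)$. Alternatively, the same conclusion follows directly from \eqref{J2}--\eqref{I2} by the substitution $z\mapsto -z$ together with the reflection equation $G(-w)=1/G(w)$, which exchanges numerator and denominator in the product over $j=1,2$.

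No substantial obstacle arises: the content of the proposition is essentially built into \eqref{J2cm}, and only routine bookkeeping of signs, the relabeling of the summation indices $\de_1,\de_2$, and the observation that $\cD_2$ is preserved by the symmetries involved are required.
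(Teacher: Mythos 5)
Your proof is correct and follows essentially the same route as the paper: the paper likewise obtains \eqref{J2ri} from the substitution $z\mapsto-z$ together with $G(-w)=1/G(w)$ in \eqref{J2}--\eqref{I2}, and reads \eqref{J2hom} off the center-of-mass representation \eqref{J2cm}, finishing by analytic continuation from real $x,y,\eta$ to all of $\cD_2$ and $\eta\in\C$ (a step you address slightly less explicitly for the $(x,y)$-dependence, but which is routine).
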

\begin{proof}
To begin with, we assume $x,y\in\R^2$. 
It is clear from  the reflection equation I (A.6) for $G(z)$ (namely, $G(-z)=1/G(z)$) that the integrand $I_2$~\eqref{I2} satisfies 
\be\label{I2ri}
I_2(-x,-y,-z)=I_2(x,y,z).
\ee
Taking $z\to -z$ in the defining representation \eqref{J2}, the invariance property \eqref{J2ri} is immediate from \eqref{I2ri}. Assuming also $\eta\in\R$, we obtain~\eqref{J2hom} from the alternative representation~\eqref{J2cm}.  Clearly, \eqref{J2ri}--\eqref{J2hom} are preserved under analytic continuation, and so the proposition follows.
\end{proof}

We proceed to deduce a new representation for $J_2$, which is related to  \eqref{J2} by the involution $(b,x,y)\mapsto (2a-b,y,x)$. We start from the Plancherel relation
\be\label{Plancherel}
\int_\R dz f(z)g(z)=\int_\R dp \hat{f}(p)\hat{g}(-p),\ \ \ f,g\in L^2(\R)\cap L^1(\R),
\ee
with the Fourier transform defined by
\be\label{Ftrans}
\hat{h}(p)=\left(\frac{\alpha}{2\pi}\right)^{1/2}\int_\R dz\exp(i\alpha pz)h(z),\ \ \ h=f,g.
\ee
Choosing
\be\label{fg}
f(z)=\frac{G(x_1-z-ib/2)}{G(x_1-z+ib/2)},\  \ g(z)=\exp\big(i\alpha (y_2(x_1+x_2)+ z(y_1-y_2))\big)\frac{G(x_2-z-ib/2)}{G(x_2-z+ib/2)},
\ee
the left-hand side of \eqref{Plancherel} coincides with  the $J_2$-representation \eqref{J2}. We can calculate the Fourier transforms of these two functions by  using  the Fourier transform formula \eqref{Fform}. Indeed, setting $\mu=x_1-ib/2$ and $\nu=x_1+ib/2$, and invoking the reflection equation I (A.6), we obtain
\be\label{ftr}
\hat{f}(p)=G(ia-ib)\exp(i\alpha x_1 p)\prod_{\de=+,-}G(\de p-ia+ib/2).
\ee
Swapping~$x_1$ and~$x_2$, and taking $p\to p+y_1-y_2$, we deduce
\be
\hat{g}(-p)=G(ia-ib)\exp(i\alpha (x_2y_1-x_1y_2-x_2 p))\prod_{\de=+,-}G(\de(y_1-y_2-p)-ia+ib/2).
\ee
Substituting these expressions in the right-hand side of \eqref{Plancherel} and taking $p\to p-y_2$, we get  the new representation
\be\label{J2d}
J_2(b;x,y)=G(ia-ib)^2 \exp(i\alpha x_2(y_1+y_2))\int_\R dp I_2(2a-b;y,x,p).
\ee
We are now prepared for  the following result.

\begin{proposition}\label{Prop:J2prop}
Letting $b\in S_a$ and $x,y\in\R^2$, the duality relation~\eqref{J2drel} and symmetry relation~\eqref{J2sym} hold true. 
\end{proposition}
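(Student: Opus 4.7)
The plan is to read off the duality relation \eqref{J2drel} directly from the new representation \eqref{J2d} that has just been derived via the Plancherel identity, and then to bootstrap the full $S_2\times S_2$-symmetry from the manifest $x$-symmetry of the defining integral \eqref{J2}.

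First I would compare \eqref{J2d} with the defining representation \eqref{J2} applied to the arguments $(2a-b;y,x)$. By definition,
\[
J_2(2a-b;y,x)=\exp(i\alpha x_2(y_1+y_2))\int_\R dp\, I_2(2a-b;y,x,p),
\]
since the exponential prefactor in \eqref{J2} is $\exp(i\alpha\cdot(\text{second variable})_2\cdot((\text{first variable})_1+(\text{first variable})_2))$. Thus the right-hand side of \eqref{J2d} is precisely $G(ia-ib)^2\, J_2(2a-b;y,x)$, which yields the duality relation \eqref{J2drel} for $b\in S_a$ and $x,y\in\R^2$ (note that $2a-b$ also lies in $S_a$, so the right-hand side is well defined).

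Next I would establish the symmetry \eqref{J2sym}. Inspection of \eqref{I2} shows that $I_2(b;x,y,z)$ depends on $(x_1,x_2)$ only through the symmetric product $\prod_{j=1}^2 G(x_j-z-ib/2)/G(x_j-z+ib/2)$, while the prefactor $\exp(i\alpha y_2(x_1+x_2))$ in \eqref{J2} is also symmetric in $(x_1,x_2)$. Hence $J_2(b;x,y)=J_2(b;\sigma x,y)$ for every $\sigma\in S_2$. Applying the same observation to $J_2(2a-b;y,x)$, we see that this quantity is $S_2$-symmetric in~$y$; combining this with the duality \eqref{J2drel} just proved gives $J_2(b;x,\tau y)=J_2(b;x,y)$ for $\tau\in S_2$. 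Both permutations therefore act trivially, which is exactly \eqref{J2sym}.

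There is essentially no obstacle left: the Plancherel-based derivation of \eqref{J2d} has absorbed the entire technical content, so the proof reduces to matching the two integral representations and invoking the manifest symmetry of $I_2$ in its $x$-variables. The only small point to watch is that we are working on the real domain $x,y\in\R^2$ (as in the statement of the proposition); global validity in $\cD_2$ then follows, if needed, by analytic continuation, since both sides of \eqref{J2drel} and \eqref{J2sym} are holomorphic on that tube.
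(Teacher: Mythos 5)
Your proposal is correct and follows essentially the same route as the paper: the duality relation \eqref{J2drel} is read off by matching the Plancherel-derived representation \eqref{J2d} against the defining representation \eqref{J2} evaluated at $(2a-b;y,x)$, and the $y$-symmetry is then obtained by transporting the manifest $x$-symmetry of \eqref{J2} through the duality. The paper's proof is a condensed version of exactly this argument (it also notes that the $y$-symmetry can alternatively be seen directly from the centre-of-mass representation \eqref{J2cm}).
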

\begin{proof}
Comparing \eqref{J2d} to the defining representation~\eqref{J2}, we obtain~\eqref{J2drel}. Now $S_2$-symmetry in~$x$ is immediate from~\eqref{J2}, so $S_2$-symmetry in~$y$ then follows from the dual representation \eqref{J2d} or directly from~\eqref{J2cm}.
\end{proof}

For completeness, we add that $J_2$ has a further duality property, namely,
\be\label{J2sd}
J_2(b;y,x)=J_2(b;x,y)\prod_{\de=+,-}G(\de(x_1-x_2)-ia+ib)G(\de(y_1-y_2)+ia-ib).
\ee
It can be derived from~\eqref{Fform} in the same way as before, by starting from~\eqref{fg} with the denominators swapped. Indeed, this yields yet another $J_2$-representation. Taking $p\to p+(y_1-y_2)/2$  in the latter, it becomes
\be
\begin{split}
J_2(b;x,y)=& \exp(i\alpha (x_1+x_2)(y_1+y_2)/2)\prod_{\de=+,-}G(\de (x_2-x_2)+ia-ib)
\\
& \times \int_{\R} dp\, \prod_{\de=+,-}\frac{G(p+\de (x_d-y_d)/2-ia+ib/2)}{G(p+\de (x_d+y_d)/2+ia-ib/2)},
\end{split}
\ee
with $x_d\equiv x_1-x_2$ and $y_d\equiv y_1-y_2$. (The function defined by the integral is manifestly invariant under swapping~$x_d$ and~$y_d$\,; it is a multiple of the relativistic conical function $\cR(2a-b;x_d,y_d)$, cf.~Eq.~(1.3) in~\cite{R11}.) Formula~\eqref{J2sd} easily follows from this representation.

The additional duality feature~\eqref{J2sd} entails that the function~$E_2(b;x,y)$ given by~\eqref{rE2} is invariant under $x\leftrightarrow y$. We believe that this self-duality feature also holds for the $N=3$ counterpart $E_3(b;x,y)$~\eqref{rE3}, but so far a proof of this conjecture has not materialized.

\subsection{Global meromorphy}\label{Sec22}
In this subsection we show that the product function~$\cP_2(b;x,y)$~\eqref{cP2} has a holomorphic continuation from the domain~$\cD_2$~\eqref{cD2} to~$S_a\times\C^2\times\C^2$. To do so, we follow the flow chart outlined below~\eqref{cP2}.

We begin by noting that as a corollary of Propositions~\ref{Prop:J2sym} and~\ref{Prop:J2prop} we obtain
\be\label{cP2ref}
\cP_2(b;x,y)=\cP_2(b;-x,-y), \ \ \ \mathrm{(reflection~invariance)},
\ee
\be\label{cP2d}
\cP_2(b;x,y)=G(ia-ib)^2\cP_2(2a-b;y,x),\ \ \ \mathrm{(duality)},
\ee
\be\label{cP2pi}
\cP_2(b;x,y)=\cP_2(b;\sigma x,\tau y),\ \ \ (\sigma,\tau)\in S_2\times S_2,\ \ \ \mathrm{(permutation~invariance)}.
\ee
Indeed, the $E$-function product in \eqref{cP2} is invariant under the reflections $z\mapsto-z$, $z=x,y$, the map $(b,x,y)\mapsto(2a-b,y,x)$, as well as each of the four permutations $(x,y)\mapsto(\sigma x,\tau y)$, $(\sigma,\tau)\in S_2\times S_2$.

From the second $J_2$-duality feature~\eqref{J2sd} it also follows that we have
\be\label{cP2sd}
\cP_2(b;x,y)= \cP_2(b;y,x),\ \ \  {\rm (self-duality)}.
\ee
However, we shall avoid the use of this property, since we are so far unable to prove the expected self-duality for $\cP_3(b;x,y)$.

Next, as announced below~\eqref{bcon}, we are going to replace one of the eigenvalue equations for~$J_2$ in I~Proposition~4.2  by the corresponding eigenvalue equation for $\cP_2$. Specifically, we focus on the A$\De$E obtained by setting $k=1$ and choosing $\de\in\{+,-\}$ such that $a_{-\de}=a_s$ (recall~\eqref{asl}). Using henceforth the notation
\be
e_l(z)\equiv\exp(\pi z/a_l),\ \ \ s_l(z)\equiv\sinh(\pi z/a_l),
\ee
this equation reads
 \begin{multline}\label{J2eigeq}
V_2(b;x)J_2(b;x+ia_se_1,y)+V_2(b;\sigma_{12}x)J_2(b;x+ia_se_2,y)
\\
=\big(e_l(-2y_1)+e_l(-2y_2)\big)J_2(b;x,y).
\end{multline}
Here, we have $e_1\equiv (1,0),e_2\equiv (0,1)$, the map $\sigma_{12}$ swaps $x_1$ and~$x_2$, and the coefficient function is given by
\be
V_2(b;x)\equiv \frac{s_l(x_2-x_1-ib)}{s_l(x_2-x_1)}.
\ee
(To be quite precise, we have taken $(x,y)\to (-x,-y)$ in I~(4.10) with $k=1$ and used the reflection invariance~\eqref{J2ri}; cf.~also I~(1.21) and I~(1.9).)

We need to ensure that the $x_j$-shifts do not move the $J_2$-argument out of $\cD_2$~\eqref{cD2}. To this end and also for later purposes (in particular, to complete the definition of the base~$\cB_2(\epsilon_2)$~\eqref{cBep2}), we introduce the number
\be\label{ep2}
\epsilon_2\equiv a_l/2,
\ee
 the strip $S(\epsilon_2)$, where
\be\label{Sep}
S(\epsilon)\equiv \{ b\in S_a\mid \re b <\epsilon\}, \ \ \ \epsilon\in (0,a),
\ee
and the domains
\be\label{cA2}
\cA_2\equiv \{ x\in\C^2\mid v_1-v_2>-\re b\},
\ee
\be\label{cA2n}
\cA_2^{(n)}\equiv
\left\{
\begin{array}{ll}
  \{ x\in\C^2\mid |v_1-v_2|<a_s+\re b\} ,  & n=1   , \\
   \{ x\in\cA_2\mid v_1-v_2<na_s+\re b\},  & n=0,2,3,\ldots  .
\end{array}
\right.
\ee
Here and from now on, we use the notation
\be
v\equiv \im x,\ \ \ x\in\C^M.
\ee
Next, we introduce
\be\label{D2p}
D_2^{(+)}\equiv \big\{ (b,x)\in S(\epsilon_2)\times\C^2\mid x\in\cA_2\big\}, \ee
\be\label{D2n}
D_2^{(n)}\equiv \big\{ (b,x)\in S(\epsilon_2)\times\C^2\mid x\in\cA_2^{(n)}\big\},  
\ee
\be\label{cD2p}
\cD_2^{(+)}\equiv \big\{ (b,x,y)\in D_2^{(+)}\times\C^2\mid |\im (y_1-y_2)|<\re b\big\},
\ee
\be\label{cD2n}
\cD_2^{(n)}\equiv \big\{ (b,x,y)\in D_2^{(n)}\times\C^2\mid |\im (y_1-y_2)|<\re b\big\},
\ee
and note that we have inclusions
\be\label{inclu}
D_2^{(1)}\subset D_2,\ \ \ \cD_2^{(1)}\subset \cD_2.
\ee
(Indeed, since $b$ belongs to $S(\epsilon_2)$, we have $a_s+\re b< a_s+a_l-\re b$.)

We are now prepared for the following lemma.

\begin{lemma}\label{Lemma:cP2eigeeq}
Letting $(b,x,y)\in\cD_2^{(0)}$, we have the eigenvalue equation
 \begin{multline}\label{cP2eigeq}
\cV_2(b;x)\cP_2(b;x+ia_se_1,y)+\cV_2(b;\sigma_{12}x)\cP_2(b;x+ia_se_2,y)
\\
=\big(e_l(-2y_1)+e_l(-2y_2)\big)\cP_2(b;x,y),
\end{multline}
where the coefficient function is given by
\be\label{cV}
\begin{split}
\cV_2(b;x) &\equiv- i\pi\frac{\exp(i(2x_2-2x_1-ia_s)K_l)}{s_l(x_2-x_1)}\\
&\quad\times\left[\Gamma\left(\frac{i}{a_l}(x_2-x_1-ib)\right)\Gamma\left(\frac{i}{a_l}(x_2-x_1+ib-2ia)\right)\right]^{-1},
\end{split}
\ee
with
\be\label{slKl}
 K_l\equiv \frac{1}{2a_l}\ln\left(\frac{a_s}{a_l}\right).
\ee
\end{lemma}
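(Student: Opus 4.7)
The strategy is to convert the known $J_2$-A$\De$E \eqref{J2eigeq} into an A$\De$E for $\cP_2$ via the similarity transformation \eqref{cP2}. Write
\be
F(b;x,y) \equiv \prod_{\de=+,-} E(\de(x_1-x_2)+ib-ia)E(\de(y_1-y_2)+ia-ib),
\ee
so that $J_2 = \cP_2/F$. A key observation is that $F$ is manifestly invariant under $x_1\leftrightarrow x_2$ and depends on $x$ only through $x_1-x_2$; in particular the $y$-dependent factors are inert under the shifts $x\mapsto x+ia_se_j$. Substituting $J_2 = \cP_2/F$ into \eqref{J2eigeq} and multiplying through by $F(b;x,y)$ therefore yields an equation of the desired form \eqref{cP2eigeq}, with the tentative coefficient
\be
\cV_2(b;x) = V_2(b;x)\cdot \frac{\prod_{\de=+,-}E(\de(x_1-x_2)+ib-ia)}{\prod_{\de=+,-}E(\de(x_1-x_2+ia_s)+ib-ia)},
\ee
and with the coefficient of the $e_2$-shifted term equal to $\cV_2(b;\sigma_{12}x)$ by the $x_1\leftrightarrow x_2$ symmetry of $F$.

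The next step is to match this expression with the explicit right-hand side of \eqref{cV}. For this, I would invoke the first-order A$\De$E satisfied by the entire function $E(z)$ reviewed in Appendix~A, which expresses the ratio $E(z+ia_s)/E(z)$ (and likewise $E(-z-ia_s)/E(-z)$) as an elementary function involving $\Gamma$, an exponential of the form $\exp(cz/a_l)$, and a trigonometric factor. Applying this twice, once with argument $z = x_1-x_2+ib-ia$ and once with $z = -(x_1-x_2)+ib-ia$, and combining with the $s_l$-ratio coming from $V_2$, produces precisely the two $\Gamma$-factors $\Gamma(i(x_2-x_1-ib)/a_l)$ and $\Gamma(i(x_2-x_1+ib-2ia)/a_l)$ in the denominator and, after the cancellation of one $s_l$ against the other, the factor $-i\pi/s_l(x_2-x_1)$. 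The two exponential contributions from the $\de=\pm$ ratios, together with any constant shift, combine into the prefactor $\exp(i(2x_2-2x_1-ia_s)K_l)$ with $K_l$ as in \eqref{slKl}.

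Finally, I would verify that for $(b,x,y)\in\cD_2^{(0)}$ all $J_2$-values occurring in \eqref{J2eigeq} lie in the holomorphy domain~$\cD_2$ \eqref{cD2}, so that the substitution is legitimate. Indeed, since $\re b<\epsilon_2 = a_l/2$ and $x\in\cA_2^{(0)}$ means $-\re b<v_1-v_2<\re b$ (with $v=\im x$), shifting $x$ by $ia_se_1$ or $ia_se_2$ moves $v_1-v_2$ by $\pm a_s$, yielding $|v_1-v_2|<a_s+\re b<a_s+a_l/2 \le a_s+a_l-\re b = 2a-\re b$; together with $|\im(y_1-y_2)|<\re b$ this confirms that both shifted arguments sit in $\cD_2$. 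The resulting identity, valid a priori on $\cD_2^{(0)}$, is the claimed eigenvalue equation.

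The main obstacle is the bookkeeping in the second step: one must track carefully the signs arising from the two choices $\de=\pm$, the shift by $\pm ia$ inside the $E$-arguments, and the interplay between the exponentials appearing in the $E$-A$\De$E and the $s_l$-factor in $V_2$. Once the $E$-difference equation from Appendix~A is written out explicitly, the algebraic simplification is mechanical but needs to be executed without error so that the prefactor $\exp(i(2x_2-2x_1-ia_s)K_l)$ and the specific arguments of the two $\Gamma$-functions come out exactly as in \eqref{cV}.
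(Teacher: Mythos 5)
Your proposal is correct and follows essentially the same route as the paper: verify via the inclusion $\cD_2^{(1)}\subset\cD_2$ that the shifted $J_2$-arguments stay in the holomorphy domain, substitute $J_2=\cP_2/F$ into \eqref{J2eigeq} (the $y$-dependent $E$-factors cancel), and evaluate the ratio $\prod_{\de}E(\de t+ib-ia)/E(\de t+ib-ia+\de ia_s)$ by applying the $E$-A$\De$E \eqref{EADE} twice together with the reflection equation for $\Gamma$. The paper records this evaluation as the identity \eqref{EpADE}, which is exactly the computation you outline.
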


\begin{proof}
Note first that for $(b,x,y)\in\cD_2^{(0)}$ the three arguments of~$J_2$ occurring in \eqref{J2eigeq}  belong to $\cD_2^{(1)}$, and thus to the holomorphy domain~$\cD_2$, cf.~\eqref{inclu}.
Next, using the pertinent A$\De$E \eqref{EADE} satisfied by $E(z)$ and the reflection equation for $\Gamma(z)$, we compute
\be\label{EpADE}
\begin{split}
\prod_{\de=+,-}\frac{E(\de t+ib-ia)}{E(\de t+ib-ia+\de ia_s)}&=i\pi\frac{\exp(i(-2t-ia_s)K_l)}{s_l(t+ib)}\\
&\quad\times\left[\Gamma\left(\frac{i}{a_l}(-t-ib)\right)\Gamma\left(\frac{i}{a_l}(-t+ib-2ia)\right)\right]^{-1}.
\end{split}
\ee
Using this, the A$\De$E \eqref{cP2eigeq} readily follows from~\eqref{J2eigeq}.
\end{proof}

Now we are ready for the proof of the main result of this subsection.

\begin{proposition}\label{Prop:cP2ext}
The product function $\cP_2(b;x,y)$~\eqref{cP2} admits a holomorphic continuation from~$\cD_2$~\eqref{cD2} to $S_a\times\C^2\times\C^2$.
\end{proposition}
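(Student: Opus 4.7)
The proof plan realizes the flow chart sketched between \eqref{cP2} and Lemma~\ref{Lemma:cP2eigeeq}. I view $\cD_2$ as a tube in the variables $(ib,x,y)$ with the open, connected base $\cB_2$~\eqref{cB2}, and aim to extend $\cP_2$ holomorphically to two further tubes with bases $\cB_2(\epsilon_2)$~\eqref{cBep2} and $\hat{\cB}_2(\epsilon_2)$~\eqref{hcBep2}. The union $\cB_2^u$~\eqref{cBunion} is then open and connected, because $\epsilon_2=a_l/2<a$ forces each pair among the three bases to overlap, and the convex-combination identity~\eqref{bcon} directly verifies that its convex hull equals $\cB_2^h$~\eqref{cB2h}. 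Bochner's Theorem~\ref{BThm} then yields the claimed holomorphy on the tube $S_a\times\C^2\times\C^2$.

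Extension to the tube with base $\hat{\cB}_2(\epsilon_2)$ follows formally from the extension to $\cB_2(\epsilon_2)$: the involution $(b,x,y)\mapsto(2a-b,y,x)$ carries $\cB_2(\epsilon_2)$ bijectively onto $\hat{\cB}_2(\epsilon_2)$, and combining the duality~\eqref{cP2d} with the entireness of $G(ia-ib)^2$ in~$b$ transfers the holomorphy across. The essential step is therefore the critical assumption that, for each $(b,y)$ with $\re b\in(0,\epsilon_2)$, $\im b\in\R$ and $|\im(y_1-y_2)|<\re b$, the function $\cP_2(b;\cdot,y)$ is entire on~$\C^2$.

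To establish this critical assumption I would exploit the A$\De$E~\eqref{cP2eigeq} from Lemma~\ref{Lemma:cP2eigeeq}, solved algebraically for $\cP_2(b;x+ia_se_1,y)$, to propagate holomorphy inductively in the $\im(x_1-x_2)$ direction. The base case is that $\cP_2$ is holomorphic on $\cD_2^{(1)}\subset\cD_2$, as recorded in~\eqref{inclu}; induction on $n$ then extends this to $\cD_2^{(n)}$ for every $n\geq 1$, and hence to $\cD_2^{(+)}=\bigcup_n\cD_2^{(n)}$. Reflection invariance~\eqref{cP2ref} supplies the complementary half-space in $\im(x_1-x_2)$ (the two half-spaces overlap in the strip $|\im(x_1-x_2)|<\re b$), and since the dependence of $\cP_2$ on $x_1+x_2$ is only through the exponential prefactor visible from~\eqref{J2} and the definition~\eqref{cP2}, entireness in all of $x\in\C^2$ follows.

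The main obstacle is that dividing by $\cV_2(b;x)$ in the inverted A$\De$E threatens to introduce spurious poles at the zeros of $\cV_2$, which by the explicit form~\eqref{cV} sit on the discrete divisors $x_2-x_1\in ib-ia_l\N$ and $x_2-x_1\in 2ia-ib-ia_l\N$ coming from the reciprocal-gamma factors. I would treat these candidate singularities one at a time, demonstrating removability by matching the apparent pole order against the zero that the numerator of the inverted A$\De$E inherits from the product decomposition~\eqref{cP2}, the explicit zero pattern of $E(z)$ recalled in Appendix~A, and the pole locations of $J_2(b;x,y)$ that can be read off from its defining contour integral~\eqref{J2}. Propagating this local argument along the induction completes the proof of the critical assumption, and hence of the proposition.
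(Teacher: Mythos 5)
Your overall architecture coincides with the paper's: view $\cD_2$ as a tube, extend to the small-$\re b$ base $\cB_2(\epsilon_2)$, transfer to $\hat{\cB}_2(\epsilon_2)$ via the duality \eqref{cP2d}, apply Bochner's Theorem~\ref{BThm} to the convex hull, and prove the critical small-$\re b$ statement by an induction in $\im(x_1-x_2)$ driven by the A$\De$E of Lemma~\ref{Lemma:cP2eigeeq} together with reflection invariance. The gap is in how you handle the division by the coefficient function, which is the crux of the induction step. The paper first multiplies \eqref{cP2eigeq} by $s_l(x_2-x_1)$, so that the resulting coefficient $\hat{\cV}_2(b;x)=s_l(x_2-x_1)\cV_2(b;x)$ is \emph{entire} (the reciprocal gamma factors are entire; only the $1/s_l$ factor in $\cV_2$ is not), and then observes that its zeros \eqref{gamzeros} satisfy $\im(x_1-x_2)=-\re b-ka_l$ or $\im(x_1-x_2)=-2a+\re b-ka_l$, hence all lie in the closed half-space $\im(x_1-x_2)\le -\re b$, which is disjoint from $\cA_2$ and therefore from the entire induction domain $D_2^{(+)}$. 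Consequently no spurious singularities are ever introduced and no removability argument is needed; the only remaining care is the bookkeeping of the subdomains $\cD_{2,r}^{(n)}$ ensuring that the $ia_se_1$-translate overlaps the previously established domain.

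Your plan instead treats the zeros of $\cV_2$ as genuine obstacles inside the domain and proposes to show removability by ``matching the apparent pole order against the zero that the numerator inherits.'' This step is not carried out, and it is doubtful it could be: it would require a priori control of the vanishing order of $\cP_2$ (equivalently of the right-hand side of the inverted A$\De$E) along the divisors $x_1-x_2=-ib-ika_l$ and $x_1-x_2=-2ia+ib-ika_l$, which is essentially the kind of global information the proposition is trying to establish; and the pole structure of $J_2$ cannot be ``read off'' from the contour integral \eqref{J2} outside the strip where it converges. The observation that closes the argument is simply that, after clearing the $s_l$ denominator, the zeros sit entirely on the wrong side of the line $\im(x_1-x_2)=-\re b$ and so are never met as the induction sweeps upward. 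Two smaller slips: the zeros lie at $x_2-x_1\in ib+ia_l\N$ (not $ib-ia_l\N$), and $\cB_2(\epsilon_2)$ and $\hat{\cB}_2(\epsilon_2)$ are in fact disjoint since $\epsilon_2<a$; connectedness of $\cB_2^u$ comes from each of them meeting $\cB_2$, not from pairwise overlaps.
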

\begin{proof}
We begin by proving holomorphic continuation to~$\cD_2^{(+)}$~\eqref{cD2p}. To this end, 
we assume inductively that $\cP_2(b;x,y)$ is holomorphic in~$\cD_2^{(n)}$ with $n\geq 1$. (For $n=1$ the validity of the assumption follows from the inclusion~\eqref{inclu}.)
To establish holomorphic continuation to~$\cD_2^{(n+1)}$, we rewrite the eigenvalue equation~\eqref{cP2eigeq} in a more convenient form. Letting
\be
 \hat{\cV}_2(b;x)\equiv s_l(x_2-x_1)\cV_2(b;x)  ,
\ee
multiplying \eqref{cP2eigeq} by $s_l(x_2-x_1)$, and rearranging, we obtain 
 \be \label{cP2id}
 \begin{split}
\hat{\cV}_2(b;x)\cP_2(b;x+ia_se_1,y)= &\hat{\cV}_2(b;\sigma_{12}x)\cP_2(b;x+ia_se_2,y)
\\
&+s_l(x_2-x_1)\big(e_l(-2y_1)+e_l(-2y_2)\big)\cP_2(b;x,y).
\end{split}
\ee
Now $1/\Gamma(z)$ is an entire function with zeros at $z=-k,k\in\N$, so the function $\hat{\cV}_2(b;x)$ is entire as well, with zeros located at
\be\label{gamzeros}
x_1-x_2=-ib-ika_l,\ \ x_1-x_2=-2ia+ib-ika_l,\ \ \ k\in\N.
\ee
This implies, in particular, that $\hat{\cV}_2(b;x)$ is nonzero on $D^{(+)}_2$~\eqref{D2p}. 

We now assert that it is  enough to prove that the function $R_2(b;x,y)$ given by the right-hand side of \eqref{cP2id} is holomorphic for all points $(b,x,y)\in\cD_2^{(n)}$ satisfying
\be\label{vres2}
v_1-v_2\in ((n-1)a_s-\re b,na_s+\re b).
\ee
Indeed, this restriction yields a subdomain
\be
\cD_{2,r}^{(n)}\subset\cD_2^{(n)},
\ee
 whose $x$-translation over $ia_se_1$ equals~$\cD_{2,r}^{(n+1)}$, and~$\cD_{2,r}^{(n+1)}$  meets~$\cD_2^{(n)}$  for all points with~$
 v_1-v_2\in (na_s-\re b,na_s+\re b)$. Thus we obtain a holomorphic continuation to all of~$\cD_2^{(n+1)}$, as announced. 

To verify that $R_2(b;x,y)$ is indeed holomorphic in~$\cD_{2,r}^{(n)}$, we need only note that for $n=1$ both terms $\cP_2(b;x,y)$ and $\cP_2(b;x+ia_se_2,y)$ in $R_2(b;x,y)$ are holomorphic  in~$\cD_{2,r}^{(1)}$ by virtue of \eqref{inclu}, while for $n>1$ they are holomorphic in~$\cD_{2,r}^{(n)}$ thanks to the induction assumption. This completes the induction argument, so it follows that $\cP_2(b;x,y)$ has a holomorphic continuation to~$\cD_2^{(+)}$. 

Finally, we invoke the reflection invariance~\eqref{cP2ref} to deduce holomorphic continuation to the tube with base $\cB_2(\epsilon_2)$~\eqref{cBep2}. We can then follow the reasoning detailed below~\eqref{cBep2} to complete the proof of the proposition.
\end{proof}

\subsection{Asymptotics}\label{Sec23}
In this subsection we undertake a detailed study of the asymptotic behavior of the function $\rE_2(b;x,y)$ \eqref{rE2}.
To begin with, we note that the phase function~\eqref{phi} and scattering function~\eqref{u} satisfy
\be\label{phuinv}
\phi(2a-b)=\phi(b),\ \ \ u(2a-b;z)=u(b;z),
\ee
whereas  the $c$-function~\eqref{c} and its multivariate version $C_N$~\eqref{CN} are not invariant under this $b$-involution.
 Next, we invoke the $G$-function asymptotics I (A.14)--(A.16) to deduce the asymptotics of the $c$-function, namely, 
\be\label{cas}
|\phi(b)^{\mp 1}\exp(\pm\alpha bz/2)c(b;z)-1| \le C_1(\rho,b,\im z)\exp(-\alpha\rho|\re z|),\ \ \ \re z\to\pm\infty.
\ee
Here the decay rate $\rho$ can be chosen in $[a_s/2,a_s)$, and $C_1$ is continuous on $[a_s/2,a_s)\times S_a\times \R$. 
 It follows that the $u$-function satisfies 
\be\label{uas}
|u(b;z)\phi(b)^{\mp 2}+1|
\le C_2(\rho,b,\im z)\exp(-\alpha\rho|\re z|),\ \ \ \re z\to\pm\infty,
\ee
 with $C_2$  continuous on $[a_s/2,a_s)\times S_a\times \R$.
Moreover, from \eqref{u} it is clear that
\be\label{urefl}
u(b;z)u(b;-z)=1,
\ee
and, by the reflection equation I (A.6) and the conjugation relation I (A.9), we have
\be\label{umod}
|u(b;z)|=1,\ \ \ b,z\in\R.
\ee

 From \eqref{c}--\eqref{CN}, Proposition \ref{Prop:cP2ext}, and \eqref{GE}--\eqref{pkl}, we deduce that $\rE_2(b;x,y)$ is meromorphic in $x$ and $y$, with $b$-independent pole locations
\be\label{rE2ps1}
z_1-z_2=-2ia-ip_{kl},\ \ \ z=x,y,\ \ \ k,l\in\N,
\ee
and $b$-dependent poles located at
\be\label{rE2ps2}
z_1-z_2=ib+ip_{kl},\ \ z_1-z_2=2ia-ib+ip_{kl},\ \ \ z=x,y,\ \ \ k,l\in\N.
\ee
We collect further useful properties of $\rE_2$ in the following lemma. 

\begin{lemma}\label{Lemma:rE2prop}
For all $(b,x,y)\in S_a\times\C^2\times\C^2$ and $\eta\in\C$, the function $\rE_2(b;x,y)$~\eqref{rE2} satisfies
\be\label{rE2ri}
\rE_2(b;-x,-y)=u(b;x_1-x_2)u(b;y_1-y_2)\rE_2(b;x,y),
\ee
\be\label{rE2hom}
\begin{split}
\rE_2(b;x,y)=& \exp(-i\alpha\eta(y_1+y_2))\rE_2(b;(x_1+\eta,x_2+\eta),y)\\
& =\exp(-i\alpha\eta(x_1+x_2))\rE_2(b;x,(y_1+\eta,y_2+\eta)),
\end{split}
\ee
\be\label{rE2d}
\rE_2(b;x,y)=\rE_2(2a-b;y,x),
\ee
\be\label{rE2p}
\rE_2(b;\sigma x,\tau y)=(-u(b;x_1-x_2))^{|\sigma|}(-u(b;y_1-y_2))^{|\tau|}\rE_2(b;x,y),\ \ \ (\sigma,\tau)\in S_2\times S_2,
\ee
where $|\sigma|=0$  for $\sigma={\rm id}$ and $|\sigma|=1$ for~$\sigma=\sigma_{12}$. 
\end{lemma}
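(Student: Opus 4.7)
The plan is to deduce each identity by combining the corresponding invariance of $J_2$ established earlier in this subsection (reflection \eqref{J2ri}, translation \eqref{J2hom}, duality \eqref{J2drel}, and $S_2\times S_2$-symmetry \eqref{J2sym}) with the transformation rules for the prefactors $\phi(b)$, $G(ib-ia)$, and the $c$-function ratios $C_2(b;x)=c(b;x_1-x_2)$ and $C_2(2a-b;y)=c(2a-b;y_1-y_2)$. The single auxiliary identity I would isolate at the outset is the immediate consequence
\be\label{crefl}
\frac{1}{c(b;-z)}=-\frac{u(b;z)}{c(b;x_1-x_2)}\quad\text{(schematically, }1/c(b;-z)=-u(b;z)/c(b;z)\text{)}
\ee
of the definition $u(b;z)=-c(b;z)/c(b;-z)$ in~\eqref{u}; together with $u(2a-b;z)=u(b;z)$ from \eqref{phuinv}, this identity governs the entire behavior of $C_2(b;x)$ and $C_2(2a-b;y)$ under the reflections $x\to-x$, $y\to-y$ and under the transpositions in $S_2\times S_2$.

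For \eqref{rE2ri}, \eqref{J2ri} gives $J_2(b;-x,-y)=J_2(b;x,y)$, and two applications of the $c$-reflection identity show that $1/[C_2(b;-x)C_2(2a-b;-y)]$ equals $u(b;x_1-x_2)u(b;y_1-y_2)/[C_2(b;x)C_2(2a-b;y)]$, yielding the stated factor. For \eqref{rE2hom}, the $C_2$-factors are translation invariant since they depend only on $x_1-x_2$ and $y_1-y_2$, so \eqref{J2hom} transfers verbatim. For \eqref{rE2d}, inserting \eqref{J2drel} produces a prefactor $\phi(b)G(ib-ia)G(ia-ib)^2/\sqrt{a_+a_-}$; using the reflection equation I~(A.6) to reduce $G(ib-ia)G(ia-ib)$ to $1$, together with $\phi(2a-b)=\phi(b)$ and $G(i(2a-b)-ia)=G(ia-ib)$, one reads off precisely $\rE_2(2a-b;y,x)$. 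For \eqref{rE2p}, \eqref{J2sym} gives $J_2(b;\sigma x,\tau y)=J_2(b;x,y)$, and the $c$-reflection identity contributes a factor $-u(b;x_1-x_2)$ (resp.\ $-u(b;y_1-y_2)$) exactly when $\sigma=\sigma_{12}$ (resp.\ $\tau=\sigma_{12}$), which is the content of the exponents $|\sigma|,|\tau|$.

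The four identities initially hold on the holomorphy domain $\cD_2$~\eqref{cD2}, where $\rE_2$ is given by the defining ratio \eqref{rE2}. By Proposition~\ref{Prop:cP2ext} the function $J_2$, and hence $\rE_2$, extends meromorphically to $S_a\times\C^2\times\C^2$; the identities then persist on this full domain by the identity principle, their poles on each side being matched automatically by the above equalities. No serious obstacle is expected, as the argument is a bookkeeping exercise once the $J_2$-invariances of Subsection~\ref{Sec21} and the meromorphic extension of Proposition~\ref{Prop:cP2ext} are in hand; the only minor subtlety is to keep track of which of $b$ and $2a-b$ appears in each $c$-function factor when applying the $c$-reflection identity, and it is here that the $b$-involution invariance of~$u$ in~\eqref{phuinv} plays its role.
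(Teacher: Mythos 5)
Your proposal is correct and follows essentially the same route as the paper: the paper likewise reduces to real $(b,x,y)$ by global meromorphy and then derives the four identities from Propositions~\ref{Prop:J2sym} and~\ref{Prop:J2prop} together with the bookkeeping of $\phi$, $G(ib-ia)$ and the $c$-function factors via \eqref{c}, \eqref{CN}, \eqref{u} and \eqref{phuinv}. Your explicit use of $1/c(b;-z)=-u(b;z)/c(b;z)$ and of $u(2a-b;z)=u(b;z)$ is exactly the intended (if unspelled) computation.
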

\begin{proof}
By global meromorphy, we need only check these features for $(b,x,y)\in (0,2a)\times\R^2\times\R^2$. The first two then readily follow from Proposition~\ref{Prop:J2sym}, using also \eqref{c}, \eqref{CN}, \eqref{u} and \eqref{phuinv}. Likewise, the last two follow from Proposition~\ref{Prop:J2prop}.
\end{proof}

In fact, as mentioned at the end of Subsection~2.1, we also have
\be\label{E2sd}
\rE_2(b;x,y)=\rE_2(b;y,x),
\ee
but we shall not invoke this self-duality feature.

Thanks to these symmetry properties, we need only establish the $y_1-y_2\to\infty$ asymptotics of $\rE_2$ to obtain a detailed picture of its asymptotic behavior. Indeed, from \eqref{rE2p} and the $u$-asymptotics \eqref{uas} the $y_1-y_2\to-\infty$ asymptotics easily follows, and the $x_1-x_2\to\pm\infty$ asymptotics can then be found via \eqref{rE2d}. 
 
Recalling from I (2.11) the kernel function
\be\label{cK2}
\cK_2^\sharp(b;x,z)\equiv C_2(b;x)^{-1}\cS_2^\sharp(b;x,z),
\ee
it is readily seen that \eqref{J2}--\eqref{I2} and \eqref{rE2} yield the representation
\be\label{rE2rep}
\rE_2(b;x,y)=\frac{\phi(b)G(ib-ia)}{\sqrt{a_+a_-}}\frac{\exp(i\alpha y_2(x_1+x_2))}{C_2(2a-b;y)}\int_\R dz \rI_2(b;x,y,z),\ \ \ b\in S_a,\ x,y\in\R^2,
\ee
with integrand
\be\label{rI2}
\rI_2(b;x,y,z)\equiv\exp(i\alpha z(y_1-y_2))\cK_2^\sharp(b;x,z).
\ee

Assuming $x_1\ne x_2$ until further notice, we now shift the contour $\R$ up by $a-\re b/2+r$, $r\in(0,a_s)$, so that we only  meet the simple poles at
\be\label{rI2ps}
z=x_m+ia-ib/2,\ \ \ m=1,2.
\ee
(The bound I (4.5) ensures that the shift causes no problems at the contour tails.)
Introducing the multiplier
\be\label{M2def}
M_2(b;y)\equiv \frac{\phi(b)}{c(2a-b;y_1-y_2)}\rho_2(b;y),
\ee
with
\be
\rho_2(b;y)\equiv \exp(-\alpha(a-b/2)(y_1-y_2)),
\ee
and the contour
\be\label{Cb}
C_b\equiv \R+i(a-\re b/2),
\ee
we are prepared for the following lemma.

\begin{lemma}\label{Lem:aux2}
Letting $(r,b)\in(0,a_s)\times S_a$ and $x,y\in\R^2$ with $x_1\ne x_2$, we have
\begin{multline}\label{rE2rep2}
\frac{\rE_2(b;x,y)}{M_2(b;y)}\exp(-i\alpha y_2(x_1+x_2))
=\frac{1}{\rho_2(b;y)}\frac{G(ib-ia)}{\sqrt{a_+a_-}}\int_{C_b+ir} dz\,\rI_2(b;x,y,z)
\\
+\exp(i\alpha x_1(y_1-y_2))-u(b;x_2-x_1)\exp(i\alpha x_2(y_1-y_2)).
\end{multline}
\end{lemma}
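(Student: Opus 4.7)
The plan is to derive~\eqref{rE2rep2} from the integral representation~\eqref{rE2rep} by shifting the $z$-contour $\R$ upward to $C_b + ir$ and accounting for the residues of $\rI_2$ crossed in the process. To locate the relevant poles, I would first observe that the factor $1/c(b;x_1-x_2)$ in $\cK_2^\sharp$ is $z$-independent, so the $z$-pole structure of $\rI_2$ is that of $\cS_2^\sharp$. Using the pole/zero divisors of $G$ from I~Appendix~A, the numerator factors $G(x_j-z-ib/2)$ contribute poles at $z = x_j + ia - ib/2 + ika_+ + ila_-$ for $k,l\in\N$, while the denominator factors only produce poles with $\im z < 0$. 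For real $x$ with $x_1\ne x_2$ and $r\in(0,a_s)$, the only poles of $\rI_2$ in the strip $0<\im z < a - \re b/2 + r$ are therefore the two distinct simple poles $z_m \equiv x_m + ia - ib/2$, $m=1,2$.

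With the pole structure in hand, I would invoke the uniform decay bound I~(4.5) on $\cS_2^\sharp$ to justify that the vertical segments connecting $\R$ and $C_b+ir$ contribute nothing at $\re z = \pm\infty$, yielding
\be\label{plan:resshift}
\int_\R dz\,\rI_2 = \int_{C_b+ir} dz\,\rI_2 + 2\pi i\sum_{m=1,2}\Res_{z=z_m}\rI_2.
\ee
The central calculation is the evaluation of the two residues. Using the explicit value $\Res_{w=-ia}G(w) = i\sqrt{a_+a_-}/(2\pi)$ from I~Appendix~A for the singular factor $G(x_m - z - ib/2)$, together with the reflection equation $G(-w) = 1/G(w)$ to rewrite the remaining $G$-factors in terms of $c$-functions, a short simplification yields
\be
\Res_{z=z_m}\cS_2^\sharp(b;x,z) = -\frac{i\sqrt{a_+a_-}}{2\pi}\,G(ia-ib)\,c\bigl(b;(-1)^{m-1}(x_1-x_2)\bigr).
\ee
Dividing by $c(b;x_1-x_2)$ and invoking the identity $c(b;x_2-x_1)/c(b;x_1-x_2) = -u(b;x_2-x_1)$, which is immediate from~\eqref{u}, the full residue sum in \eqref{plan:resshift} equals
\be
-\frac{i\sqrt{a_+a_-}}{2\pi}\,G(ia-ib)\,\rho_2(b;y)\,\bigl[\exp(i\alpha x_1(y_1-y_2)) - u(b;x_2-x_1)\exp(i\alpha x_2(y_1-y_2))\bigr],
\ee
the $\rho_2(b;y)$ factor arising from evaluating $\exp(i\alpha z(y_1-y_2))$ at $z_m$.

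Finally, I would substitute \eqref{plan:resshift} into \eqref{rE2rep}, divide through by $M_2(b;y)$, and multiply by $\exp(-i\alpha y_2(x_1+x_2))$. The factors $\phi(b)$ and $c(2a-b;y_1-y_2)$ cancel against those of $M_2$, the first term of \eqref{plan:resshift} delivers the integral on the right-hand side of \eqref{rE2rep2}, and the residue term simplifies via $G(ib-ia)G(ia-ib) = 1$ together with the numerical identity $2\pi i \cdot \bigl(-i\sqrt{a_+a_-}/(2\pi)\bigr)/\sqrt{a_+a_-} = 1$ to reproduce the two explicit exponentials, as required. The main obstacle is purely bookkeeping: tracking signs and phases carefully so that the reflection equation, the $G$-residue formula, and the $c/u$ identity combine to produce precisely the coefficient $-u(b;x_2-x_1)$ with no spurious constants. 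No further analytic input is needed beyond the pole structure, the tail decay bound, and the explicit value of the $G$-residue at $-ia$.
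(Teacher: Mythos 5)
Your proposal is correct and follows the paper's own argument essentially verbatim: shift the contour in \eqref{rE2rep} up to $C_b+ir$, use the bound I~(4.5) to control the tails, pick up the two simple poles at $z=x_m+ia-ib/2$ via the $G$-residue at $-ia$, and simplify with the reflection equation and the identity $c(b;x_2-x_1)/c(b;x_1-x_2)=-u(b;x_2-x_1)$. The residue values and the final bookkeeping against $M_2(b;y)$ all check out against the paper's computation.
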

\begin{proof}
As just detailed, we shift contours in~\eqref{rE2rep}.
 Using the formula I~(A.13) for the residue of $G(z)$ at its simple pole $z=-ia$, we obtain\begin{multline}
2\pi i\, \Res\ \rI_2(x,y,z)\arrowvert_{z=x_m+ia-ib/2}\\
=\rho_2(b;y)\frac{\sqrt{a_+a_-}}{G(ib-ia)}\prod_{j<m}(-u(x_m-x_j))\cdot\exp(i\alpha x_m(y_1-y_2)).
\end{multline}
From this we easily get~\eqref{rE2rep2}.
\end{proof}

Even though we derived the representation~\eqref{rE2rep2} for $x_1\ne x_2$,
it is clearly valid for~$x_1=x_2$, too.
In point of fact, both $\rE_2(b;x,y)$ and $\rE_2^{{\rm as}}(b;x,y)$ (given by~\eqref{rE2as}) vanish for $x_1=x_2$. Indeed, recalling~\eqref{c} and~\eqref{u}, together with the simple zero/pole of $G(z)$ for $z=ia$/$z=-ia$, we obtain  
\be
1/c(b;0)=0,\ \ \ u(b;0)=1,\ \ \ b\in S_a,
\ee
from which this zero feature is plain. 
 
For $z$ on the contour $C_b+ir$, the integrand $\rI_2$~\eqref{rI2} decays exponentially with rate $\alpha(a-\re b/2+r)$ as $y_1-y_2\to\infty$. Moreover, from~\eqref{phuinv} and~\eqref{cas}   we get
\be\label{M2as}
M_2(b;y)=1+O(\exp(-\alpha \rho (y_1-y_2))),\ \ \ \rho\in [a_s/2,a_s),\ \ \  y_2-y_2\to\infty.
\ee
Combining these two observations with the representation \eqref{rE2rep2}, we are led to expect that the dominant asymptotics of $\rE_2$ for $y_1-y_2\to\infty$ is given by the function~$\rE_2^{{\rm as}}$ defined in~\eqref{rE2as}. This expectation is borne out and improved by the following proposition.

\begin{proposition}\label{Prop:rE2as}
Letting $(r,b)\in[a_s/2,a_s)\times S_a$, we have
\be\label{EEas}
\left|\left(\rE_2-\rE_2^{{\rm as}}\right)(b;x,y)\right|< C(r,b)(1+|x_1-x_2|)\exp(-\alpha r(y_1-y_2)),\ \ x,y\in\R^2,\ \ y_1-y_2\geq 0,
\ee
where $C$ is continuous on $[a_s/2,a_s)\times S_a$.
\end{proposition}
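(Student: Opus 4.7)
The plan is to use the contour-shifted representation of Lemma~\ref{Lem:aux2} to isolate $\rE_2^{{\rm as}}$ as the leading asymptotic contribution and then estimate what remains. Rearranging~\eqref{rE2rep2} gives
\begin{equation*}
\rE_2(b;x,y)-M_2(b;y)\rE_2^{{\rm as}}(b;x,y)=\frac{M_2(b;y)\exp(i\alpha y_2(x_1+x_2))}{\rho_2(b;y)}\cdot\frac{G(ib-ia)}{\sqrt{a_+a_-}}\int_{C_b+ir}dz\,\rI_2(b;x,y,z).
\end{equation*}
I would therefore decompose $\rE_2-\rE_2^{{\rm as}}=(\rE_2-M_2\rE_2^{{\rm as}})+(M_2-1)\rE_2^{{\rm as}}$ and bound the two summands separately.

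The second summand is the easy one. By~\eqref{M2as} with $\rho=r$ one has $|M_2(b;y)-1|\le K_1(r,b)\exp(-\alpha r(y_1-y_2))$ for $y_1\ge y_2$, and $|\rE_2^{{\rm as}}(b;x,y)|\le 1+\sup_{t\in\R}|u(b;t)|$ is majorized by a continuous function of $b$ in view of~\eqref{uas} and continuity of $u(b;\cdot)$ on $\R$. Multiplication yields a bound of the form $K(r,b)\exp(-\alpha r(y_1-y_2))$, already without any $|x_1-x_2|$ factor.

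For the first summand, on $z=u+i(a-\re b/2+r)$ the identity $|\exp(i\alpha z(y_1-y_2))/\rho_2(b;y)|=\exp(-\alpha r(y_1-y_2))$ absorbs the $y$-dependence into the desired decay factor, while $|M_2(b;y)|$ is uniformly bounded on $y_1\ge y_2$ and $|\exp(i\alpha y_2(x_1+x_2))|=1$. Matters thus reduce to bounding, uniformly in $t\equiv y_1-y_2\ge 0$,
\begin{equation*}
\left|\int_\R du\,\cK_2^\sharp\bigl(b;x,u+i(a-\re b/2+r)\bigr)\exp(i\alpha ut)\right|
\end{equation*}
by $C_0(r,b)(1+|x_1-x_2|)$. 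Since this is the modulus of a Fourier transform, it is majorized by the $L^1$-norm of the integrand. The $G$-asymptotics I~(A.14)--(A.16), applied to each factor $G(x_j-z-ib/2)/G(x_j-z+ib/2)$ on the shifted contour, yield exponential decay at rate $\exp(-\alpha\re b|u-x_j|/2)$; combining the two factors via $|u-x_1|+|u-x_2|\ge |x_1-x_2|$ (equality on the interval between $x_1$ and $x_2$) then gives the $L^1$-bound $C_1(r,b)(1+|x_1-x_2|)\exp(-\alpha\re b|x_1-x_2|/2)$ for the $\cS_2^\sharp$-product. Finally $|1/c(b;x_1-x_2)|\le C_2(b)\exp(\alpha\re b|x_1-x_2|/2)$ by~\eqref{cas}, and multiplication by $C_2(b;x)^{-1}$ cancels the exponential factor, producing the required bound.

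The main obstacle is the delicate cancellation in the last step: neither the exponential growth of $|1/c(b;x_1-x_2)|$ nor the $L^1$-norm of the $\cS_2^\sharp$-product is individually controlled by a polynomial in $|x_1-x_2|$, and only their matched product is. A careless or mismatched estimate loses exponentially in $|x_1-x_2|$. A secondary technicality is verifying uniformity of the $G$-asymptotic constants in $r\in[a_s/2,a_s)$ and $b\in S_a$, which yields the required continuity of $C(r,b)$.
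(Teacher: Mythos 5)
Your proposal is correct and follows essentially the same route as the paper: both reduce the claim via Lemma~\ref{Lem:aux2} and~\eqref{M2as} to bounding the shifted contour integral, and both exploit the same cancellation between the $\exp(-\gamma|x_1-x_2|)$ decay of the $z$-integral (with $\gamma=\alpha\re b/2$) and the matching exponential growth of $C_2(b;x)^{-1}$ from~\eqref{cas}. The only cosmetic difference is that the paper evaluates $\int_\R dz/\prod_j\cosh(\gamma(z-x_j))$ exactly by residues as $2(x_1-x_2)/\sinh(\gamma(x_1-x_2))$, whereas you estimate the $L^1$-norm directly via $|u-x_1|+|u-x_2|\ge|x_1-x_2|$; these give equivalent bounds.
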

\begin{proof}
In view of Lemma~\ref{Lem:aux2} and~\eqref{M2as}, it suffices to show 
\be\label{db}
\left|\int_{C_b+ir}dz\rI_2(b;x,y,z)\right|\le C'(r,b)|\rho_2(b;y)||x_1-x_2|\exp(-\alpha r(y_1-y_2)),
\ee
 for all $x,y\in\R^2$ and  $y_1-y_2\geq 0$, where $C'$ is continuous on $[a_s/2,a_s)\times S_a$. (Indeed, combining~\eqref{rE2as}, \eqref{u} and~\eqref{uas}, it is clear that~$|\rE_2^{{\rm as}}
 (b;x,y)|$ is majorized by a continuous function~$c(b)$ for all $(b,x,y)\in S_a\times \R^2\times\R^2$.) Changing variables $z\to z+i(a-b/2+r)$, we rewrite the integral as
\begin{multline}
\rho_2(b;y)\exp(-\alpha r(y_1-y_2))C_2(b;x)^{-1}\\
\times\int_\R dz\exp(i\alpha z(y_1-y_2))\prod_{j=1}^2\frac{G(z+ir-x_j+ia-ib)}{G(z+ir-x_j+ia)}.
\end{multline}
Note that we do not encounter the poles of the $G$-ratios so long as $r\in(0,a_s)$. Furthermore, from \eqref{c} and \eqref{cas}   we obtain the estimate 
\be\label{Gratb}
\left|\frac{G(p+ir+ia-ib)}{G(p+ir+ia)}\right|\leq c(r,b)/\cosh(\ga p),\ \ \ (p,r,b)\in\R\times (0,a_s)\times S_a,
\ee
where
\be\label{gam}
\gamma\equiv \alpha \re b/2=\frac{\pi\re b}{a_{+}a_-},
\ee
and where $c(r,b)$ is continuous on $(0,a_s)\times S_a$. It follows that we have
\be\label{rI2est}
\begin{split}
\left|\int_{C_b+ir}dz\,\rI_2(b;x,y,z)\right|&\leq c(r,b)^2|\rho_2(b;y)|\exp(-\alpha r(y_1-y_2))\\
&\quad\times |C_2(b;x)|^{-1}\int_\R \frac{dz}{\prod_{j=1}^2\cosh(\ga(z-x_j))}.
\end{split}
\ee
By a standard residue calculation, we find that the latter integral equals
\be\label{hypint}
2\frac{x_1-x_2}{\sinh(\ga(x_1-x_2))}.
\ee
(Alternatively, this evaluation can be deduced  from I~Lemma C.1 with $N=1$.) Combining the simple zero of $C_2(x)^{-1}$   along $x_1=x_2$  with the $c$-function asymptotics \eqref{cas}, this yields a bound $|C_2(x)^{-1}/\sinh(\ga(x_1-x_2))|\leq c_1(b)$, with $c_1$ continuous on $S_a$. Hence the desired majorization~\eqref{db} results.
 \end{proof}

In order to generalize the above line of reasoning to the $N=3$ case, we need to obtain a uniform bound on $\rE_2(x,y)$ for $(x,y)\in\C^2\times\R^2$ such that
\be\label{xyres}
v_1-v_2\in(-a_s,0],\ \ \ y_1-y_2\geq 0,\ \ \ \ v=\im x.
\ee
From the pole locations \eqref{rE2ps1}--\eqref{rE2ps2}, it is clear that such a bound is compatible with the poles of $\rE_2(x,y)$. 
In fact, since $\rE_2(x,y)$ has no pole for $v_1-v_2\in (-2a,0]$, one might expect $a_s\to 2a$ in~\eqref{xyres}. However, we are unable to obtain a bound for this larger interval.

 The most obvious starting point would seem to be the representation \eqref{rE2rep}. Now \eqref{CN} and \eqref{cas} entail that the factor $C_2(2a-b;y)^{-1}$ is $O(\exp(\alpha(a-\re b/2)(y_1-y_2))$ as $y_1-y_2\to\infty$. In order to retain boundedness, we need a corresponding damping factor coming from the integral in \eqref{rE2rep}. This can be obtained by shifting the contour $\R$ up to~$C_b$. However, such a shift is only allowed as long as no poles are met. We have already observed that the nearest poles of~$\rI_2$ are located at \eqref{rI2ps}, so this is never the case. As a consequence, we cannot obtain the desired decay factor in any `simple' way.

As it turns out, the representation \eqref{rE2rep2} yields a much better starting point, even though we then have one more term to bound.
It is clear from \eqref{u} and the locations of the $G$-poles I (A.11) that $u(b;x_2-x_1)$ is holomorphic for $-a_s<v_1-v_2<m(\re b)$, where
\be\label{mas}
m(d)\equiv \min(2a-d,d),\ \ \ d\in(0,2a).
\ee
 Using also \eqref{uas} and \eqref{M2as}, we  deduce that for all $(x,y)\in\C^2\times\R^2$ satisfying \eqref{xyres} we have
\be\label{MEb}
|M_2(b;y)\rE_2^{as}(b;x,y)|\leq c(v_1-v_2,b)\exp(-\alpha(y_1v_1+y_2v_2)),
\ee
  where $c$ is continuous on $(-a_s,0]\times S_a$.

Note that $c(v_1-v_2,b)\to\infty$ as $v_1-v_2\downarrow -a_s$, since we then approach the pole of $u(b;x_2-x_1)$ at $x_1-x_2=-ia_s$. Because we prove the bound~\eqref{rE2vb} in the following proposition by using the representation~\eqref{rE2rep2}, we cannot handle the interval $v_1-v_2\in (-2a,-a_s]$.

\begin{proposition}\label{Prop:rE2b}
Letting $(\de,b)\in (0,a_s]\times S_a$, we have
\be\label{rE2vb}
|\rE_2(b;x,y)|<C(\de,b)(1+|\re(x_1-x_2)|)\exp(-\alpha(y_1v_1+y_2v_2))
\ee
for all $(x,y)\in\C^2\times\R^2$ such that
\be\label{N2xyas}
v_1-v_2\in[-a_s+\de,0],\ \ \ y_1-y_2\geq 0,\ \ \ v=\im x,
\ee
where $C$ is continuous on $(0,a_s]\times S_a$. Furthermore, for all $(b,x,y)\in S_a\times \R^2\times \R^2$ we have
\be\label{rE2bex}
|\rE_2(b;x,y)|\le c(b)|x_1-x_2|(1+|y_1-y_2|),
\ee
where $c$ is continuous on $S_a$.
\end{proposition}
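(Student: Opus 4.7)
Both bounds rest on the representation \eqref{rE2rep2} of Lemma~\ref{Lem:aux2}. For complex $x$ with $v_1-v_2\in[-a_s+\delta,0]$ the identity extends by analytic continuation (using that $u(b;x_2-x_1)$ is holomorphic on the strip $v_1-v_2\in(-a_s,m(\re b))$ noted in the paragraph preceding \eqref{MEb} and that both sides are meromorphic in $x$). Choosing the shift parameter $r\equiv v_2+\delta/2$ places the contour $\R+i(a-\re b/2+r)$ above the two simple poles $z=x_m+ia-ib/2$ ($m=1,2$) while keeping $r-v_j\in[\delta/2,a_s-\delta/2]\subset(0,a_s)$, precisely the range required by the $G$-ratio estimate \eqref{Gratb}.

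For \eqref{rE2vb} I would bound the three terms on the right-hand side of \eqref{rE2rep2} (after multiplying through by $M_2(b;y)\exp(i\alpha y_2(x_1+x_2))$) separately. The two plane-wave terms reduce to $M_2\exp(i\alpha(x_jy_1+x_{3-j}y_2))$ ($j=1,2$), possibly multiplied by $-u(b;x_2-x_1)$; the elementary inequality $v_2y_1+v_1y_2\geq v_1y_1+v_2y_2$ (valid for $v_1\leq v_2$ and $y_1\geq y_2$) then dominates both moduli by $\exp(-\alpha(v_1y_1+v_2y_2))$ times uniformly bounded factors ($|M_2|$ is bounded for $y_1\geq y_2$ by \eqref{M2as} and continuity, $|u|$ by \eqref{uas}). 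For the integral term, applying \eqref{Gratb} with $r\to r-v_j$ and the explicit integral \eqref{hypint} gives
\begin{equation*}
\left|\int_{C_b+ir}dz\,\rI_2(b;x,y,z)\right|\leq c(\delta,b)|\rho_2(b;y)|\exp(-\alpha r(y_1-y_2))\left|\frac{2(u_1-u_2)}{C_2(b;x)\sinh(\gamma(x_1-x_2))}\right|,
\end{equation*}
with $u_j\equiv\re x_j$. The crux is the uniform estimate
\begin{equation*}
\left|\frac{w_r}{c(b;w)\sinh(\gamma w)}\right|\leq C(\delta,b)(1+|w_r|),\quad w_r\equiv\re w,\ \im w\in[-a_s+\delta,0];
\end{equation*}
near $w_r=0$ it follows from the simple zero of $c(b;w)^{-1}$ at $w=0$ together with the boundedness of $w/\sinh(\gamma w)$, while for $|w_r|\to\infty$ the $c$-function asymptotics \eqref{cas} produce exponential growth of $|c(b;w)^{-1}|$ that cancels exactly against the decay of $|1/\sinh(\gamma w)|$, courtesy of the identity $\gamma=\alpha\re b/2$. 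The algebraic identity $-y_2(v_1+v_2)-r(y_1-y_2)=-(v_1y_1+v_2y_2)-(y_1-y_2)((v_2-v_1)+\delta/2)$ then shows that the integral-term exponential is dominated by $\exp(-\alpha(v_1y_1+v_2y_2))$ for $y_1\geq y_2$, yielding \eqref{rE2vb}.

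For \eqref{rE2bex} I specialise to real $x,y$ and assume $y_1\geq y_2$, the opposite case following via the permutation relation \eqref{rE2p} (whose multiplier $-u(b;y_1-y_2)$ is bounded above and below for real arguments). The integral term is bounded exactly as above by $c(b)|x_1-x_2|\exp(-\alpha r(y_1-y_2))\leq c(b)|x_1-x_2|$. For the plane-wave part, factoring out $\exp(i\alpha x_2(y_1-y_2))$ reduces the task to bounding $f(w)\equiv\exp(i\alpha wd)-u(b;-w)$, where $w\equiv x_1-x_2$ and $d\equiv y_1-y_2$. Since $u(b;0)=1$, one has $f(0)=0$, so the mean value theorem yields
\begin{equation*}
|f(w)|\leq|w|\sup_{|\xi|\leq|w|}|f'(\xi)|\leq|w|\bigl(\alpha|d|+\sup_{\xi\in\R}|u'(b;\xi)|\bigr)\leq c(b)|w|(1+|d|),
\end{equation*}
the last supremum being finite (with a bound continuous in $b$) thanks to \eqref{uas}. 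Combined with the bounded prefactor $|M_2|$, this produces the plane-wave estimate $\leq c(b)|x_1-x_2|(1+|y_1-y_2|)$, which dominates the integral contribution and gives \eqref{rE2bex}.

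The principal technical hurdle is the uniform cancellation in $|w_r/[c(b;w)\sinh(\gamma w)]|$: a priori one might not expect the two exponential factors to match so perfectly, but $\gamma=\alpha\re b/2$ from \eqref{gam} makes the cancellation exact. A subsidiary subtlety is the analytic continuation of \eqref{rE2rep2} to complex $x$, which for large $\re b$ is best viewed at the level of the identity itself rather than its contour-shift derivation, both sides being meromorphic in $x$ with matching pole structure.
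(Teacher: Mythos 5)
Your argument is correct and follows the paper's own proof in all essentials: both assertions are read off from the representation \eqref{rE2rep2} continued to the strip $v_1-v_2\in[-a_s+\de,0]$, the integral term is controlled via \eqref{Gratb}, the evaluation \eqref{hypint}, and the exponential cancellation between $|c(b;w)^{-1}|\sim C\exp(\gamma|\re w|)$ and the reciprocal hyperbolic sine, while the plane-wave terms are handled by the elementary exponential inequality resp.\ the mean value theorem at $u(b;0)=1$ (the paper additionally extracts a factor $\tanh(y_1-y_2)$ from $M_2$, but your weaker bound $|x_1-x_2|(1+|y_1-y_2|)$ already suffices for \eqref{rE2bex}). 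The one correction needed is that \eqref{hypint} produces a hyperbolic sine of the \emph{real} argument, so your displayed integral bound and the ``crux'' estimate should read $\sinh(\gamma\re(x_1-x_2))$ and $\sinh(\gamma w_r)$ rather than $\sinh(\gamma(x_1-x_2))$ and $\sinh(\gamma w)$ --- as literally written the denominator can vanish for $\im w\in[-a_s+\de,0)$ whenever $a_+a_-/\re b\le a_s-\de$ --- and with this repaired the cancellation argument is exactly the paper's.
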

\begin{proof}
Choosing first $x\in\R^2$, we begin by rewriting the integral of $\rI_2$  along the $z$-contour $C_b+ir$ in~\eqref{rE2rep2}. Letting $z\to z+x_1+i(a-b/2+r)$, we arrive at 
\begin{multline}
\label{rI2int}
\int_{C_b+ir}dz\,\rI_2(x,y,z)=\rho_2(y)\exp(-\alpha(r-ix_1)(y_1-y_2))C_2(x)^{-1}\\
\times\int_\R dz\exp(i\alpha z(y_1-y_2))\frac{G(z+ir+ia-ib)}{G(z+ir+ia)}\frac{G(z+ir+x_1-x_2+ia-ib)}{G(z+ir+x_1-x_2+ia)}.
\end{multline}
As long as $r\in(0,a_s)$, we stay clear of the poles of the two $G$-ratios. However, when allowing $v_1-v_2< 0$, we must also ensure 
\be
0<r+v_1-v_2<a_s,
\ee
so as not to encounter the poles of the right $G$-ratio for $z+ir+x_1-x_2=0,a_s$. In particular, we can allow any $x\in\C^2$ satisfying $v_1-v_2\in(-a_s,0]$ when we choose (say)
\be
v_1-v_2=-a_s+\de,\ \ \ r=a_s-\de/2,\ \ \ \de\in(0,a_s] . 
\ee
 
The most straightforward way to bound the integral on the right-hand side of \eqref{rI2int} is to estimate the $y$-dependent exponential factor away. Invoking the bound \eqref{Gratb}, this readily yields the estimate 
\begin{multline}\label{rI2intest1}
\left|\int_{C_b+ir}dz\rI_2(x,y,z)\right|\leq c_1(\de,b)^2|\rho_2(y)|\exp(-\alpha(r+v_1)(y_1-y_2)) \\
\times |c(b;x_1-x_2)|^{-1}\int_\R \frac{dz}{\cosh(\ga z)\cosh(\ga(z+\re (x_1-x_2)))},
\end{multline}
with $c_1(\de,b)$ continuous on $(0,a_s]\times S_a$.  We met the latter integral before, cf.~\eqref{rI2est} and~\eqref{hypint}, whence we infer it equals
\be
2\frac{\re(x_1-x_2)}{\sinh(\ga\re(x_1-x_2))}.
\ee

Now $c(b;x_1-x_2)^{-1}$ is regular for $-2a<v_1-v_2<\re b$, vanishes for $x_1-x_2=0$, and has asymptotics
\be\label{crepest}
|c(b;x_1-x_2)^{-1}|\sim C(b)\exp(\gamma |\re (x_1-x_2)|),\ \ \ |\re (x_1-x_2)|\to\infty,
\ee
with $C(b)$ continuous on $S_a$, cf.~\eqref{cas}. Hence we obtain
\be\label{rI2intest2}
\left|\int_{C_b+ir}dz\rI_2(x,y,z)\right|\leq C_1(\de,b)|\rho_2(y)|(1+|\re(x_1-x_2)|)\exp(-\alpha(r+v_1)(y_1-y_2)).
\ee
Combining this with~Lemma~\ref{Lem:aux2}, \eqref{M2as} and \eqref{MEb}, the first assertion now follows.   

To prove the second one, we may restrict attention to the case $y_1-y_2\ge 0$. (Indeed, we can invoke~\eqref{rE2ri} and boundedness of $u(b;z)$ for $(b,z)\in S_a\times\R$ to handle $y_1-y_2<0$.) We can now proceed as before, with $v_1=v_2=0$. Then we also get $\re (x_1-x_2)\to x_1-x_2$ in~\eqref{rI2intest1}--\eqref{crepest}, so in~\eqref{rI2intest2} we may replace the factor $1+|\re (x_1-x_2)|$ by $|x_1-x_2|$. Hence it suffices to prove (cf.~\eqref{rE2rep2})
\be\label{MEest}
|M_2(b;y)\rE_2^{{\rm as}}(b;x,y)|\le c_1(b)(x_1-x_2)(y_1-y_2),\ \ \ x_1-x_2\ge 0,\ \ y_1-y_2\ge 0.
\ee

Recalling \eqref{M2def} and \eqref{rE2as}, we see that \eqref{MEest} amounts to a bound of the form
\be\label{bF}
\Big| \frac{\exp(-\alpha (a-b/2)p)}{c(2a-b;p)}F(b;q,p)\Big| \le c_2(b)qp,\ \ \ q,p\ge 0,
\ee
where
\be\label{Fdef}
F(b;q,p)\equiv \exp(i\alpha qp/2)-u(b;-q)\exp(-i\alpha qp/2).
\ee
Now from \eqref{cas} we have
\be\label{thest}
\Big| \frac{\exp(-\alpha (a-b/2)p)}{c(2a-b;p)}\Big|\le c_3(b)\tanh (p),\ \ \ p\ge 0.
\ee
Also, from $u(b;0)=1$ and the mean value theorem we infer
\be
\re F(q,p)=q(\partial_q\re F)(\theta_1(q),p),\ \ \im F(q,p)=q(\partial_q\im F)(\theta_2(q),p), \ee
where $\theta_j(q)\in [0,q]$, $ j=1,2$.
This readily yields an estimate 
\be
|F(b;q,p)|\le c_4(b)q(1+p),\ \ q,p\ge 0.
\ee
Combining it with \eqref{thest}, we obtain~\eqref{bF}, so that~\eqref{rE2bex} follows.
\end{proof}

The reader may well ask whether the factor $(y_1-y_2)$ in~\eqref{MEest} is necessary, since $F(q,p)$ is obviously bounded. Its necessity can be gleaned from the special cases
\be
F(a_{\de};q,p) =2i \sin(\alpha qp/2), \ \ \de=+,-.
\ee
More precisely, we need the factor $|x_1-x_2|$ in the bound~\eqref{rE2bex} to push through the proof of Theorem~\ref{Thm:rE3as}, so we cannot bound the left-hand side of~\eqref{MEest} simply by a constant, cf.~\eqref{conest}. (To be sure, we believe that $\rE_2(b;x,y)$ with $b\in S_a$ fixed is bounded on $\R^2\times\R^2$, but we have not proved this.)

\section{The step from $N=2$ to $N=3$}\label{Sec3}

\subsection{Invariance properties and a duality relation}\label{Sec31}
We begin this subsection by obtaining the counterpart of Proposition \ref{Prop:J2sym}.
 
\begin{proposition}\label{Prop:J3sym}
For all $(b,x,y)\in\cD_3$~\eqref{cD3} and $\eta\in\C$, we have
\be\label{J3ri}
J_3(b;x,y)=J_3(b;-x,-y),
\ee 
\be\label{J3hom}
\begin{split}
J_3(b;x,y)=& \exp(-i\alpha\eta(y_1+y_2+y_3))J_3(b;(x_1+\eta,x_2+\eta,x_3+\eta),y)\\
& =\exp(-i\alpha\eta(x_1+x_2+x_3))J_3(b;x,(y_1+\eta,y_2+\eta,y_3+\eta)) .
\end{split}
\ee
\end{proposition}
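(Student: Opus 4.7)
The plan is to mimic the proof of Proposition~\ref{Prop:J2sym}: establish the identities first for real $x$, $y$ (and real $\eta$) by direct manipulation of the two integral representations~\eqref{J3} and~\eqref{J3cm}, and then invoke the holomorphy of $J_3$ on $\cD_3$ to lift the restrictions by analytic continuation.

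For the reflection invariance~\eqref{J3ri}, I would start from the defining representation~\eqref{J3} and first check that the integrand~$I_3$~\eqref{I3} satisfies
\be
I_3(b;-x,-y,-z)=I_3(b;x,y,z).
\ee
Each of the three building blocks has the required transformation: the factor $\cS^\sharp_3(b;-x,-z)$ reduces to $\cS^\sharp_3(b;x,z)$ via the reflection equation I~(A.6), the weight $W_2(b;-z)$ equals $W_2(b;z)$ by inspection of~\eqref{W2} together with~\eqref{c}, and the $J_2$-factor is invariant under simultaneous sign flip of both of its arguments by~\eqref{J2ri}. Substituting $z\to -z$ in~\eqref{J3} then sends the chamber $G_2$ to its mirror image $G_2'\equiv\{w\in\R^2\mid w_1<w_2\}$, giving
\be
J_3(b;-x,-y)=\exp(i\alpha y_3(x_1+x_2+x_3))\int_{G_2'}dw\,I_3(b;x,y,w).
\ee
To replace $G_2'$ by $G_2$, I note that $I_3(b;x,y,w)$ is symmetric in $w_1\leftrightarrow w_2$: the factors $\cS^\sharp_3$ and $W_2$ are manifestly $S_2$-invariant in their $z$-argument, and the $J_2$-factor is $S_2$-invariant in its first argument by~\eqref{J2sym}. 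Hence the $w$-integral over $G_2'$ equals the one over $G_2$, which is exactly the original representation~\eqref{J3}, proving~\eqref{J3ri}.

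For the homogeneity property~\eqref{J3hom}, the natural starting point is the center-of-mass representation~\eqref{J3cm}. Under $x\mapsto (x_1+\eta,x_2+\eta,x_3+\eta)$ with $\eta\in\R$, the centered coordinates $\tilde{x}_j$ defined in~\eqref{XY} are unchanged while $X_3\mapsto X_3+\eta$, and $y$ (hence $Y_3$, $y_1-y_3$, $y_2-y_3$) is untouched. Thus the entire integrand on the right-hand side of~\eqref{J3cm} is preserved and only the prefactor picks up a multiplicative factor $\exp(3i\alpha \eta Y_3)=\exp(i\alpha\eta(y_1+y_2+y_3))$, which yields the first equality in~\eqref{J3hom}. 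The second equality is obtained in the same way: translating $y$ by $\eta(1,1,1)$ leaves the differences $y_1-y_3$, $y_2-y_3$ (and hence the $J_2$-factor), as well as $\tilde{x}$, invariant, so only the phase $\exp(3i\alpha X_3 Y_3)$ is modified, producing the prefactor $\exp(i\alpha\eta(x_1+x_2+x_3))$.

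At this stage both identities are established on $(b,x,y,\eta)\in S_a\times\R^3\times\R^3\times\R$. Since $J_3$ is holomorphic on $\cD_3$~\eqref{cD3} by I~Proposition~5.4 and both sides of~\eqref{J3ri} and~\eqref{J3hom} are entire in $\eta$ and holomorphic in $(x,y)$ on the (connected) domain $\cD_3$, the identities extend to all $(b,x,y)\in\cD_3$ and all $\eta\in\C$ by analytic continuation. I do not anticipate a genuinely hard step: the only real care is needed in tracking how $z\to -z$ interacts with the chamber $G_2$ in the proof of~\eqref{J3ri}, and this is handled cleanly by the symmetry of $I_3$ under $z_1\leftrightarrow z_2$.
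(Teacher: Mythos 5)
Your proposal is correct and follows essentially the same route as the paper: establish $I_3(-x,-y,-z)=I_3(x,y,z)$ from the reflection equation, the evenness of $W_2$, and~\eqref{J2ri}, deduce~\eqref{J3ri} by the substitution $z\to-z$, obtain~\eqref{J3hom} from the center-of-mass representation~\eqref{J3cm}, and finish by analytic continuation. Your explicit treatment of the flipped Weyl chamber via the $z_1\leftrightarrow z_2$ symmetry of $I_3$ is a detail the paper leaves implicit, and it is handled correctly.
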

\begin{proof}
Following the proof of Proposition \ref{Prop:J2sym}, we obtain, using \eqref{I3}--\eqref{cS3}, \eqref{J2ri} and the reflection equation I~(A.6),
\be
I_3(-x,-y,-z)=I_3(x,y,z).
\ee
Hence \eqref{J3ri} follows as before. The alternative representation~\eqref{J3cm} entails~\eqref{J3hom}. 
\end{proof}
 
We continue by deducing a new representation for $J_3$ that is related to \eqref{J3} by the involution $(b,x,y)\mapsto(2a-b,y,x)$. Aiming to follow the flow chart of Subsection \ref{Sec21}, we first need  a suitable generalization of the Plancherel relation \eqref{Plancherel}. This involves a generalized Fourier transform with kernel
\begin{multline}\label{rF2}
\rF_2(b;x,y)\equiv (a_+a_-)^{-1/2}G(ib-ia)W_2(b;x)^{1/2}J_2(b;x,y)W_2(2a-b;y)^{1/2},\\
 b\in(0,2a),\ \ x,y\in G_2.
\end{multline}
(Here and below, we choose positive square roots.) For future reference, we note the symmetry properties
\be\label{rF2s}
\rF_2(b;-x,-y)=\rF_2(b;x,y),\ \ \ \rF_2(b;x,y)=\rF_2(2a-b;y,x),
\ee
cf.~Propositions \ref{Prop:J2sym}--\ref{Prop:J2prop}. (Actually $\rF_2(b;x,y)$ is self-dual, too; this can be readily checked by using~\eqref{J2sd}.)

By specialization of results in \cite{R03} (cf.~also Subsection 2.2 in \cite{R11}), we inferred in Section 3 of \cite{HR15} that the operator
\be
\cF_2(b): \cC_2\equiv C_0^\infty(G_2)\subset L^2(G_2)\to L^2(G_2),\ \ \ b\in(0,2a),
\ee
defined by
\be
(\cF_2(b)\psi)(x)\equiv \frac{1}{a_+a_-}\int_{G_2}\rF_2(b;x,y)\psi(y)dy,\ \ \ \psi\in\cC_2,\ \ x\in G_2,
\ee
extends to a unitary operator. Observing that (cf.~I (A.6), (A.9))
\be\label{rF2conj}
\overline{\rF_2(b;x,y)}=\rF_2(b;x,-y),\ \ \ b\in(0,2a),\ \ x,y\in G_2,
\ee
we thus arrive at the generalized Plancherel relation
\be\label{cF2plan}
\int_{G_2}dzf(z)g(z)=\int_{G_2}dp (\cF_2 f)(p)(\cF_2 g)(-p),\ \ \ f,g\in L^2(G_2)\cap L^1(G_2).
\ee

Restriction attention to $b\in(0,2a)$ at first, we choose
\be
f(z)=\cS_2(b;(x_1,x_2),(z_1,z_2))W_2(b;z)^{1/2},
\ee
and
\be
\begin{split}
g(z)&=(a_+a_-)^{1/2}G(ia-ib)\exp(i\alpha y_3(x_1+x_2+x_3))\\
&\quad \times W_2(2a-b;(y_1,y_2))^{-1/2}\rF_2(b;z,(y_1-y_3,y_2-y_3))\prod_{k=1}^2\frac{G(x_3-z_k-ib/2)}{G(x_3-z_k+ib/2)},
\end{split}
\ee
cf.~\eqref{cS2}. Then it follows from \eqref{J3} and \eqref{rF2} that $J_3(b;x,y)$ is given by the left-hand side of \eqref{cF2plan}.   The crux is now that the $\cF_2$-transforms of the functions $f$ and $g$ chosen above can be readily computed by using results from \cite{HR15}. We proceed to embark on this.

From Eq.~(3.18) in \cite{HR15} we recall the integral equation
\be
\int_{G_2}dzW_2(b;t)^{1/2}\cS_2(b;t,z)W_2(b;z)^{1/2}\rF_2(b;z,p)=\mu(b;p)\rF_2(b;t,p),\ \ \ t,p\in G_2,
\ee
with eigenvalue
\be
\mu(b;p)\equiv a_+a_-G(ia-ib)^2\prod_{j=1}^2\prod_{\de=+,-}G(\de p_j-ia+ib/2).
\ee
(This result can be regarded as the $N=2$ counterpart of the $N=1$ formula~\eqref{ftr}.)
Clearly, this implies 
\be\label{cF2f}
(\cF_2 f)(p)=(a_+a_-)^{-1}W_2(b;(x_1,x_2))^{-1/2}\mu(b;p)\rF_2(b;(x_1,x_2),p).
\ee

Using the reflection equation I (A.6), we find 
\be
\prod_{k=1}^2\frac{G(x_3-z_k-ib/2)}{G(x_3-z_k+ib/2)}=(a_+a_-)^{-1}G(ia-ib)^2 \mu(2a-b;(z_1-x_3,z_2-x_3)),
\ee
which yields 
\begin{multline}
(\cF_2 g)(-p)=(a_+a_-)^{-3/2}G(ia-ib)^3\exp(i\alpha y_3(x_1+x_2+x_3))W_2(2a-b;(y_1,y_2))^{-1/2}\\
\times \int_{G_2}dz\mu(2a-b;(z_1-x_3,z_2-x_3))\rF_2(b;z,(y_1-y_3,y_2-y_3))\rF_2(b;z,-p).
\end{multline}
Taking $z_k\to z_k+x_3$, we deduce from \eqref{rF2} and Proposition \ref{Prop:J2sym} that the integral on the right-hand side can be rewritten as
\begin{multline}
\exp(i\alpha x_3(y_1+y_2-2y_3-p_1-p_2))\\
\times \int_{G_2}dz\mu(2a-b;z)\rF_2(b;z,(y_1-y_3,y_2-y_3))\rF_2(b;z,-p).
\end{multline}
Keeping in mind \eqref{rF2s} and \eqref{rF2conj}, we infer from Eq.~(3.24) in \cite{HR15} the generalized eigenfunction expansion
\begin{multline}
W_2(2a-b;q)^{1/2}\cS_2(2a-b;q,p)W_2(2a-b;p)^{1/2}\\
=\frac{1}{(a_+a_-)^2}\int_{G_2}dz\mu(2a-b;z)\rF_2(b;z,q)\rF_2(b;z,-p),\ \ \ q,p\in G_2.
\end{multline}
Hence we arrive at the generalized Fourier transform formula
\begin{multline}\label{cF2g}
(\cF_2 g)(-p)=(a_+a_-)^{1/2}G(ia-ib)^3\exp(i\alpha x_3(y_1+y_2+y_3))\\
\times\exp\big(i\alpha[y_3(x_1+x_2)-x_3(2y_3+p_1+p_2)]\big)\\
\times W_2(2a-b;p)^{1/2}\cS_2(2a-b;(y_1-y_3,y_2-y_3),p).
\end{multline}

Substituting \eqref{cF2f} and \eqref{cF2g} in the right-hand side of \eqref{cF2plan}, taking $p_k\to p_k-y_3$ and rewriting the resulting integral in terms of $J_2$ by using~\eqref{rF2}, we obtain
\begin{multline}
J_3(b;x,y)=G(ia-ib)^4\exp(i\alpha x_3(y_1+y_2+y_3))\exp(i\alpha y_3(x_1+x_2))
\\
\times\int_{G_2}dp\, S_3^{\sharp}(2a-b;y,p)W_2(2a-b;p)\exp(-i\alpha x_3(p_1+p_2))J_2(b;x,(p_1-y_3,p_2-y_3)).
\end{multline}
Using now the $J_2$-duality relation~\eqref{J2drel} and invariance property~\eqref{J2hom}, we deduce  the representation
\be\label{J3d}
J_3(b;x,y)= G(ia-ib)^6\exp(i\alpha x_3(y_1+y_2+y_3))\int_{G_2}dp\, I_3(2a-b;y,x,p).
\ee
We note that this formula is valid for all $b\in S_a$ and $x,y\in\R^3$. We are now prepared for the $N=3$ analog of Proposition \ref{Prop:J2prop}. By contrast to the latter, the following theorem amounts to a substantial novel result, proving some of the conjectures in I~Section~7 for the case $N=3$.

\begin{theorem}\label{Thm:J3prop}
Letting $b\in S_a$ and $x,y\in\R^3$, the duality property~\eqref{J3drel} and symmetry property~\eqref{J3sym} hold true. 
\end{theorem}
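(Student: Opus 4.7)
The plan is to follow the flow chart of Proposition~\ref{Prop:J2prop}. First I would establish the duality \eqref{J3drel} by direct comparison of the newly derived dual representation \eqref{J3d} with the defining representation \eqref{J3}. Concretely, substituting $(b,x,y)\mapsto(2a-b,y,x)$ into \eqref{J3} gives
\[
J_3(2a-b;y,x)=\exp(i\alpha x_3(y_1+y_2+y_3))\int_{G_2}dz\,I_3(2a-b;y,x,z),
\]
and multiplying both sides by $G(ia-ib)^6$ yields precisely the right-hand side of \eqref{J3d}, which proves \eqref{J3drel}.

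Turning to the symmetry property \eqref{J3sym}, I would first note that $S_3$-invariance in the variable $x$ is manifest from the defining representation \eqref{J3} together with \eqref{cS3}: the kernel $\cS^\sharp_3(b;x,z)$ is symmetric under permutations of $x_1,x_2,x_3$ by inspection, the prefactor $\exp(i\alpha y_3(x_1+x_2+x_3))$ depends on $x$ only through the symmetric combination $x_1+x_2+x_3$, and the remaining factors $W_2(b;z)$ and $J_2(b;z,(y_1-y_3,y_2-y_3))$ do not involve $x$ at all. With $S_3$-symmetry in $x$ in hand, I would then transfer it to the $y$-variable via the duality \eqref{J3drel}: for any $\tau\in S_3$,
\[
J_3(b;x,\tau y)=G(ia-ib)^6 J_3(2a-b;\tau y,x)=G(ia-ib)^6 J_3(2a-b;y,x)=J_3(b;x,y),
\]
the middle equality being the previously established $S_3$-invariance of $J_3(2a-b;\,\cdot\,,x)$ in its first argument. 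Combined with the manifest $S_3$-invariance in $x$, this yields \eqref{J3sym} for all $(\sigma,\tau)\in S_3\times S_3$.

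The main obstacle has in fact already been cleared, namely the derivation of the dual representation \eqref{J3d} via the generalized Plancherel relation \eqref{cF2plan} and the generalized Fourier transform identities imported from \cite{HR15}; given \eqref{J3d}, the theorem is a short bookkeeping exercise. It is worth emphasizing, however, that unlike in the $N=2$ case, the $y$-symmetry here genuinely requires the duality route, since it cannot be read off directly from the `center-of-mass' representation \eqref{J3cm} in the way the corresponding $y$-symmetry of $J_2$ could be read off from \eqref{J2cm}.
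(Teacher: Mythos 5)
Your proposal is correct and follows essentially the same route as the paper: the duality \eqref{J3drel} is read off by comparing \eqref{J3d} with the defining representation \eqref{J3}, and the $y$-symmetry is then obtained by combining this duality with the manifest $S_3$-invariance in $x$. The real work is indeed contained in the derivation of \eqref{J3d}, which the theorem's proof takes as given.
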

\begin{proof}
We obtain \eqref{J3drel} upon comparing the representations \eqref{J3} and \eqref{J3d}.
Just as in the $N=2$ case, we then infer invariance under permutations of the variables $(y_1,y_2,y_3)$ by combining \eqref{J3drel} with the manifest invariance under permutations of the variables  $ (x_1,x_2,x_3)$.
\end{proof}

\subsection{Global meromorphy}\label{Sec32}
We proceed to establish global meromorphy for $J_3(b;x,y)$, following the line of reasoning in Subsection \ref{Sec22} as far as possible. Thus we need again a number of preliminaries.
 First, from Proposition \ref{Prop:J3sym} and Theorem \ref{Thm:J3prop}  the following invariance properties of $\cP_3$ are readily inferred:
\be\label{cP3ri}
\cP_3(b;-x,-y)=\cP_3(b;x,y), \ \ \ \mathrm{(reflection~invariance)},
\ee
\be\label{cP3d}
\cP_3(b;x,y)=G(ia-ib)^6\cP_3(2a-b;y,x),\ \ \ \mathrm{(duality)},
\ee
\be\label{cP3pi}
\cP_3(b;x,y)=\cP_3(b;\sigma x,\tau y),\ \ \ (\sigma,\tau)\in S_3\times S_3,\ \ \ \mathrm{(permutation~invariance)}.
\ee

Second, just as in the $N=2$ case, a key ingredient is an eigenvalue equation for $\cP_3$. It corresponds to the $k=1$ eigenvalue equation I (5.13) for $J_3$, with $\de\in\{+,-\}$ chosen such that $a_{-\de}=a_s$, and with $x,y\to -x,-y$. Invoking the reflection invariance~\eqref{J3ri}, the latter A$\De$E is given by
\be\label{J3ev}
\sum_{j=1}^3V_3(b;\sigma_{1j}x)J_3(b;x+ia_se_j,y)=\sum_{j=1}^3 e_l(-2y_j)J_3(b;x,y),
\ee
where $e_j$, $j=1,2,3$, and  
  $\sigma_{kl}$, $k,l=1,2,3$, denote the standard basis elements in $\C^3$ and the reflection that acts on $x\equiv (x_1,x_2,x_3)$ by interchanging $x_k$ and $x_l$, resp.; moreover, the coefficient function reads
\be
V_3(b;x)\equiv \prod_{m=2,3}\frac{s_l(x_m-x_1-ib)}{s_l(x_m-x_1)}.
\ee

Third, we  define counterparts of~\eqref{ep2}--\eqref{cD2n}: 
\be\label{ep3}
\epsilon_3\equiv a_l/4,
\ee
\be\label{cA3}
\cA_3\equiv \{ x\in\C^3\mid v_j -v_k>-\re b,\ \ 1\le j<k\le 3\},
\ee
\be\label{cA3n}
\cA_3^{(n)}\equiv
\left\{
\begin{array}{ll}
  \{ x\in\C^3\mid |v_j-v_k|<a_s+\re b, \ \ 1\le j<k\le 3\},  & n=1   , \\
   \{ x\in\cA_3\mid v_j-v_k<na_s+\re b, \ \ 1\le j<k\le 3\},  & n=0,2,3,\ldots  ,
\end{array}
\right.
 \ee
 \be\label{D3p}
D_3^{(+)}\equiv \big\{ (b,x)\in S(\epsilon_3)\times\C^3\mid x\in\cA_3\big\}, \ee
\be\label{D3n}
D_3^{(n)}\equiv \big\{ (b,x)\in S(\epsilon_3)\times\C^3\mid x\in\cA_3^{(n)}\big\}, 
\ee
\be\label{cD3p}
\cD_3^{(+)}\equiv \big\{ (b,x,y)\in D_3^{(+)}\times\C^3\mid \max_{1\leq j<k\leq 3}|\im (y_j-y_k)|<\re b\big\},
\ee
\be\label{cD3n}
\cD_3^{(n)}\equiv \big\{ (b,x,y)\in D_3^{(n)}\times\C^3\mid \max_{1\leq j<k\leq 3}|\im (y_j-y_k)|<\re b\big\}.
\ee
Then the counterpart of~\eqref{inclu} is
\be\label{inclu3}
D_3^{(1)}\subset D^r_3,\ \ \ \cD_3^{(1)}\subset \cD_3,
\ee
cf.~\eqref{D3}--\eqref{cD3}. To verify these inclusions, we need only note
\be
|\im \tilde{x}_j|\le \frac{1}{3}|v_j-v_k|+\frac{1}{3}|v_j-v_l|,\ \ \ \{j,k,l\}=\{1,2,3\},
\ee
and use 
\be
  \frac{2}{3}(a_s+\re b)<  a-\frac{1}{2}\re b,\ \ \   b\in S(\epsilon_3).
\ee 
    
Now we have the following analog of Lemma \ref{Lemma:cP2eigeeq}.

\begin{lemma}
Letting $(b,x,y)\in\cD_3^{(0)}$, we have the eigenvalue equation
 \be\label{cP3eigeq}
\sum_{j=1}^3\cV_3(b;\sigma_{1j}x)\cP_3(b;x+ia_se_j,y)=\sum_{j=1}^3 e_l(-2y_j)\cP_3(b;x,y),
\ee
 where the coefficient function is given by
\be\label{cVN3}
\begin{split}
\cV_3(b;x) &\equiv -\pi^2\prod_{m=2,3}\frac{\exp(i(2x_m-2x_1-ia_s)K_l)}{s_l(x_m-x_1)}\\
&\quad\times\left[\Gamma\left(\frac{i}{a_l}(x_m-x_1-ib)\right)\Gamma\left(\frac{i}{a_l}(x_m-x_1+ib-2ia)\right)\right]^{-1},
\end{split}
\ee
with $K_l$ defined by \eqref{slKl}.
\end{lemma}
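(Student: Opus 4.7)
The plan is to mimic the proof of Lemma~\ref{Lemma:cP2eigeeq}, passing from the $J_3$-eigenvalue equation~\eqref{J3ev} to the claimed equation for $\cP_3$ via the similarity factor $\Phi_3(b;x,y)\equiv\prod_{1\le j<k\le 3}\prod_{\de=\pm}E(\de(x_j-x_k)+ib-ia)E(\de(y_j-y_k)+ia-ib)$ appearing in~\eqref{cP3}. The $y$-dependent half of $\Phi_3$ is independent of~$x$, so it cancels when passing from the $J_3$- to the $\cP_3$-identity; the core of the proof is then to verify, for each $j=1,2,3$, the identity
\be\label{planid}
V_3(b;\sigma_{1j}x)\frac{\Phi_3(b;x,y)}{\Phi_3(b;x+ia_se_j,y)}=\cV_3(b;\sigma_{1j}x).
\ee

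First I would justify the domain hypothesis. For $x\in\cA_3^{(0)}$ the differences $v_j-v_k$ lie in $(-\re b,\re b)$, and the shift $x\mapsto x+ia_se_j$ increases $v_j-v_k$ by $a_s$ for every $k\ne j$ while leaving the pair not involving $j$ unchanged. Hence each shifted point satisfies $|v_j-v_k|<a_s+\re b$ for all three pairs, so it lies in $\cA_3^{(1)}$. Combined with~\eqref{inclu3}, this ensures that all three $J_3$-arguments occurring in~\eqref{J3ev} belong to the holomorphy domain~$\cD_3$ of~$J_3$, so all manipulations below make sense.

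The key computation is~\eqref{planid}. In the ratio $\Phi_3(b;x,y)/\Phi_3(b;x+ia_se_j,y)$, only the two $E$-factor pairs containing index $j$ contribute; the remaining factors cancel. Each such contributing pair takes the form $\prod_\de E(\de t+ib-ia)/E(\de(t+\eta ia_s)+ib-ia)$, where $t$ is the pair difference and $\eta=\pm$ depending on whether index $j$ appears in the first or second slot of the pair; when $\eta=-1$, the substitution $\de\to-\de$ brings the ratio into the standard left-hand side of~\eqref{EpADE} (with $t$ replaced by $-t$). Applying~\eqref{EpADE} to each such ratio and combining with the corresponding $s_l$-factor of $V_3(b;\sigma_{1j}x)$, the numerator $s_l(x_m-x_1-ib)$ of $V_3$ cancels the denominator $s_l(t+ib)$ in~\eqref{EpADE} up to a factor of $-1$ from the oddness of $s_l$. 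Thus each of the two pairs contributes a factor $-i\pi$ times the matching $m\in\{2,3\}$ factor in the product defining $\cV_3(b;\sigma_{1j}x)$ via~\eqref{cVN3}. Multiplying the two contributions yields the overall prefactor $(-i\pi)^2=-\pi^2$, so the right-hand side of~\eqref{planid} is reproduced exactly.

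The only real subtlety is the careful tracking of signs: for the pair in which the shift moves the difference by $-ia_s$ one must first substitute $\de\to-\de$ in~\eqref{EpADE}, and the resulting sign is then compensated by the oddness of $s_l$ in the corresponding factor of $V_3(b;\sigma_{1j}x)$. Once this bookkeeping is done consistently for all three values of~$j$, the lemma follows from two applications of the $N=2$ identity~\eqref{EpADE} per term, with no new analytic ingredients beyond those already used in Lemma~\ref{Lemma:cP2eigeeq}.
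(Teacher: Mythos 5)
Your proposal is correct and follows essentially the same route as the paper: verify that the (un)shifted arguments of $J_3$ in \eqref{J3ev} lie in $\cD_3^{(1)}\subset\cD_3$ via \eqref{inclu3}, then conjugate by the $E$-product and apply \eqref{EpADE} pairwise, with the sign bookkeeping (the $\de\to-\de$ relabeling and the oddness of $s_l$ producing $(-i\pi)^2=-\pi^2$) exactly as in Lemma~\ref{Lemma:cP2eigeeq}. The only trivial slip is that \eqref{J3ev} involves four $J_3$-arguments (the unshifted $x$ as well as the three shifted ones), but the unshifted point lies in $\cD_3^{(1)}$ a fortiori, so nothing is affected.
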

\begin{proof}
The restriction to~$\cD_3^{(0)}$ implies that the four arguments of~$J_3$ occurring in \eqref{J3ev} belong to $\cD_3^{(1)}$, and thus to the holomorphy domain~$\cD_3$, cf.~\eqref{inclu3}. Hence the A$\De$E is well defined. Its similarity transform~\eqref{cP3eigeq} follows from a computation paralleling the one in the proof of Lemma \ref{Lemma:cP2eigeeq}. 
\end{proof}

We are now prepared for   
the following counterpart of Proposition \ref{Prop:cP2ext}, which again proves a conjecture made in I~Section~7. 

\begin{theorem}\label{Thm:cP3ext}
The product function $\cP_3(b;x,y)$~\eqref{cP3} admits a holomorphic continuation from $\cD_3$~\eqref{cD3} to $ S_a\times\C^3\times\C^3$.
\end{theorem}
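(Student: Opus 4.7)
The plan is to mirror the flow chart used in the proof of Proposition~\ref{Prop:cP2ext}. I would first establish holomorphy of $\cP_3(b;x,y)$ on the tube $\cD_3^{(+)}$~\eqref{cD3p}, then transport this to a ``dual'' tube via~\eqref{cP3d} and the reflection invariance~\eqref{cP3ri}, and finally assemble the result on all of $S_a\times\C^3\times\C^3$ by applying Bochner's theorem (Theorem~\ref{BThm}) to the union of these two tubes together with~$\cD_3$. The final convex-hull step parallels the passage~\eqref{cBunion}--\eqref{bcon}: every $b\in S_a$ is a convex combination of a point with $\re b<\epsilon_3$ and a point with $\re b>2a-\epsilon_3$, and the verification that the convex hull of the three tube bases is all of $S_a\times\C^3\times\C^3$ is then essentially identical to the $N=2$ argument. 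The initialization of the induction is immediate from~\eqref{inclu3}, which gives holomorphy on $\cD_3^{(1)}$.

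For the inductive step $\cD_3^{(n)}\to\cD_3^{(n+1)}$, I would start from the $\cP_3$-A$\De$E~\eqref{cP3eigeq}, multiplying it by $s_l(x_2-x_1)s_l(x_3-x_1)$ so as to produce an entire coefficient $\hat{\cV}_3(b;x)$ in front of $\cP_3(b;x+ia_se_1,y)$. The equation can then be solved for the $e_1$-shifted term away from the zero locus of $\hat{\cV}_3$, which by the analog of the $N=2$ calculation used in Lemma~\ref{Lemma:cP2eigeeq} is contained in the union of the complex hyperplanes
\be
x_m-x_1\in\{-ib-ika_l,\,-2ia+ib-ika_l\},\quad m\in\{2,3\},\ k\in\N.
\ee
By permutation invariance~\eqref{cP3pi}, the A$\De$E has counterparts giving shift relations for $\cP_3(b;x+ia_se_2,y)$ and $\cP_3(b;x+ia_se_3,y)$, whose coefficient functions vanish on different hyperplanes. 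At each induction step this provides three partial holomorphic extensions which, together with the inductive hypothesis, cover overlapping sub-tubes of $\cD_3^{(n+1)}$.

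The main obstacle is that, in contrast to the $N=2$ case, the zero set of $\hat{\cV}_3$ is not disjoint from $D_3^{(+)}$ (e.g.~the hyperplane $v_1-v_m=\re b$ intersects $\cA_3$ once $v_1-v_m$ is allowed to be arbitrarily positive), and moreover a single coordinate shift $ia_se_j$ changes two of the three differences $v_j-v_k$ simultaneously, so no single shift exhausts $\cA_3^{(n+1)}\setminus\cA_3^{(n)}$. This is precisely where the supplementary Bochner argument enters: the three partial extensions obtained from the permuted A$\De$Es, combined with the inductive hypothesis, yield holomorphy on a tube whose open, connected base has convex hull equal to the base of $\cD_3^{(n+1)}$, and Theorem~\ref{BThm} then closes the induction. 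Iterating this procedure covers $\cD_3^{(+)}=\bigcup_n\cD_3^{(n)}$, after which the duality~\eqref{cP3d} and the final Bochner assembly described above complete the proof.
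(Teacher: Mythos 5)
Your high-level architecture matches the paper's: induction over the domains $\cD_3^{(n)}$ via the $\cP_3$-A$\De$E, an intermediate use of Bochner's theorem inside the inductive step, then invariance properties, the duality~\eqref{cP3d}, and a final Bochner assembly exactly as in~\eqref{cBunion}--\eqref{bcon}. However, the obstacle you name in the inductive step is not the real one, and it rests on a sign error. The zeros of $\hat{\cV}_3(b;x)$ lie at $x_1-x_m=-ib-ika_l$ and $x_1-x_m=-2ia+ib-ika_l$ (not $x_m-x_1$; compare~\eqref{gamzeros}), i.e.\ at $v_1-v_m=-\re b-ka_l$ and $v_1-v_m=-2a+\re b-ka_l$, all of which satisfy $v_1-v_m\le-\re b$ and therefore lie outside $\cA_3$. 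Hence $\hat{\cV}_3$ is nonvanishing on all of $D_3^{(+)}$ --- which is essential, since if its zero set really met $D_3^{(+)}$, dividing by it would create poles and the whole method would collapse rather than merely require a supplementary Bochner step.

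The actual new difficulty sits elsewhere: after multiplying~\eqref{cP3eigeq} by $s_l(x_2-x_1)s_l(x_3-x_1)$ and solving for the $e_1$-shifted term, the two remaining shifted terms acquire a common denominator $s_l(x_3-x_2)$, cf.~\eqref{cP3id}. Its zero at $x_3=x_2$ is cancelled by the vanishing of the bracket, but the zeros at $x_3-x_2=ika_l$, $k\neq0$, are not, and these do meet $\cA_3^{(n)}$ once $|v_2-v_3|$ may exceed $a_l$. This is what forces the paper to restrict to the slices $|v_2-v_3|<\re b$ (the domains $\cD_{3,1}^{(n)}$), to generate a second family with $|v_1-v_2|<\re b$ via the involution $x\mapsto-\sigma_{13}x$ (combining~\eqref{cP3ri} and~\eqref{cP3pi}), and only then to apply Bochner's theorem, together with an explicit verification that the convex hull of the two bases equals the base of $\cD_3^{(n+1)}$. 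Your alternative --- three partial extensions from the permuted A$\De$Es --- could plausibly be organized into a proof, but each permuted equation carries its own $s_l$-denominator that must be controlled by a similar band restriction, and the assertion that the resulting sub-tubes have a connected union whose convex hull is the base of $\cD_3^{(n+1)}$ is precisely the content that must be (and in the paper is) proved; as written, your proposal postulates it without the supporting construction.
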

\begin{proof} For transparency, we follow the reasoning in the proof of Proposition \ref{Prop:cP2ext} as far as possible, even though we need to enlarge on it shortly. Thus,  
we first aim to prove holomorphic continuation to~$\cD_3^{(+)}$~\eqref{cD3p}. Accordingly, 
we assume inductively that $\cP_3(b;x,y)$ is holomorphic in~$\cD_3^{(n)}$ with $n\geq 1$. (This is true for $n=1$, cf.~\eqref{inclu3}.)
To handle the holomorphic continuation to~$\cD_3^{(n+1)}$, 
we begin by rewriting~\eqref{cP3eigeq}. 

First, we introduce
\be\label{cVfact}
\hat{\cV}_3(b;x)=\cV_3(b;x)\prod_{m=2,3}s_l(x_m-x_1).
\ee
Then we multiply \eqref{cP3eigeq} by the two $s_l$-functions, rearrange the terms,  and invoke the permutation invariance \eqref{cP3pi} to obtain 
\begin{multline}\label{cP3id}
\hat{\cV}_3(b;x)\cP_3(b;x+ia_s e_1 ,y)=s_l(x_2-x_1)s_l(x_3-x_1) \sum_{j=1}^3 e_l(-2y_j)\cP_3(b;x,y)\\
+\frac{1}{s_l(x_3-x_2)}\Big[s_l(x_3-x_1)\hat{\cV}_3(b;\sigma_{12}x)\cP_3(b;x+ia_s e_2,y)\\
-s_l(x_2-x_1)\hat{\cV}_3(b;\sigma_{13}x)\cP_3(b;\sigma_{23}(x+ia_s e_3),y)\Big].
\end{multline}
It now follows as before that the multiplier~$\hat{\cV}_3(b;x)$ on the left-hand side is nonzero on~$D_3^{(+)}$, cf.~\eqref{gamzeros}.
 
It is at this point, however, that we can no longer proceed as in the $N=2$ case. For one thing, the zero of the denominator function $s_l(x_3-x_2)$ in~\eqref{cP3id} for $x_3=x_2$ is innocuous (as the bracketed function then vanishes, too), but we need to steer clear of the remaining zeros. 

We can avoid this snag (and other ones) as follows. First, we define domains
\be\label{cA31}
\cA_{3,1}^{(n)}\equiv \{ x\in\cA_3^{(n)}\mid |v_2-v_3|<\re b\},\ \ \ n\ge 1,
\ee
\be\label{D31n}
D_{3,1}^{(n)}\equiv \big\{ (b,x)\in S(\epsilon_3)\times\C^3\mid x\in\cA_{3,1}^{(n)}\big\}, 
\ee
 \be\label{cD31n}
\cD_{3,1}^{(n)}\equiv \big\{ (b,x,y)\in D_{3,1}^{(n)}\times\C^3\mid \max_{1\leq j<k\leq 3}|\im (y_j-y_k)|<\re b\big\}.
\ee
Second, we consider the function~$R_3(b;x,y)$ on the right-hand side of~\eqref{cP3id} for all points $(b,x,y)\in \cD_{3,1}^{(n)}$ such that
\be\label{vres3}
v_1-v_2, v_1-v_3\in ((n-1)a_s-\re b,na_s+\re b).
\ee
This yields a domain $\cD_{3,1,r}^{(n)}\subset\cD_{3,1}^{(n)}$ on which $R_3(b;x,y)$ is holomorphic for $n=1$. Using the induction assumption, we also infer holomorphy for $n>1$. (Note that we need the interchange $\sigma_{23}$ for this to follow.)  The $x$-translation of~$\cD_{3,1,r}^{(n)}$ over $ia_s e_1$ equals $\cD_{3,1,r}^{(n+1)}$, and the latter domain meets~$\cD_3^{(n)}$ for all points with
\be
v_1-v_2, v_1-v_3\in (na_s-\re b,na_s+\re b),\ \ \ |v_2-v_3|<\re b.
\ee

As a result, we obtain a holomorphic continuation of $\cP_3(b;x,y)$ to all of~$\cD_{3,1}^{(n+1)}$. However, this is a proper subdomain of~$\cD_3^{(n+1)}$, so we need yet another enlargement.
This consists in further domains
\be\label{cA33}
\cA_{3,3}^{(n)}\equiv \{ x\in\cA_3^{(n)}\mid |v_1-v_2|<\re b\},\ \ \ n\ge 1,
\ee
\be\label{D33n}
D_{3,3}^{(n)}\equiv \big\{ (b,x)\in S(\epsilon_3)\times\C^3\mid x\in\cA_{3,3}^{(n)}\big\}, 
\ee
 \be\label{cD33n}
\cD_{3,3}^{(n)}\equiv \big\{ (b,x,y)\in D_{3,3}^{(n)}\times\C^3\mid \max_{1\leq j<k\leq 3}|\im (y_j-y_k)|<\re b\big\}.
\ee

Consider now the involution
\be
\varphi: \C^3\to \C^3,\ \ \ x\mapsto -\sigma_{13}x.
\ee
It is easy to check
\be
\varphi (\cA_3^{(n)})=\cA_3^{(n)},\ \ \ \varphi (\cA_{3,1}^{(n)})=\cA_{3,3}^{(n)},\ \ \ n\ge 1,
\ee
so it gives rise to a bijection between the domains~\eqref{cA33}--\eqref{cD33n} and~\eqref{cA31}--\eqref{cD31n}. 

The point is that the invariance properties \eqref{cP3ri}--\eqref{cP3pi} are preserved under analytic continuation, so that we have
\be
\cP_3(b;x,y)=\cP_3(b;\varphi(x),-y),\ \ \ x\in\cD_{3,3}^{(n)}.
\ee
As a consequence, the function  $\cP_3(b;x,y)$ has a holomorphic continuation to~$\cD_{3,1}^{(n+1)}$ as well as to~$\cD_{3,3}^{(n+1)}$. 

The latter two domains are tube domains with open, connected bases, and the two bases have a nontrivial intersection. By Bochner's Theorem~1.1 it then follows that $\cP_3(b;x,y)$ has a holomorphic continuation to the tube whose base is the convex hull of the latter two bases. We claim that this tube equals~$\cD_3^{(n+1)}$. Taking this claim for granted, we have completed the induction argument, so it follows that~$\cP_3$ continues to~$\cD_3^{(+)}$. 

Now we need only invoke $S_3$-symmetry in~$x$ to obtain holomorphy of $\cP_3(b;x,y)$ in the tube with base
\be\label{cBep3}
\cB_3(\epsilon_3)\equiv \{(\re b, \im x,\im y)\in (0,\epsilon_3)\times\R^3\times \R^3\mid \max_{1\leq j<k\leq 3}|\im (y_j-y_k)|< \re b\} .
\ee
Then we are in the position to follow again the reasoning for the $N=2$ case, with the equations~\eqref{cB2}--\eqref{cB2h} all having  $N=3$ counterparts that will be clear upon comparing~\eqref{cBep3} with~\eqref{cBep2}. 

To conclude the proof of the theorem, it remains to prove the claim. We can reduce this to a claim for a set $U$ of two real numbers $u_1\equiv v_1-v_2 , u_2\equiv v_2-v_3$ satisfying
\be
u_1,u_2,u_1+u_2\in(-c,d),\ \ 0<c<d.
\ee
Specifically, the claim now amounts to the convex set $U$ being equal to the convex hull of its two convex subsets
\be
U_j\equiv \{ (u_1,u_2)\in U\mid u_j\in (-c,c)\},\ \ j=1,2.
\ee
Rephrased this way, a moment's thought suffices to establish the validity of the claim. (Any $u\in U$ that is not in $U_1\cup U_2$ belongs to the interior of the triangle with corners $(0,0), (d,0), (0,d)$, and $(d,0)$/$(0,d)$   belongs to the closure of $U_2/U_1$.) Hence the theorem follows.
\end{proof}

\subsection{Asymptotics}\label{Sec33}
Introducing the function
\be\label{defd}
d_3(y)\equiv\min_{1\leq j<k\leq 3}(y_j-y_k),\ \ \ y\in\R^3,
\ee
we proceed to elucidate the asymptotic behavior of the function $\rE_3(b;x,y)$ \eqref{rE3} for $d_3(y)\to\infty$.

Combining \eqref{c}--\eqref{CN} with \eqref{GE} and Theorem \ref{Thm:cP3ext}, we find that $\rE_3(b;x,y)$, $b\in S_a$, is meromorphic in $x$ and $y$, with $b$-independent poles located at
\be
z_j-z_k=-2ia-ip_{mn},\ \ \ z=x,y,\ \ 1\leq j<k\leq 3,\ \ m,n\in\N,
\ee
and $b$-dependent pole locations
\be
z_j-z_k=ib+ip_{mn},\ \ z_j-z_k=2ia-ib+ip_{mn},\ \ \ z=x,y,\ \ 1\leq j<k\leq 3,\ \ m,n\in\N.
\ee
Just as in the $N=2$ case, we now assemble further features in a lemma.

\begin{lemma}\label{Lemma:rE3prop}
For all $(b,x,y)\in S_a\times\C^3\times\C^3$ and $\eta\in\C$, the function $\rE_3(b;x,y)$~\eqref{rE3} satisfies
\be\label{rE3ri}
\rE_3(b;-x,-y)=\rE_3(b;x,y)\prod_{1\le j<k\le 3}u(b;x_j-x_k)u(b;y_j-y_k), 
\ee
\be\label{rE3hom}
\begin{split}
\rE_3(b;x,y)=& \exp(-i\alpha\eta(y_1+y_2+y_3))\rE_2(b;(x_1+\eta,x_2+\eta,x_3+\eta),y)\\
& =\exp(-i\alpha\eta(x_1+x_2+x_3))\rE_2(b;(x,(y_1+\eta,y_2+\eta,y_3+\eta)),
\end{split}
\ee
\be\label{rE3d}
\rE_3(b;x,y)=\rE_3(2a-b;y,x),
\ee
\be\label{rE3p}
\rE_3(\sigma x,\tau y)=\rE_3(x,y)\prod_{\substack{j<k\\ \sigma^{-1}(j)>\sigma^{-1}(k)}}(-u(x_j-x_k))\prod_{\substack{j<k\\ \tau^{-1}(j)>\tau^{-1}(k)}}(-u(y_j-y_k)),\ \ \ (\sigma,\tau)\in S_3\times S_3, 
\ee
where, e.~g., $(\sigma x)_j\equiv x_{\sigma(j)},\  j=1,2,3$. 
\end{lemma}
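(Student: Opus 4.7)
By Theorem~\ref{Thm:cP3ext}, the function $\rE_3(b;x,y)$ is meromorphic on $S_a\times\C^3\times\C^3$, so each of the four identities \eqref{rE3ri}--\eqref{rE3p} is preserved under analytic continuation in $(b,x,y)$. Consequently, it is enough to verify them on the real slice $(b,x,y)\in(0,2a)\times\R^3\times\R^3$, where the $J_3$-invariance results of Proposition~\ref{Prop:J3sym} and Theorem~\ref{Thm:J3prop} apply directly. The strategy in every case is to combine the appropriate $J_3$-identity with a bookkeeping calculation that tracks how the denominators $C_3(b;x)$, $C_3(2a-b;y)$ and the prefactor $(\phi(b)G(ib-ia)/\sqrt{a_+a_-})^3$ transform; the $u$-factors are then produced via \eqref{u} in the form $c(b;z)/c(b;-z)=-u(b;z)$, together with $u(2a-b;z)=u(b;z)$ and $\phi(2a-b)=\phi(b)$ from \eqref{phuinv}.

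For the translation identity \eqref{rE3hom} the argument is almost automatic: $C_3(b;x)$ depends on $x$ only through the differences $x_j-x_k$ and is therefore invariant under the common shift $x_j\mapsto x_j+\eta$, and similarly for $C_3(2a-b;y)$, so \eqref{rE3hom} reduces to \eqref{J3hom}. For the reflection identity \eqref{rE3ri} one has $J_3(b;-x,-y)=J_3(b;x,y)$ by \eqref{J3ri}, and
\be
\frac{C_3(b;x)}{C_3(b;-x)}=\prod_{1\le j<k\le 3}\frac{c(b;x_j-x_k)}{c(b;-(x_j-x_k))}=\prod_{1\le j<k\le 3}\bigl(-u(b;x_j-x_k)\bigr),
\ee
and likewise for $C_3(2a-b;y)/C_3(2a-b;-y)$; since there are three pairs in each product, the two factors of $(-1)^3$ multiply to $+1$, and \eqref{phuinv} absorbs the shift $b\mapsto 2a-b$ on the $y$-side, yielding \eqref{rE3ri}.

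For the duality identity \eqref{rE3d} the key inputs are the relation $J_3(b;x,y)=G(ia-ib)^6 J_3(2a-b;y,x)$ of Theorem~\ref{Thm:J3prop} and the collapse $G(ib-ia)^3 G(ia-ib)^6=G(ia-ib)^3$, together with $\phi(2a-b)=\phi(b)$; direct substitution into \eqref{rE3} then produces $\rE_3(2a-b;y,x)$. For the permutation identity \eqref{rE3p} the $J_3$-side is invariant by \eqref{J3sym}, so one need only compute $C_3(b;\sigma x)/C_3(b;x)$. For a pair $(p,q)$ with $p<q$, that pair appears in $C_3(b;\sigma x)$ as $c(b;x_p-x_q)$ if $\sigma^{-1}(p)<\sigma^{-1}(q)$ and as $c(b;x_q-x_p)$ otherwise, so
\be
\frac{C_3(b;x)}{C_3(b;\sigma x)}=\prod_{\substack{p<q\\ \sigma^{-1}(p)>\sigma^{-1}(q)}}\frac{c(b;x_p-x_q)}{c(b;x_q-x_p)}=\prod_{\substack{p<q\\ \sigma^{-1}(p)>\sigma^{-1}(q)}}\bigl(-u(b;x_p-x_q)\bigr),
\ee
and the analogous identity on the $y$-side (combined with $u(2a-b;z)=u(b;z)$) gives the $\tau$-product.

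The only step requiring genuine care is the sign/permutation bookkeeping just indicated; in particular the reflection identity requires observing that the three-pair parity on the $x$- and $y$-sides cancel, and the permutation identity requires distinguishing inverted from non-inverted pairs. Apart from that, each item follows mechanically from the corresponding invariance of $J_3$.
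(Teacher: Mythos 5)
Your proposal is correct and follows essentially the same route as the paper: reduce to the real slice $(b,x,y)\in(0,2a)\times\R^3\times\R^3$ by global meromorphy and then transfer the invariance properties of $J_3$ from Proposition~\ref{Prop:J3sym} and Theorem~\ref{Thm:J3prop} to $\rE_3$ by tracking the $C_3$-factors via $c(b;z)/c(b;-z)=-u(b;z)$, $u(2a-b;z)=u(b;z)$ and $\phi(2a-b)=\phi(b)$. The paper leaves these bookkeeping computations implicit (referring back to the $N=2$ case), whereas you carry them out explicitly; all of your sign and prefactor calculations check out.
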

\begin{proof}
Like in the $N=2$ case, these properties are easily derived from the corresponding features of~$J_3(b;x,y)$ in Proposition~\ref{Prop:J3sym} and  Theorem~\ref{Thm:J3prop}.
\end{proof}

Recalling from I (2.11) the kernel function
\be\label{cK3}
\cK_3^\sharp(b;x,z)\equiv [C_3(b;x)C_2(b;-z)]^{-1}\cS_3^\sharp(b;x,z),
\ee
we infer from \eqref{rE2}--\eqref{J3} and \eqref{rE3} the representation
\begin{multline}\label{rE3rep}
\rE_3(b;x,y)=\frac{(\phi(b)G(ib-ia))^2}{2a_+a_-}\\
\times\frac{\exp(i\alpha y_3(x_1+x_2+x_3))}{\prod_{n=1}^2c(2a-b;y_n-y_3)}\int_{\R^2}dz\,\rI_3(b;x,y,z),\ \ \ b\in S_a,\ \ x,y\in\R^3,
\end{multline}
where the integrand is given by
\be\label{rI3}
\rI_3(b;x,y,z)=\cK_3^\sharp(b;x,z)\rE_2(b;z,(y_1-y_3,y_2-y_3)).
\ee
Indeed, since the integrand~$I_3$ in~\eqref{J3} is clearly invariant under the interchange $z_1\leftrightarrow z_2$, we can replace the integration  over the Weyl chamber~$G_2$ in~\eqref{J3} by integration over~$\R^2$  times 1/2.

Following the $N=2$ case, we deduce the dominant asymptotics of $E_3$ by shifting the $z_k$-contours $\R$ in \eqref{rE3rep} up past the poles of $\rI_3$ located at
\be\label{ps}
z_k=x_j+ia-ib/2,\ \ \ k=1,2,\ \ j=1,2,3.
\ee
Recalling the $G$-zeros \eqref{Ezs}, we infer from \eqref{rE3} and \eqref{c}--\eqref{CN} that $\rE_3$ vanishes along the hyperplanes $x_j=x_k$, $1\leq j<k\leq 3$. Hence we may as well require
\be\label{diffx}
x_j\neq x_k,\ \ \ 1\leq j<k\leq 3,
\ee
so that the poles \eqref{ps} are simple. 

In order to keep track of the residues appearing, we need to shift the two contours separately. Assuming first $\im(z_1-z_2)\in(-a_s,0]$, we note that Proposition \ref{Prop:rE2b} and the bounds \eqref{cas}, I (B.6)   entail that the integrand $\rI_3$ has exponential decay for $|\re z_k|\to\infty$. Moreover, from invariance of $\rI_3$   under   $z_1\leftrightarrow z_2$ it follows that $\rI_3$ has the same decay for $\im(z_1-z_2)\in[0,a_s)$. Hence, as long as the contours are separated by a distance less than $a_s$, we encounter no problems with the contour tails. We must, however, take care to avoid the $x_j$-independent poles of~$\rI_3$, which are due either to zeros of $C_2(-z)$ or poles of $E_2(z,(y_1-y_3,y_2-y_3))$. The former are located at
\be
z_1-z_2=2ia+ip_{kl},\ \ z_1-z_2=-ib-ip_{kl},\ \ \ k,l\in\N,
\ee
whereas the locations of the latter are given by \eqref{rE2ps1}--\eqref{rE2ps2}. Recalling the function $m(d)$ \eqref{mas}, we thus see that 
 the poles in question are not met for $|\im(z_1-z_2)|<m(\re b)$.

Next, we let $x(\nu)$, $\nu=1,2,3$, denote the variables obtained by removing $x_\nu$ from $x\equiv (x_1,x_2,x_3)$:
\be
x(1)=(x_2,x_3),\ \ \ x(2)=(x_1,x_3),\ \ \ x(3)=(x_1,x_2).
\ee
%
%
Introducing the functions
\be\label{M3}
M_3(b;y)\equiv \frac{\phi(b)^2}{\prod_{n=1}^2c(2a-b;y_n-y_3)}\rho_3(b;y),
\ee
\be
\rho_3(b;y)\equiv \exp(-\alpha(a-b/2)(y_1+y_2-2y_3)),
\ee
we are prepared for the following counterpart of Lemma~\ref{Lem:aux2}.

\begin{lemma}\label{Lem:aux3}
Letting $(r,b)\in(0,a_s)\times S_a$ and $x,y\in\R^3$ with the $x$-restriction \eqref{diffx} in effect, we have
\begin{multline}\label{rE3rep2}
\frac{\rE_3(x,y)}{M_3(y)}\exp(-i\alpha y_3(x_1+x_2+x_3))\\
=\frac{1}{\rho_3(b;y)} \Bigg[\frac{G(ib-ia)^2}{2a_+a_-}\int_{(C_b+ir)^2}dz\,\rI_3(x,y,z)\\
+\frac{G(ib-ia)}{\sqrt{a_+a_-}}\sum_{\nu=1}^3\prod_{j<\nu}(-u(x_\nu-x_j))\cdot \int_{C_b+ir}dt\,\hat{\rI}_{3,\nu}(x,y,t)\Bigg]\\
+\sum_{\nu=1}^3\frac{C_3(x(\nu),x_\nu)}{C_3(x)}\rE_2(x(\nu),(y_1-y_3,y_2-y_3)),
%
\end{multline}
with
\be\label{hI3}
\hat{\rI}_{3,\nu}(b;x,y,t)\equiv \cK_2^\sharp(b;x(\nu),t)\rE_2(b;(x_\nu+ia-ib/2,t),(y_1-y_3,y_2-y_3)),
\ee
where $\cK_2^\sharp$ is given by~\eqref{cK2} and $C_b$ by~\eqref{Cb}.
\end{lemma}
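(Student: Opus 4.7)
\medskip
\noindent\textit{Plan of proof.}
The strategy is to shift both contours in~\eqref{rE3rep} from~$\R$ upward to~$C_b+ir$, one variable at a time, and to track the residues picked up. First I would verify that in the strip $0\le\im z_k\le a-\re b/2+r$ with $r\in(0,a_s)$, the only poles of~$\rI_3$ are the simple poles of~$\cS_3^\sharp$ at $z_k=x_j+ia-ib/2$, $j=1,2,3$: the poles of $1/C_2(b;-z)=1/c(b;z_2-z_1)$ lie in the lower half plane, and those of $\rE_2(b;z,(y_1-y_3,y_2-y_3))$ listed in~\eqref{rE2ps1}--\eqref{rE2ps2} lie either below~$\R$ or well above~$C_b+ir$. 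The shifts themselves are justified by uniform decay of~$\rI_3$ in~$|\re z_k|$, obtained by combining the $\rE_2$-bound of Proposition~\ref{Prop:rE2b} (using that $\im(z_1-z_2)$ stays in~$(-a_s,a_s)$ throughout), the $\cS_3^\sharp$-decay recorded in~I~(B.6), and the $c$-function asymptotics~\eqref{cas}.

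Shifting $z_1$ first and then $z_2$ yields the double integral over~$(C_b+ir)^2$ plus single integrals from one-variable residues and at most nine double-residue contributions. The diagonal terms with $z_1=z_2=x_\nu+ia-ib/2$ vanish: $G(ia)=0$ forces $1/c(b;0)=0$ with a simple zero, so $1/C_2(b;-z)$ carries a simple zero along $z_1=z_2$ that cancels one of the two would-be simple poles there. By the manifest $z_1\leftrightarrow z_2$ symmetry of~$\rI_3$, the two kinds of single integrals combine into a single sum over~$\nu=1,2,3$, and the six surviving double residues pair into three terms indexed by the complementary singleton. I would then evaluate individual residues using $\Res_{z=-ia}G(z)=i\sqrt{a_+a_-}/(2\pi)$ (as is built into Lemma~\ref{Lem:aux2}) together with the reflection equation $G(-z)=1/G(z)$. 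A direct computation shows that the $z_1$-residue of $\rI_3$ at $x_\nu+ia-ib/2$ collapses to a constant multiple of $\cK_2^\sharp(b;x(\nu),z_2)\,\rE_2(b;(x_\nu+ia-ib/2,z_2),(y_1-y_3,y_2-y_3))=\hat{\rI}_{3,\nu}$; the collapse rests on the $z_2$-dependent combination $G(x_\nu-z_2-ib/2)/[G(x_\nu-z_2+ib/2)\,c(b;z_2-x_\nu-ia+ib/2)]$ reducing identically to~$1$ by reflection, while the overall constant simplifies via $\prod_{j\ne\nu}c(x_\nu-x_j)\cdot C_2(b;x(\nu))/C_3(b;x)=\prod_{j<\nu}(-u(x_\nu-x_j))$ into precisely the coefficient appearing in Part~2 of~\eqref{rE3rep2}.

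The double residue at $(z_1,z_2)=(x_\mu+ia-ib/2,x_\rho+ia-ib/2)$ with $\{\mu,\rho\}=\{1,2,3\}\setminus\{\nu\}$ evaluates, by the same mechanism, to a multiple of $\rE_2(b;(x_\mu+ia-ib/2,x_\rho+ia-ib/2),(y_1-y_3,y_2-y_3))$, and~\eqref{rE2hom} with $\eta=ia-ib/2$ converts this into $\rho_3(b;y)\cdot\rE_2(b;(x_\mu,x_\rho),(y_1-y_3,y_2-y_3))$. The resulting factor $\rho_3$ cancels the $1/\rho_3$ prefactor sitting outside the brackets in~\eqref{rE3rep2}, and pairing the two orderings of~$\{\mu,\rho\}$ via~\eqref{rE2p} together with $u(b;z)u(b;-z)=1$ collapses the six ordered contributions to the three stated terms $C_3(x(\nu),x_\nu)/C_3(x)\cdot\rE_2(b;x(\nu),(y_1-y_3,y_2-y_3))$. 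The main obstacle is precisely this bookkeeping: both the cancellation at $z_1=z_2$ (so that only six, not nine, double residues contribute) and the collapse of the ordered-pair sum to the symmetric form involving $C_3(x(\nu),x_\nu)/C_3(x)$ rest on repeated, careful use of $G(-z)=1/G(z)$,~\eqref{urefl},~\eqref{rE2p}, and~\eqref{rE2hom}; once these are in place, combining everything with the prefactor $G(ib-ia)^2/(2a_+a_-\rho_3)$ coming from~\eqref{rE3rep} produces~\eqref{rE3rep2}.
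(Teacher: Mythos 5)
Your algebraic bookkeeping agrees with the paper's: the vanishing of the diagonal double residues, the collapse of the single $z_1$-residues to $\hat{\rI}_{3,\nu}$ with coefficient $\prod_{j<\nu}(-u(x_\nu-x_j))$, and the pairing of the six off-diagonal double residues into the three terms $C_3(x(\nu),x_\nu)C_3(x)^{-1}\rE_2(x(\nu),(y_1-y_3,y_2-y_3))$ via \eqref{rE2hom} and \eqref{rE2p} are all as in the paper. The gap is in the contour deformation itself. You propose to shift $z_1$ from $\R$ all the way up to $C_b+ir$ and only then do the same for $z_2$, asserting that the only poles met are those of $\cS_3^\sharp$ at $z_k=x_j+ia-ib/2$. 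That is false: $\rI_3$ also has polar hyperplanes in the \emph{difference} variable, coming from $\rE_2(b;z,(y_1-y_3,y_2-y_3))$ (equivalently from $W_2(z)J_2(z,\cdot)$), located at $z_1-z_2=ib+ip_{kl}$ and $z_1-z_2=2ia-ib+ip_{kl}$, cf.~\eqref{rE2ps1}--\eqref{rE2ps2}. During your first shift $\im(z_1-z_2)$ sweeps the whole interval $[0,a-\re b/2+r]$, which contains $\re b$ whenever $\re b<\tfrac23(a+r)$ (and contains $2a-\re b$ for $\re b$ close to $2a$), so these hyperplanes are genuinely crossed and their residues are unaccounted for; your remark that the poles of $\rE_2$ ``lie either below $\R$ or well above $C_b+ir$'' conflates poles in $z_1-z_2$ with poles in the individual variables. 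A second, related defect concerns the tails: Proposition~\ref{Prop:rE2b} controls $\rE_2(b;z,w)$ only for $\im(z_1-z_2)$ in a compact subinterval of $(-a_s,a_s)$ (the paper explicitly states it has no bound beyond $-a_s$), whereas your one-at-a-time shift pushes $\im(z_1-z_2)$ up to $a-\re b/2+r$, which exceeds $a_s$ for generic parameters, so your own parenthetical claim that $\im(z_1-z_2)$ stays in $(-a_s,a_s)$ fails and the interchange of limits at the contour tails is not justified. (One might hope the extra crossings cancel in pairs when the second contour is subsequently shifted back through the same hyperplanes, but the two residue integrals then run over different horizontal lines, and establishing the cancellation would itself require an argument you do not supply.)

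The paper's proof avoids both issues by keeping the two contours essentially together: it first rewrites the integrand in terms of the $z_1\leftrightarrow z_2$-symmetric $J_2$, moves both contours simultaneously from $\R$ to $C_b-i\ep$ with $0<\ep<\min(m(\re b)/2,a_s/2)$ (no poles crossed, since $z_1-z_2$ stays real), then shifts $z_1$ and subsequently $z_2$ up by $2\ep$, so that $|\im(z_1-z_2)|\le 2\ep<m(\re b)$ throughout and only the poles \eqref{ps} are met, and finally moves all resulting contours up to $C_b+ir$ in tandem. If you reorganize your deformation along these lines, the residue computations you describe go through and yield \eqref{rE3rep2}.
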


\begin{proof}
First, we note that by \eqref{rE3rep}--\eqref{rI3} and \eqref{M3} the left-hand side of \eqref{rE3rep2} equals
\be\label{rE3repLHS}
\frac{\cG^2}{2\rho_3(y)}\int_{\R^2}dz\, \cK_3^\sharp(x,z)\rE_2(z,\hat{y}),
\ee
where we have introduced
\be
\hat{y}\equiv (y_1-y_3,y_2-y_3),\ \ \ \cG\equiv \frac{G(ib-ia)}{\sqrt{a_+a_-}}.
\ee
When determining the effect of the pertinent contour shifts, we find it convenient to work with $J_2(z,(y_1-y_3,y_2-y_3))$, since it is invariant under the interchange $z_1\leftrightarrow z_2$. Therefore, we use \eqref{rE2} and \eqref{cK3} to rewrite \eqref{rE3repLHS} as
\be\label{rE3repLHS2}
\frac{\phi(b)\cG^3}{2\rho_3(b;y)}\frac{\cL_3(b;x,y)}{C_3(b;x)C_2(2a-b;\hat{y})},
\ee
with
\be
\cL_3(b;x,y)\equiv \int_{\R^2}dz\, W_2(b;z)\cS^\sharp_3(b;x,z)J_2(b;z,\hat{y}).
\ee

Letting
\be
0<\ep<\min(m(\re b)/2,a_s/2),
\ee
we move the two contours $\R$ simultaneously up to $C_b-i\ep$ without meeting poles. Moreover, shifting the $z_1$-contour up by a further amount $2\ep$, we only encounter the three simple poles \eqref{ps} with $k=1$. These poles are due to the factor $G(x_j-z_1-ib/2)$ in $\cS_3^\sharp(x,z)$ \eqref{cS3}, and the $G$-residue I (A.13) entails
\be
\lim_{z_1\to x_j+ia-ib/2}(z_1-x_j-ia+ib/2)G(x_j-z_1-ib/2)=\lim_{z_1\to -ia}(-z-ia)G(z)=\frac{\sqrt{a_+a_-}}{2\pi i}.
\ee
Observing that
\be
2\pi i\, \Res\frac{G(x_j-z_1-ib/2)}{G(x_j-z_1+ib/2)}\Big\arrowvert_{z_1=x_j+ia-ib/2}=\frac{\sqrt{a_+a_-}}{G(ib-ia)}=\cG^{-1},
\ee
we thus deduce
\begin{multline}\label{cL3Expr}
\cL_3(x,y)=\int_{C_b+i\epsilon}dz_1\int_{C_b-i\epsilon}dz_2\, W_2(z)\cS^\sharp_3(x,z)J_2(z,\hat{y})\\
+\cG^{-1}\int_{C_b-i\epsilon}dz_2\, \sum_{\nu=1}^3\cR_\nu(x,z_2)J_2((x_\nu+ia-ib/2,z_2),\hat{y}),
\end{multline}
with remainder residue (cf.~\eqref{c})
\be\label{Rnu}
\begin{split}
\cR_\nu(x,z_2) &= \frac{1}{c(x_\nu-z_2+ia-ib/2)c(z_2-x_\nu-ia+ib/2)}\\
&\quad \times \prod_{j=1}^3 c(z_2-x_j-ia+ib/2)\cdot \prod_{\substack{j=1\\ j\neq\nu}}^3 c(x_\nu-x_j)\\
&= \prod_{\substack{j=1\\ j\neq\nu}}^3 c(z_2-x_j-ia+ib/2)\cdot \frac{\prod_{\substack{j=1\\ j\neq\nu}}^3 c(x_\nu-x_j)}{c(x_\nu-z_2+ia-ib/2)}\\
&= \cS_2^\sharp(x(\nu),z_2)\frac{\prod_{\substack{j=1\\ j\neq\nu}}^3 c(x_\nu-x_j)}{c(x_\nu-z_2+ia-ib/2)}.
\end{split}
\ee

Now shifting the $z_2$-contours in \eqref{cL3Expr} up by $2\ep$, we only encounter the poles \eqref{ps} with $k=2$. In the residues spawned by the first integral we perform the interchange $z_1\leftrightarrow z_2$ and use the corresponding invariance of $J_2(z,\hat{y})$ to get
\be\label{IntSum}
\int_{(C_b+i\epsilon)^2}dz\, W_2(z)\cS_3^\sharp(x,z)J_2(z,\hat{y})+\cG^{-1}\int_{C_b+i\epsilon}dz_2\, \sum_{\nu=1}^3\cR_\nu(x,z_2)J_2((x_\nu+ia-ib/2,z_2),\hat{y}).
\ee
The second integral in \eqref{cL3Expr} yields a copy of the second integral in \eqref{IntSum} plus the residue term
\be
\cG^{-2}\sum_{\substack{\nu_1,\nu_2=1\\ \nu_1\neq\nu_2}}^3 \cR_{\nu_1,\nu_2}(x)J_2((x_{\nu_1}+ia-ib/2,x_{\nu_2}+ia-ib/2),\hat{y}),
\ee
where
\be
\cR_{\nu_1,\nu_2}(x)=\prod_{\ell=1}^2c(x_{\nu_\ell}-x_{\nu_3}),\ \ \ \{\nu_1,\nu_2,\nu_3\}=\{1,2,3\}.
\ee
Hence, using invariance under the interchange $x_{\nu_1}\leftrightarrow x_{\nu_2}$, we obtain
\begin{multline}\label{cL3Expr2}
\cL_3(x,y)=\int_{(C_b+i\epsilon)^2}dz\, W_2(z)\cS_3^\sharp(x,z)J_2(z,\hat{y})\\
+2\cG^{-1}\int_{C_b+i\epsilon}dt\, \sum_{\nu=1}^3\cR_\nu(x,t)J_2((x_\nu+ia-ib/2,t),\hat{y})\\
+2\cG^{-2}\sum_{1\leq \nu_1<\nu_2\leq 3} \cR_{\nu_1,\nu_2}(x)J_2((x_{\nu_1}+ia-ib/2,x_{\nu_2}+ia-ib/2),\hat{y}).
\end{multline}

Shifting all contours up to $C_b+ir$ without encountering further poles, we proceed to reformulate the resulting expression in terms of $\rE_2$. From \eqref{Rnu}, \eqref{cK2} and \eqref{rE2}, we infer
\be\label{ratio}
\begin{split}
\frac{\cR_\nu(x,t)J_2((x_\nu+ia-ib/2,t),\hat{y})}{C_2(2a-b;\hat{y})}&=(\phi(b)\cG)^{-1}\rE_2((x_\nu+ia-ib/2,t),\hat{y})\\
&\quad \times \cK_2^\sharp(x(\nu),t)C_2(x(\nu))\prod_{\substack{j=1\\ j\neq\nu}}^3 c(x_\nu-x_j).
\end{split}
\ee
Multiplying \eqref{cL3Expr2} by the prefactors in \eqref{rE3repLHS2}, writing
\be
C_3(x)=C_2(x(\nu))\prod_{j<\nu}c(x_j-x_\nu)\cdot \prod_{j>\nu}c(x_\nu-x_j)
\ee
and using \eqref{ratio}, \eqref{u} and \eqref{rE2hom}, we arrive at the right-hand side of \eqref{rE3rep2}.
\end{proof}

Multiplying \eqref{rE3rep2} by $M_3(y)\exp(i\alpha y_3(x_1+x_2+x_3))$, we continue by analyzing the last sum in the resulting expression, anticipating that it yields the dominant asymptotics of $\rE_3$. Using~\eqref{phuinv} and~\eqref{cas}, we readily deduce
\be\label{M3as}
|M_3(b;y)-1|\le  c(b, \rho)\exp(-\alpha\rho d_3(y)),\ \ \ (b, y,\rho)\in S_a\times \R^3\times  [a_s/2,a_s),\ \ \  d_3(y)\ge 0,
\ee
where $c(b, \rho)$ is continuous on $S_a\times [a_s/2,a_s)$.
Moreover, observing that the function $\rE_2^{\rm as}(z,w)$ \eqref{rE2as} can be rewritten
\be
\rE_2^{\rm as}(z,w)=\sum_{\tau\in S_2}\frac{C_2(z_\tau)}{C_2(z)}\exp(i\alpha z_\tau\cdot w),
\ee
we infer from Proposition \ref{Prop:rE2as} that
\begin{multline}
\exp(i\alpha y_3(x_1+x_2+x_3))\rE_2(x(\nu),(y_1-y_3,y_2-y_3))\\
=\sum_{\substack{\sigma\in S_3\\ \sigma(3)=\nu}}\frac{C_2(x_{\sigma(1)},x_{\sigma(2)})}{C_2(x(\nu))}\exp(i\alpha x_\sigma\cdot y)+R_\nu(x,y),
\end{multline}
with the remainder satisfying
\be\label{Rnub}
|R_\nu(b;x,y)|\le C(r,b)(1+|x(\nu)_1-x(\nu)_2|)\exp(-\alpha r(y_1-y_2)),
\ee
for all $(b,x,y)\in S_a\times \R^3\times \R^3$ with~$y_1-y_2\ge 0$. Due to the identity
\be
\frac{C_3(x(\nu),x_\nu)C_2(x_{\sigma(1)},x_{\sigma(2)})}{C_2(x(\nu))}=c(x_{\sigma(1)}-x_{\sigma(2)})\prod_{\substack{j=1\\ j\neq\nu}}^3 c(x_j-x_{\sigma(3)})=C_3(x_\sigma),
\ee
we thus have
\begin{multline}\label{rE3Exp}
\exp(i\alpha y_3(x_1+x_2+x_3))\sum_{\nu=1}^3\frac{C_3(x(\nu),x_\nu)}{C_3(x)}\rE_2(x(\nu),(y_1-y_3,y_2-y_3))\\
=\sum_{\sigma\in S_3}\frac{C_3(x_\sigma)}{C_3(x)}\exp(i\alpha x_\sigma\cdot y)+R(x,y)=\rE_3^{\rm as}(b;x,y)+R(x,y),
\end{multline}
with remainder
\be
R(b;x,y)=\sum_{\nu=1}^3\frac{C_3(x(\nu),x_\nu)}{C_3(x)}R_\nu(b;x,y).
\ee
Combining \eqref{CN} and the $c$-function asymptotics \eqref{cas} with the bound \eqref{Rnub}, we obtain the majorization
\be\label{Rb}
|R(b;x,y)|\leq C(r,b)\sum_{1\leq j<k\leq 3}(1+|x_j-x_k|)\cdot \exp(-\alpha r(y_1-y_2)),
\ee
valid for all $(b,x,y)\in S_a\times \R^3\times \R^3$ with~$y_1-y_2\ge 0$, and with $C$  continuous on $[a_s/2,a_s)\times S_a$.

Our considerations thus far suggest that the dominant asymptotics of $\rE_3$ is given by \eqref{E3sc}. The following counterpart of Proposition \ref{Prop:rE2as} substantiates this, together with a crucial remainder estimate.

\begin{theorem}\label{Thm:rE3as}
Letting $(r,b)\in  [a_s/2,a_s)\times S_a$, we have
\be
\left|\left(\rE_3-\rE_3^{{\rm as}}\right)(b;x,y)\right|<C(r,b)\prod_{1\leq j<k\leq 3}(1+|x_j-x_k|)\cdot\exp(-\alpha rd_3(y)), 
\ee
for all $x,y\in \R^3$ with $d_3(y)> 0$; here, $C$ is continuous on $ [a_s/2,a_s)\times S_a$.
\end{theorem}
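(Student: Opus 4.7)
I would start from Lemma~\ref{Lem:aux3}. Multiplying the identity there through by $M_3(b;y)\exp(i\alpha y_3(x_1+x_2+x_3))$ and invoking \eqref{rE3Exp}, which identifies the residue sum as $\rE_3^{\rm as}(x,y)+R(x,y)$, one obtains the decomposition
\[
\rE_3(x,y)-\rE_3^{\rm as}(x,y)=(M_3(y)-1)\rE_3^{\rm as}(x,y)+M_3(y)R(x,y)+\cI_2(x,y)+\cI_1(x,y),
\]
where $\cI_2(x,y)$ and $\cI_1(x,y)$ denote the prefactor-weighted double and single contour integrals appearing in~\eqref{rE3rep2}, respectively. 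The task reduces to bounding each of these four summands by the claimed right-hand side.

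The first two terms are immediate from tools already assembled. From the explicit formula~\eqref{E3sc} combined with the boundedness of $u(b;\cdot)$ on $\R$ for $b\in S_a$ (a consequence of~\eqref{uas}), the function $\rE_3^{\rm as}(b;x,y)$ is uniformly bounded on $S_a\times\R^3\times\R^3$; together with $|M_3-1|\le c\exp(-\alpha r d_3(y))$ from~\eqref{M3as} (taking $\rho=r$), this handles $(M_3-1)\rE_3^{\rm as}$. For $M_3(y)R(x,y)$, \eqref{M3as} renders $M_3$ bounded whenever $d_3(y)\ge 0$, and~\eqref{Rb} supplies both the linear $x$-growth $\sum_{j<k}(1+|x_j-x_k|)\le\prod_{j<k}(1+|x_j-x_k|)$ and the decay $\exp(-\alpha r(y_1-y_2))\le\exp(-\alpha r d_3(y))$.

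For the double integral $\cI_2$, I would apply the change of variables $z_k\mapsto w_k+i(a-\re b/2+r)$ so that both components run along $\R$. The $\rE_2$-factor in $\rI_3$ then has imaginary parts $v_1=v_2=a-\re b/2+r$, so Proposition~\ref{Prop:rE2b} (admissible with $\delta=a_s$) bounds it by $C\,(1+|w_1-w_2|)\exp(-\alpha(a-\re b/2+r)(y_1+y_2-2y_3))$. Each of the six $G$-ratios comprising $\cS_3^\sharp(x,z)$ is controlled by~\eqref{Gratb}, so the resulting $w$-integral is majorized by
\[
\int_{\R^2}\frac{(1+|w_1-w_2|)\,dw_1\,dw_2}{\prod_{j=1,k=1}^{3,2}\cosh(\gamma(w_k-x_j))},
\]
which by iterated residue evaluation of the type~\eqref{hypint} reduces to a rational combination of $\sinh(\gamma(x_j-x_k))^{-1}$ factors multiplied by a polynomial in the differences $x_j-x_k$. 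Combined with the $|C_3(x)C_2(-z)|^{-1}$ prefactor and the $c$-function asymptotics~\eqref{cas}--\eqref{crepest}, the exponential growth cancels and leaves a polynomial bound of the desired form $\prod_{j<k}(1+|x_j-x_k|)$. After dividing by $|\rho_3(b;y)|$, the surviving $y$-dependence is $\exp(-\alpha r(y_1+y_2-2y_3))\le\exp(-\alpha r d_3(y))$, since $(y_1-y_3)+(y_2-y_3)\ge 2d_3(y)$.

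For the sum of single integrals $\cI_1$, the same shift places the $\rE_2$-factor in $\hat\rI_{3,\nu}$ at argument $(x_\nu+ia-ib/2,t)$ with $t\in\R+i(a-\re b/2+r)$; its imaginary-part difference equals $-r\in[-a_s,-a_s/2]$, so Proposition~\ref{Prop:rE2b} with $\delta=a_s-r\in(0,a_s/2]$ furnishes decay $\exp(-\alpha r(y_2-y_3))\le\exp(-\alpha r d_3(y))$ after cancellation against $|\rho_3|^{-1}$. The $u$-prefactors $\prod_{j<\nu}(-u(x_\nu-x_j))$ are bounded on $\R$ by~\eqref{uas}, and the remaining $t$-integral is majorized as in the proof of Proposition~\ref{Prop:rE2as} using~\eqref{Gratb} applied to the $\cK_2^\sharp$-factor, contributing at most linear growth in the $|x_j-x_k|$. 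The principal technical obstacles are (i) verifying that the shifted contours $C_b+ir$ with $r\in[a_s/2,a_s)$ avoid all poles of the integrands, which requires juxtaposing the $\rE_2$-pole locations~\eqref{rE2ps1}--\eqref{rE2ps2} with the $G$-ratio pole structure of $\cS_3^\sharp$ and $C_2(-z)^{-1}$, and (ii) orchestrating the cancellations so that the various polynomial $x$-factors combine cleanly into the symmetric product $\prod_{1\le j<k\le 3}(1+|x_j-x_k|)$ with no residual exponential growth; the bookkeeping here is the main complication over the $N=2$ analog established in Proposition~\ref{Prop:rE2as}.
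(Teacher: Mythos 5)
Your skeleton coincides with the paper's proof: both start from Lemma~\ref{Lem:aux3}, multiply through by $M_3(b;y)\exp(i\alpha y_3(x_1+x_2+x_3))$, dispose of the terms $(M_3-1)\rE_3^{\rm as}$ and $M_3R$ via \eqref{M3as}, \eqref{rE3Exp} and \eqref{Rb}, and reduce the theorem to bounds on the double and single contour integrals obtained by shifting to real contours and combining Proposition~\ref{Prop:rE2b} with \eqref{Gratb} and \eqref{crepb}. Your treatment of the single integrals $\hat\rI_{3,\nu}$ (shift by $r$, $\de=a_s-r$, splitting of the $t$-line into three pieces) is essentially the paper's argument, which uses $r'=(r+a_s)/2$ in place of $r$.

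The one substantive deviation is in the double integral, and it is exactly where a gap opens. The paper does \emph{not} apply the uniform bound \eqref{rE2vb} to $\rE_2$ on the shifted contour; it first uses \eqref{rE2hom} to reduce to real arguments and then invokes the second bound \eqref{rE2bex}, whose factor $|z_1-z_2|$ \emph{vanishes at coincidence}. That vanishing is what produces the exact integrand $(z_1-z_2)\sinh(\ga(z_1-z_2))/\prod\cosh$ in \eqref{conest}, evaluated in closed form by I~Lemma~C.2 as $\prod_{j<k}\ga(x_j-x_k)/\sinh(\ga(x_j-x_k))$, which cancels the $\prod_{j<k}|\sinh(\ga(x_j-x_k))|$ growth of $|C_3(b;x)|^{-1}$ pair by pair. (The remark after Proposition~\ref{Prop:rE2b} states explicitly that the factor $|x_1-x_2|$ in \eqref{rE2bex} is needed to push the present proof through.) Your bound $C(1+|w_1-w_2|)$ from \eqref{rE2vb} instead leaves the extra term $\int_{\R^2}|\sinh(\ga(w_1-w_2))|\,\prod_{j,k}\cosh(\ga(w_k-x_j))^{-1}dw$, which is not covered by any evaluation in the paper; the absolute value blocks a residue computation, and crude majorizations such as $|\sinh(\ga(w_1-w_2))|\le 2\cosh(\ga(w_1-x_j))\cosh(\ga(w_2-x_j))$ destroy exactly the decay in the $x$-separations needed to offset $|C_3(b;x)|^{-1}$ (one is left with an unbounded factor $|\sinh(\ga(x_j-x_k))\sinh(\ga(x_j-x_l))/\sinh(\ga(x_k-x_l))|$). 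A saddle/peak analysis suggests the required bound $C\prod_{j<k}(1+|x_j-x_k|)$ does hold for this extra term, but your appeal to ``iterated residue evaluation of the type \eqref{hypint}'' does not establish it; you should either supply that estimate or switch, as the paper does, to \eqref{rE2hom} plus \eqref{rE2bex} (accepting the harmless $(1+y_1-y_2)$ factor, absorbed by shifting to $r'>r$).
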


\begin{proof}
It follows from Lemma~\ref{Lem:aux3}  and \eqref{M3as}, \eqref{rE3Exp} and \eqref{Rb} that it suffices to prove the bounds
\be\label{rI3bou}
\Big|\int_{(C_b+ir')^2}dz\,\rI_3(x,y,z)\Big|<C_0(r,b)|\rho_3(b;y)|\prod_{1\leq j<k\leq 3}|x_j-x_k|\cdot\exp(-\alpha rd_3(y)),
\ee
\begin{multline}\label{rI3mbou}
\Big|  \int_{C_b+ir'}dt\,\hat{\rI}_{3,\nu}(x,y,t)\Big|<C_\nu(r,b)|\rho_3(b;y)|\\ \times\big(1+|x(\nu)_1-x(\nu)_2|\big)\Big(1+\sum_{j=1}^2|x_\nu-x(\nu)_j|\Big)\exp(-\alpha rd_3(y)),\ \ \ \nu=1,2,3,
\end{multline}
for all $x,y\in\R^3$ with $d_3(y)> 0$. Here we have introduced
\be\label{rp}
r'\equiv (r+a_s)/2\in(r,a_s),
\ee
and the functions $C_0,\ldots,C_3$ are continuous  on $ [a_s/2,a_s)\times S_a$.

Taking $z_k\to z_k+i(a-b/2+r')$, we use the identity \eqref{rE2hom} to deduce
\begin{multline}\label{rI3Expr}
\int_{(C_b+ir')^2}dz\,\rI_3(x,y,z)= \rho_3(b;y)\exp(-\alpha r'(y_1+y_2-2y_3))\\
\times C_3(b;x)^{-1}\int_{\R^2}dz\, \frac{\rE_2(b;z,(y_1-y_3,y_2-y_3))}{c(b;z_2-z_1)}\prod_{j=1}^3\prod_{k=1}^2\frac{G(z_k+ir'-x_j+ia-ib)}{G(z_k+ir'-x_j+ia)}.
\end{multline}
Next, we note that \eqref{c} and \eqref{cas} imply
\be\label{crepb}
|c(b;z)^{-1}|\le C(b)|\sinh(\gamma z)|,\ \ (b,z)\in S_a\times \R,\ \ \gamma=\alpha \re b/2,
\ee
with $C$ continuous on $S_a$. Combining this with the estimates~\eqref{rE2bex} and  \eqref{Gratb}, we deduce
\begin{multline}\label{conest}
\left|\int_{(C_b+ir')^2}dz\,\rI_3(x,y,z)\right|\leq c_2(r,b)|\rho_3(b;y)|\exp(-\alpha r'(y_1+y_2-2y_3))(1+y_1-y_2)\\
\times |C_3(b;x)|^{-1}\int_{\R^2}dz\,\frac{(z_1-z_2)\sinh(\ga(z_1-z_2))}{\prod_{j=1}^3\prod_{k=1}^2\cosh(\ga(x_j-z_k))},\  \ x,y\in\R^3,\ \ d_3(y)>0,
\end{multline}
for some $c_2$ continuous on $[a_s/2,a_s)\times S_a$. An explicit evaluation of the integral on the right-hand side can be obtained from the $N=2$ case of I Lemma C.2, which  yields
\be\label{intEval}
\int_{\R^2}dz\,\frac{(z_1-z_2)\sinh(\ga(z_1-z_2))}{\prod_{j=1}^3\prod_{k=1}^2\cosh(\ga(x_j-z_k))}=4\ga^{-3}\prod_{1\leq j<k\leq 3}\frac{\ga(x_j-x_k)}{\sinh(\ga(x_j-x_k))}.
\ee
Now we use~\eqref{crepb} once more to obtain
\be
\Big|C_3(b;x)^{-1}\Big/\prod_{1\leq j<k\leq 3}\sinh(\ga(x_j-x_k))\Big|\leq c_3(b),
\ee
with $c_3$ continuous on $S_a$. Finally, since we assume $d_3(y)$~\eqref{defd} is positive, we have
\begin{multline}
(y_1-y_2)\exp(-\alpha r'(y_1+y_2-2y_3)) 
<(y_1-y_3)\exp(-\alpha r'((y_1-y_3)+(y_2-y_3)))\\ <C(r)\exp(-\alpha r((y_1-y_3)+(y_2-y_3)))<C(r)\exp(-\alpha rd_3(y)),
\end{multline} 
with $C$ continuous on $[a_s/2,a_s)$. Putting the pieces together, the desired majorization~\eqref{rI3bou}  easily follows.

We continue by proving~\eqref{rI3mbou}. Taking $t\to t+i(a-b/2+r')$ and appealing once more to \eqref{rE2hom}, we arrive at 
\begin{multline}\label{rI3mExpr}
\int_{C_b+ir'}dt\,\hat{\rI}_{3,\nu}(x,y,t)= \rho_3(b;y)C_2(b;x(\nu))^{-1}\\
\times \int_\R dt\,\rE_2(b;(x_\nu,t+ir'),(y_1-y_3,y_2-y_3))\prod_{j\neq\nu}\frac{G(t+ir'-x_j+ia-ib)}{G(t+ir'-x_j+ia)}.
\end{multline}
Using Proposition \ref{Prop:rE2b} and the bounds \eqref{cas}, \eqref{Gratb}, we now deduce
\begin{multline}\label{hI3b}
\left|\int_{C_b+ir'}dt\,\hat{\rI}_{3,\nu}(x,y,t)\right|\leq c_4(r,b)|\rho_3(b;y)|\exp(-\alpha r'(y_2-y_3))\\
\times \exp(\ga|x(\nu)_1-x(\nu)_2|)\int_\R dt\,(1+|x_\nu-t|)\exp\Big(-\ga\sum_{j=1}^2|x(\nu)_j-t|\Big),
\end{multline}
with $c_4$ continuous on $[a_s/2,a_s)\times S_a$. To bound the remaining integral, we note that the integrand is invariant under the interchange $x(\nu)_1\leftrightarrow x(\nu)_2$, so that no generality is lost by assuming $x(\nu)_2\leq x(\nu)_1$. Then we can write the integral as a sum of three integrals
\be
I_n\equiv \int_{x(\nu)_n}^{x(\nu)_{n-1}}dt\,(1+|x_\nu-t|)\exp\Big(-\ga\sum_{j=1}^2|x(\nu)_j-t|\Big),\ \ \ n=1,2,3,
\ee
where $x(\nu)_0\equiv\infty$ and $x(\nu)_3\equiv-\infty$. For $I_1$, we have
\begin{multline}\label{I1b}
\exp(\ga|x(\nu)_1-x(\nu)_2|) I_1\\
=\exp(\ga(x(\nu)_1-x(\nu)_2))\int_{x(\nu)_1}^{\infty} dt\,(1+|x_\nu-t|)\exp(-\ga(2t-x(\nu)_1-x(\nu)_2))\\
=\int_0^\infty dt\,(1+|x_\nu-x(\nu)_1-t|)\exp(-2\ga t)\\
\leq\int_0^\infty dt\,(1+t+|x_\nu-x(\nu)_1|)\exp(-2\ga t)<C(1+|x_\nu-x(\nu)_1|),
\end{multline}
where we can take $C=(1+1/2\ga)/2\ga$. Similarly, we obtain
\be\label{I3b}
\exp(\ga|x(\nu)_1-x(\nu)_2|)I_3<C(1+|x_\nu-x(\nu)_2|).
\ee
In the case of $I_2$, we have
\begin{multline}\label{I2b}
\exp(\ga|x(\nu)_1-x(\nu)_2|) I_2=\int_{x(\nu)_2}^{x(\nu)_1}dt\,(1+|x_\nu-t|)\\
<\int_{x(\nu)_2}^{x(\nu)_1}dt\,\Big(1+\sum_{j=1}^2|x_\nu-x(\nu)_j|\Big)=(x(\nu)_1-x(\nu)_2)\Big(1+\sum_{j=1}^2|x_\nu-x(\nu)_j|\Big).
\end{multline}
Combining the bounds \eqref{hI3b} and \eqref{I1b}--\eqref{I2b}, we readily infer the majorization \eqref{rI3mbou}. 
\end{proof}

We conclude this section by deriving a uniform bound on $\rE_3(x,y)$, which is the counterpart of Prop.~\ref{Prop:rE2b}.


\begin{theorem}\label{Thm:ubound}
Letting $(\delta,b)\in (0,a_s]\times S_a$, we have
\be
|\rE_3(b;x,y)|<C(\de,b)\prod_{1\leq j<k\leq 3} \big(1+|\re(x_j-x_k)|\big)\cdot\exp\Big(-\alpha\sum_{j=1}^3y_jv_j\Big),
\ee
for all $(x,y)\in\C^3\times\R^3$ satisfying
\be\label{N3xyas}
v_j-v_k\in[-a_s+\de,0] ,\ \ \ 1\leq j<k\leq 3,\ \ d_3(y)>0,\ \  \ v=\im x,
\ee
where $C$ is a continuous function on $(0,a_s]\times S_a$.
\end{theorem}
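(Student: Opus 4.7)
The plan is to adapt the scheme of the proof of Proposition~\ref{Prop:rE2b} to the $N=3$ setting, using the representation~\eqref{rE3rep2} of Lemma~\ref{Lem:aux3} as the starting point. First I would extend~\eqref{rE3rep2} from real $x$ to the complex $x$ permitted by~\eqref{N3xyas} by contour deformation, choosing the shift parameter $r\in(0,a_s)$ in Lemma~\ref{Lem:aux3} (as a function of $\delta$, as in the $N=2$ proof) so that no poles of $\rE_2$ or of the $G$-ratios in $\cK_3^\sharp$ and $\cK_2^\sharp$ are crossed. Multiplying~\eqref{rE3rep2} by $M_3(b;y)\exp(i\alpha y_3(x_1+x_2+x_3))$, the task splits into bounding three types of terms, corresponding to the last sum in~\eqref{rE3rep2}, the double integral of $\rI_3$, and the three single integrals of $\hat{\rI}_{3,\nu}$.

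The last sum can be handled immediately by invoking Proposition~\ref{Prop:rE2b}: for each $\nu\in\{1,2,3\}$, the reduced imaginary parts of $x(\nu)$ inherit the restrictions from~\eqref{N3xyas}, so~\eqref{rE2vb} applies and yields a polynomial factor $1+|\re(x(\nu)_1-x(\nu)_2)|$ together with the correct exponential decay in $\hat{y}$ and $v(\nu)$. The ratio $C_3(b;x(\nu),x_\nu)/C_3(b;x)$ is controlled by the $c$-function asymptotics~\eqref{cas}, and the combination with the $\exp(i\alpha y_3(x_1+x_2+x_3))$ factor redistributes the exponents into the required form $\exp(-\alpha\sum_j y_jv_j)$.

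For the double integral $\int_{(C_b+ir)^2}\rI_3\,dz$, I would shift both $z_k$-contours to a common height so that $\im z_1=\im z_2$, apply the real-argument bound~\eqref{rE2bex} to the $\rE_2$-factor, and combine~\eqref{Gratb} for the six $G$-ratios with~\eqref{crepb} for the $c$-function appearing in $\cK_3^\sharp$. After changing variables to absorb the real parts of $x$, this reduces to a hyperbolic integral of the same shape as~\eqref{intEval}, whose closed-form evaluation produces the full polynomial prefactor $\prod_{j<k}(1+|\re(x_j-x_k)|)$ upon using a bound on $1/C_3(b;x)$ of the kind employed in~\eqref{conest}.

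The single integrals $\int_{C_b+ir}\hat{\rI}_{3,\nu}\,dt$ are the main obstacle. Their $\rE_2$-factor is $\rE_2(b;(x_\nu+ia-ib/2,t),\hat{y})$, whose first argument has a fixed complex imaginary part $v_\nu+a-\re b/2$ dictated by the residue in Lemma~\ref{Lem:aux3}. To apply~\eqref{rE2vb} I would select the $t$-contour shift so that $\im t-v_\nu-(a-\re b/2)$ lies in $[-a_s+\delta',0]$ for a suitable $\delta'>0$, simultaneously staying clear of the $G$-poles in the remaining kernel factor $\cK_2^\sharp(b;x(\nu),t)$; the consistency of these demands with the admissible range of $r$ is the delicate point. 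Once this is arranged, the $t$-integral decomposes into the three sub-intervals~\eqref{I1b}--\eqref{I2b} from the proof of Theorem~\ref{Thm:rE3as}, and the hardest remaining bookkeeping is to verify that the resulting exponential prefactors combine with~\eqref{M3as} to yield precisely the global decay $\exp(-\alpha\sum_j y_jv_j)$ claimed in the theorem, with constants continuous on $(0,a_s]\times S_a$.
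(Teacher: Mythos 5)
Your proposal follows essentially the same route as the paper's proof: it starts from \eqref{rE3rep2}, bounds the final sum via Proposition~\ref{Prop:rE2b} together with the $u$- and $c$-function asymptotics, bounds the double integral via \eqref{rE2bex}, \eqref{crepb}, \eqref{Gratb} and the explicit evaluation \eqref{intEval}, and bounds the single integrals via Proposition~\ref{Prop:rE2b} after a suitable contour choice, splitting the $t$-integral into the three sub-intervals as in Theorem~\ref{Thm:rE3as}. On the one point you flag as delicate, the relevant quantity is the imaginary part of the \emph{first} $\rE_2$-argument minus that of the second, namely $v_\nu-r$, not $r-v_\nu$ as you wrote (the latter would force $r\le v_\nu$, which fails at $v_\nu=0$); after normalizing to $0\le v_1\le v_2\le v_3\le a_s-\de$ via \eqref{rE3hom} with $\eta=-v_1$, the single choice $r=a_s-\de/2$ (with $\de'=\de/2$) meets all the requirements at once --- for the $G$-ratios, for $\cK_2^\sharp$, and for the arguments of $\rE_2$ --- which is exactly how the paper resolves it.
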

\begin{proof}
We exploit once more  the representation for $\rE_3$ given by \eqref{rE3rep2}. Focusing first on the last sum, we begin by noting that the regularity of $u(b;x_k-x_j)$ for $-a_s<v_j-v_k<m(\re b)$ and the $u$-asymptotics \eqref{uas} entail 
 \be\label{ubo}
 |u(b;-z)|\le c(b,\im z),\ \ \ \ (b,\im z)\in S_a\times (-a_s,0],
 \ee
where $c(b,\im z)$ is continuous on $S_a\times (-a_s,0]$. Next, Prop.~\ref{Prop:rE2b} implies an estimate
\begin{eqnarray}\label{E2bo}
|\rE_2(x(\nu),(y_1-y_3,y_2-y_3))| & < &  C(\de,b)(1+ |\re (x(\nu)_1- x(\nu)_2)|)
\nonumber \\
 & \times & \exp\Big(-\alpha\sum_{k=1}^2(y_k-y_3)\im x(\nu)_k\Big).
\end{eqnarray}
Moreover, from \eqref{M3} and \eqref{crepb}, we obtain
\be
|M_3(b;y)\exp(iy_3(x_1+x_2+x_3))|\leq c(b)\exp\Big(-\alpha\sum_{j=1}^3y_jv_j\Big)\exp\Big(\alpha\sum_{k=1}^2(y_k-y_3)v_k\Big),
\ee
for all $(x,y)\in\C^3\times\R^3$, where $c$ is continuous on $S_a$. 

When we now take the product of the functions at hand and use
\begin{multline}
\exp\Big(\alpha\sum_{k=1}^2(y_k-y_3)v_k\Big) \exp\Big(-\alpha\sum_{k=1}^2(y_k-y_3)\im x(\nu)_k\Big)\\
=\exp\Big(\alpha\sum_{k=\nu }^2(y_k-y_3)(v_k-v_{k+1})\Big)\le  1, \ \ \ d_3(y)>0, \ v_k-v_{k+1}\le 0  ,\ k=1,2,
\end{multline}
then the desired bound for this contribution to~$\rE_3(b;x,y)$ easily follows.
As a consequence, it suffices to show that the 
 integrals appearing on the right-hand side of \eqref{rE3rep2} are bounded by  
\be\label{major}
C(\de,b)|\rho_3(b;y)|\exp\Big(-\alpha\sum_{k=1}^2(y_k-y_3)v_k\Big)\prod_{1\leq j<k\leq 3}(1+|\re(x_j-x_k)|),
\ee
for all $(x,y)\in\C^3\times\R^2$ satisfying \eqref{N3xyas}.  

Specializing the first equality in \eqref{rE3hom} to $\eta=-v_1$, it becomes clear that we may restrict attention to
\be
0\leq v_1\leq v_2\leq v_3\leq a_s-\de.
\ee
Requiring at first $x\in\R^3$, we begin by considering the integral of $\rI_3$ along the $z_k$-contours $C_b+ir$. Taking $z_k\to z_k+i(a- b/2+r)$ and making use of the identity \eqref{rE2hom}, we obtain again \eqref{rI3Expr}, but now with $r^\prime\to r$. Allowing next $v_j\neq 0$, we require
\be
\de^\prime\leq r-v_j\leq a_s-\de^\prime,\ \ \ \de^\prime\in(0,a_s/2],\ \ j=1,2,3,
\ee
in order to stay clear of the poles of the $G$-ratios for $z_k+ir-v_j=0,a_s$. By setting
\be\label{rde}
r=a_s-\de/2,\ \ \ \de^\prime=\de/2,
\ee
we can admit any $x\in\C^3$ satisfying the conditions in \eqref{N3xyas}. Using the bounds \eqref{rE2bex}, \eqref{crepb} and \eqref{Gratb}, we now infer 
\begin{multline}
\left|\int_{(C_b+ir)^2}dz\,\rI_3(x,y,z)\right|\leq c_2(\de,b)|\rho_3(y)|\exp(-\alpha r(y_1+y_2-2y_3))(1+y_1-y_2)\\
\times |C_3(b;x)|^{-1}\int_{\R^2}dz\,\frac{(z_1-z_2)\sinh(\ga(z_1-z_2))}{\prod_{j=1}^3\prod_{k=1}^2\cosh(\ga(\re x_j-z_k))},
\end{multline}
with $c_2$ continuous on $(0,a_s]\times S_a$.

Recalling the $c$-asymptotics~\eqref{cas} and the integral evaluation~\eqref{intEval}, we see that for the majorization~\eqref{major} to hold, it suffices to show that
\be
B\equiv \exp(-\alpha (r+v_1-v_2)(y_2-y_3)-\alpha r (y_1-y_3))(1+y_1-y_2)
\ee
is bounded. Now since $d_3(y)>0$ by assumption, we have
\be
B<\exp (-\alpha \de (y_2-y_3)/2-\alpha a_s(y_1-y_3)/2)(1+y_1-y_3)<C,
\ee
so this is indeed the case.

It remains to bound the integral of $\hat{\rI}_{3,\nu}$ along the $t$-contour $C_b+ir$. Taking $t\to t+i(a- b/2+r)$ and using once more the identity \eqref{rE2hom}, we obtain \eqref{rI3mExpr} with $r^\prime\to r$. It follows from~\eqref{rE2vb}  and the bounds \eqref{cas}, \eqref{Gratb} that we have
\begin{multline}
\left|\int_{C_b+ir}dt\,\hat{\rI}_{3,\nu}(x,y,t)\right|\leq c_3(\de,b)|\rho_3(y)|\exp\big(-\alpha[v_\nu(y_1-y_3)+r(y_2-y_3)]\big)\\
\times \exp(\ga|\re(x(\nu)_1-x(\nu)_2)|)\int_\R dt\big(1+|\re x_\nu-t|\big)\exp\big(-\ga|t-\re x(\nu)_1|-\ga|t-\re x(\nu)_2|\big),
\end{multline}
where $c_3$ is continuous on $(0,a_s]\times S_a$.

Finally, it follows from $v_\nu\geq v_1$ and $r>v_2$ that
$$
\exp\big(-\alpha[v_\nu(y_1-y_3)+r(y_2-y_3)]\big)<\exp\Big(-\alpha\sum_{k=1}^2(y_k-y_3)v_k\Big),
$$
and to bound the remaining integral, we can proceed as in the proof of Theorem \ref{Thm:rE3as}. Indeed, since the integrand is $x(\nu)_1\leftrightarrow x(\nu)_2$ invariant, we may assume $\re x(\nu)_2\leq \re x(\nu)_1$. Then writing $\R=(-\infty,\re x(\nu)_2)\cup[\re x(\nu)_2,\re x(\nu)_1)\cup[\re x(\nu)_1,\infty)$ and estimating the corresponding three integrals separately, we obtain the desired bound.
\end{proof}

\begin{appendix}

\section{The hyperbolic gamma function revisited}\label{AppA}
In the main text we need a few properties of the hyperbolic gamma function that were not mentioned in I Appendix A. They are collected in this appendix.

First, from Appendix A in \cite{R99} we recall that the hyperbolic gamma function can be written as a ratio of entire functions,
\be\label{GE}
G(a_+,a_-;z)=E(a_+,a_-;z)/E(a_+,a_-;-z),
\ee
with the zeros of $E(a_+,a_-;z)$ located at
\be\label{Ezs}
z=ia+ip_{kl},\ \ \ k,l\in\N,
\ee
where
\be\label{pkl}
p_{kl}\equiv ka_++la_-.
\ee
The order of these zeros equals the number of distinct pairs $(m,n)\in\N^2$ such that $p_{mn}=p_{kl}$. In particular, for $a_+/a_-\notin\Q$ all zeros are simple.

The function $E(z)\equiv E(a_+,a_-;z)$ from~\cite{R99} we employ in this paper  is a cousin of Barnes' double gamma function. It has no zeros for $z$ in the half plane
\be
\Lambda\equiv \{z\in\C\mid \im z<a\},
\ee
so it can be written as
\be
E(z)=\exp(e(z)),\ \ \ z\in \Lambda,
\ee
with $e(z)$ holomorphic in $\Lambda$. Explicitly, $e(z)$ has the integral representation
\be
e(a_+,a_-;z)=\frac{1}{4}\int_0^\infty \frac{dy}{y}\left(\frac{1-e^{-2iyz}}{\sh a_+y~\sh a_-y}-\frac{2iz}{a_+a_-y}-\frac{z^2}{a_+a_-}(e^{-2a_+y}+e^{-2a_-y})\right).
\ee

A distinguishing feature of this $E$-function  is that it satisfies the two A$\De$Es
\be\label{EADE}
\frac{E(z+ia_{-\de}/2)}{E(z-ia_{-\de}/2)}=\frac{\sqrt{2\pi}}{\Gamma(iz/a_{\de}+1/2)}\exp(izK_\de),\ \ \ \de=+,-,
\ee
where
\be
K_\de\equiv \frac{1}{2a_{\de}}\ln\left(\frac{a_{-\de}}{a_{\de}}\right).
\ee
We need one of these A$\De$Es in Subsections \ref{Sec22} and \ref{Sec32}.
 
Finally, we have occasion to make use of the Fourier transform formula from Proposition C.1 in \cite{R11}. Specifically, let $\mu,\nu\in\C$ be such that
\be
-a<\im\mu<\im\nu<a,
\ee
and assume that $y\in\C$ satisfies
\be
|\im y|<\im(\nu-\mu)/2.
\ee
Then the pertinent formula can be written
\begin{multline}\label{Fform}
\left(\frac{\alpha}{2\pi}\right)^{1/2}\int_\R dz\exp(i\alpha pz)\frac{G(z-\nu)}{G(z-\mu)}\\
=\exp(i\alpha p(\mu+\nu)/2)G(ia+\mu-\nu)\prod_{\de=+,-}G(\de p-ia+(\nu-\mu)/2).
\end{multline}

\end{appendix}

\bibliographystyle{amsalpha}

\end{document}